\newtheorem{Theorem}{Theorem}[section]
\newtheorem{Definition}{Definition}[section]
\newtheorem{Remark}{Remark}
\newtheorem{Proposition}{Proposition}[section]
\newtheorem{Lemma}{Lemma}[section]
\newtheorem{Corollary}{Corollary}[section]
\numberwithin{equation}{section}
\newcommand{\e}{{\rm e}}
\newcommand{\E}{\mathbb{E}}
\newcommand{\F}{\mathcal{F}}
\newcommand{\PR}{\mathbb{P}}
\newcommand{\Q}{\mathbb{Q}}
\newcommand{\R}{\mathbb{R}}
\newcommand{\rd}{\textup{d}}
\newcommand{\indi}[1]{1\hspace{-.09cm}\textup{\textrm{l}}}
\newcommand{\nn}{\nonumber}
\begin{document}
\title{\vspace{-2cm}\bf Consistent Valuation Across Curves \\ Using Pricing Kernels}
\author{Andrea Macrina{$^{\dag\, \ddag}$\footnote{Corresponding author: a.macrina@ucl.ac.uk}\ }, Obeid Mahomed{$^{\S}$} \\ \\ {$^{\dag}$Department of Mathematics, University College London} \\ {London WC1E 6BT, United Kingdom} \\ \\ {$^{\ddag}$Department of Actuarial Science, University of Cape Town} \\ {Rondebosch 7701, South Africa} \\ \\ {$^{\S}$African Institute of Financial Markets and Risk Management} \\ {University of Cape Town, Rondebosch 7701, South Africa}}
\date{\today}

\maketitle
\vspace{-0.75cm}
\begin{abstract}
\noindent The general problem of asset pricing when the discount rate differs from the rate at which an asset's cash flows accrue is considered. A pricing kernel framework is used to model an economy that is segmented into distinct markets, each identified by a yield curve having its own market, credit and liquidity risk characteristics. The proposed framework precludes arbitrage within each market, while the definition of a curve-conversion factor process links all markets in a consistent arbitrage-free manner. A pricing formula is then derived, referred to as the across-curve pricing formula, which enables consistent valuation and hedging of financial instruments across curves (and markets). As a natural application, a consistent multi-curve framework is formulated for emerging and developed inter-bank swap markets, which highlights an important dual feature of the curve-conversion factor process. Given this multi-curve framework, existing multi-curve approaches based on HJM and rational pricing kernel models are recovered, reviewed and generalised, and single-curve models extended. In another application, inflation-linked, currency-based, and fixed-income hybrid securities are shown to be consistently valued using the across-curve valuation method.
\\\vspace{-0.2cm}\\
{\bf Keywords:} Pricing kernel approach; rational pricing models; multi-curve term structures; OIS and LIBOR; spread models; HJM; multi-curve potential model; linear-rational term structure models; inflation-linked and foreign-exchanged securities; valuation in emerging markets.
\\\vspace{-0.2cm}\\
\end{abstract}

\section{Introduction}

The fundamental problem that we consider is the valuation of a financial instrument using a discounting rate which differs from the rate at which the instrument's future cash flows accrue. Since such financial instruments are synonymous with fixed income assets, we will focus thereon. Nonetheless, we have no reason to believe that the framework that we develop cannot be extended to the valuation of a generic financial asset. The financial crisis brought this valuation problem to the foreground when substantial spreads emerged between inter-bank interest rates that were previously bound by single yield curve consistencies, culminating in a new valuation paradigm of multiple yield curves---one used for discounting (the overnight indexed swap (OIS) yield curve) and others used for forecasting of cash flows (the $y$-month inter-bank offered rate (IBOR) curve, $y = 1,3,6,12$). However, this problem was also prevalent pre-crisis when an economy is considered, which has an inter-bank swap market, a government bond market and trades in the global economy via the foreign exchange market, resulting in three different curves: the nominal swap curve, the government bond curve and the foreign exchange basis curve. The valuation of any financial instrument that is issued in one of these markets, but has cash flows that are determined by any of the other markets, once again manifests the fundamental problem.
\

In this paper, we directly address the fundamental problem, articulated above, in a general sense. Considering the aftermath of the financial crisis however, academic literature on multi-curve interest rate modelling (in the context of the \textit{developed} inter-bank swap market) has evolved rapidly. Here we classify this literature into four categories or modelling approaches and provide a non-exhaustive list of references and a brief summary of the main contributions therein.
\

The first category is \textit{short-rate models}. Kijima et al. \cite{ktw} propose a three-yield curve model (discount, swap and government bond curve) for an economy with the respective short-rates governed by Gaussian, exponentially quadratic models. Kenyon \cite{k} and Morini \& Runggaldier \cite{mr} consider Vasicek, Hull-White (HW) and Cox-Ingersoll-Ross (CIR) short-rate models for the OIS, IBOR and/or OIS-IBOR spread curves. Filipovi\'c \& Trolle \cite{ft} propose a Vasicek process with stochastic long-term mean as the OIS short-rate model with explicit models for default and liquidity risk. Alfeus et al. \cite{ags} adopt a novel approach of modelling ``roll-over risk'' explicitly in a reduced-form setting and consider multi-factor CIR-type processes for this and the OIS short-rate.
\

\textit{Heath-Jarrow-Morton (HJM) models} constitute the second category. Pallavicini \& Tarenghi \cite{pt}, Fujii et al. \cite{fst}, Moreni \& Pallavicini \cite{mp}, Cr\'epey et al. \cite{cgns} and Miglietta \cite{m} all focus on a \textit{hybrid HJM-LMM (LIBOR Market Model) approach} where the OIS curve is modelled using the classical HJM model, while the IBOR forward rates are modelled in an ad hoc manner. Cr\'epey et al. \cite{cgn} pioneered the use of the HJM framework via a credit risk analogy, while Miglietta \cite{m} and Grbac \& Runggaldier \cite{gr} do the same using a foreign exchange (FX) analogy. Pallavicini and Tarenghi \cite{pt} focus on aspects of calibration, while Moreni \& Pallavicini \cite{mp} propose a specific Markovian factor representation which expedites calibration. Cr\'epey et al. \cite{cgns} consider L\'evy driven models, while Cuchiero et al. \cite{cfg} consider a general semimartingale setup with multiplicative OIS-IBOR spreads.
\

Category three is the class of \textit{LIBOR Market Models (LMMs)}. Morini \cite{mm}, Mercurio \cite{fm}, \cite{fm1}, \cite{fm2}, \cite{merc} and Bianchetti \cite{bianchetti} were the first to extend the LMM to a multi-curve setting, with the latter doing so via an FX analogy. Mercurio \cite{fm1} and Mercurio and Xie \cite{mx} formalised the first approach utilising an additive spread between OIS and IBOR forward rates that were modelled as martingales under the classical forward measure, while Ametrano and Bianchetti \cite{ab} formalised the associated multi-curve bootstrapping process. Grbac et al. \cite{gpss} provide an alternative to the aforementioned approach using a class of affine LIBOR models, first proposed by Keller-Ressel et al. \cite{kpt}.
\

The fourth and final catergory are \textit{pricing kernel models}. At the present time, we are only aware of Cr\'epey et al. \cite{cmns} and Nguyen \& Seifried \cite{ns} who have formulated multi-curve systems with pricing kernels. We highlight here that, in our opinion, both these papers adopt a hybrid pricing kernel-LMM approach since the OIS curve is modelled with a pricing kernel while the IBOR process is modelled in an ad hoc fashion---we will expand on this in Sections \ref{PK-Multi} and \ref{R-Sec}. In this paper we develop a pure pricing-kernel based approach, which we believe to be the first of such a modelling class.
\

For a detailed review of the post-crisis multi-curve interest rate paradigm from both, a theoretical and practical perspective, we refer the reader to Bianchetti \& Morini \cite{mbmm}, Grbac \& Runggaldier \cite{gr} and Henrard \cite{Henrard}.
\

The solution that we propose rests upon a pricing formula, which we call the {\it across-curve pricing formula}. This formula has a pricing kernel-based model for the economy as its foundation. More specifically, the pricing kernel framework models the set of yield curves associated with the respective economy under consideration. This enables us to link the set of yield curves in a consistent arbitrage-free manner through the definition of a {\it curve-conversion factor process}. This conversion process plays an important dual role, giving rise to the \textit{across-curve pricing formula} that enables consistent valuation and hedging of financial instruments across curves. It turns out that the \textit{curve-conversion factor process} is consistent with an FX process in multi-currency modelling in a pricing kernel framework, and therefore our approach is also consistent with the FX analogy first proposed by Bianchetti \cite{bianchetti} for interest rate derivatives (or the \textit{developed} inter-bank swap market) in an LMM setting. In our work we are interested in more than \textit{developed markets} or the inter-bank swap market, and endeavour to build consistent relations among a wide variety of \textit{developed} and \textit{emerging market} fixed-income assets including inflation-linked notes, FX contracts, and hybrid products such as inflation-linked FX instruments. Here, we mention Flesaker \& Hughston \cite{FH1b} and \cite{FH2} for an argitrage-free pricing kernel approach to the valuation of FX securities, and to Frey \& Sommer \cite{fs} if one were to consider extending classical short rate models, based on diffusion processes with deterministic coefficients, for FX-rates. The approach by Jarrow \& Yildirim \cite{JY}, based on the HJM-framework, might be treated as in Section \ref{xy-HJM-Sec} and used for inflation-linked pricing as shown in Section \ref{IL-FX}, later in this paper. We note here the early work in 1998 by Hughston \cite{Hughston98} who produced a general arbitrage-free approach to the pricing of inflation derivatives, in which---to our knowledge---a foreign exchange analogy was used in such a context, for the first time. In Hughston's work, the CPI is treated like a foreign exchange rate that links the nominal and the real price systems as if they were domestic and foreign currencies, respectively.  The work by Pilz \& Schl\"ogl \cite{PS} on modelling commodity prices re-interprets a multi-currency LMM approach. Similarities can be seen when applying our approach to multi-currency and multi-curve LIBOR models, as developed in Section \ref{MCIRFX}, where an FX-LIBOR forward rate agreement is priced. In all that follows, we refer to the discounting curve as the $x$-curve and the forecasting curve as the $y$-curve. Therefore, when describing our framework, we speak of the \textit{xy-formalism}, while we refer to the application thereof as the \textit{xy-approach}.
\

With regard to the multi-curve system adopted by \textit{developed market} practitioners for their inter-bank swap market, we will show that there is a natural formulation of such a system within our framework. Moreover, we will show that this natural formalism is not adopted by practitioners, or the market in a strict sense in general. Rather, practitioners have adopted a more rigid version of the flexible multi-curve system we propose, the choice of which results and ensures simpler specifications for fundamental interest rate products, i.e. forward rate agreements (FRAs) and interest rate swaps (IRSs). We also formulate a multi-curve system for \textit{emerging markets}, one that is remarkably consistent with the corresponding \textit{developed market} system---this feature being entirely attributable to the critical dual role played by the \textit{curve-conversion factor process}.  We will expand on this in Sections \ref{CCF} and \ref{PK-Multi}.

The remainder of this paper is structured as follows: Section \ref{CCF} introduces the \textit{curve-conversion factor process} and the \textit{across-curve pricing formula}. Section \ref{PK-Multi} introduces consistent multi-curve interest rate systems for \textit{developed} and \textit{emerging} inter-bank swap markets. Section \ref{xy-HJM-Sec} reviews and reformulates existing HJM multi-curve modelling approaches within the context of the xy-approach, and introduces a new framework that we call the xy-HJM framework. Section \ref{R-Sec} introduces a generic class of rational multi-curve models and revisits recent rational multi-curve approaches based on pricing kernels in light of the xy-framework. Moreover, the linear-rational term structure models are shown to belong to a more general class of pricing-kernel-based (rational) models and are extended to the multi-curve setup. In Section \ref{IL-FX}, the across-curve pricing approach is adopted to price inflation-linked and FX securities, including hybrid contracts. In Section \ref{Concl} we draw various conclusions and take the opportunity to summarise our findings.

\section{Across-curve pricing formula}\label{CCF}

In this section we define the \textit{curve-conversion factor process} and deduce what we term the \textit{across-curve pricing formula}. At the basis of the curve-conversion factor process lies the assumption that, within a given economy, there is a distinct market associated with each curve. Each of these markets are characterised by its own set of market, liquidity and credit risk factors. In turn, each set of market, liquidity and credit risk factors may be systematic or idiosyncratic in nature. The curve-conversion factor process plays a dual role: (i) it provides a mechanism---akin to a ladder---that enables one to transit consistently from one discount curve system to another; and (ii) it facilitates the equivalent representation of cash flows across markets (or curves), no matter what financial instrument is implicitly being priced or interest rate system being modelled. This feature enables consistent valuation across different curves (or markets). The paradigm we shall adopt for the development of the across-curve pricing approach is one based on pricing kernels. Previous works developing and applying the pricing kernel paradigm comprise, e.g., Constantinides \cite{const}, Flesaker \& Hughston \cite{FH1} and \cite{FH2}, Rogers \cite{LCGR}, Jin \& Glasserman \cite{JG}, Hughston \& Rafailidis \cite{HR}, Akahori et al. \cite{ahtt}, Macrina \cite{am}, and Filipovi\'c et al. \cite{flt}. Next, we introduce the stochastic basis and the pricing kernel system.
\

We consider a filtered probability space $(\Omega, \F, (\F_t)_{t\ge 0},\PR)$, where $(\F_t)_{t\ge 0}$ denotes the filtration and $\PR$ the real-world probability measure. We introduce an $(\F_t)$-adapted pricing kernel process $(h_t)_{t\ge 0}$, which governs the inter-temporal relation between asset values at different times in a financial market. It is a fundamental ingredient   in the so-called standard no-arbitrage pricing formula, for a non-dividend-paying financial asset $H$, given by
\begin{equation}\label{stdpricingformula}
H_{t}=\frac{1}{h_t}\E\left[h_T\, H_{T}\,\vert\,\F_t\right].
\end{equation}
The no-arbitrage asset price process $(H_t)_{t\ge 0}$ is obtained by taking the conditional expectation of the random cash flow $H_T$, occurring at the fixed future date $T\ge t\ge 0$, that is discounted by the pricing kernel. Standard references, in which asset pricing using pricing kernels is discussed, include, e.g., Hunt \& Kennedy \cite{HKen}, Duffie \cite{duff}, Cochrane \cite{coch}, and  Grbac \& Runggaldier \cite{gr}.
\

In order for us to deduce the across-curve pricing formula---seen as an extension to the pricing formula (\ref{stdpricingformula})---we assume the existence of a set of (continuous-time) $(\F_t)$-adapted pricing kernel processes $(h^{y}_t)_{t\ge 0}$, where $y=0, 1, 2,\ldots,n$, each linked to a distinct $y$-market. The price $H^y_{t}$ at $t\in[0,T]$ of a non-dividend-paying financial asset $H$, with (random) cash flow $H^y_{T}$ at the fixed future date $T$, is then given by
\begin{equation}\label{y-stdpricingformula}
H^y_{t}=\frac{1}{h^{y}_t}\E\left[h^{y}_T\, H^y_{T}\,\vert\,\F_t\right].
\end{equation}
The superscript $y$ emphasises that the pricing formula (\ref{y-stdpricingformula}) holds for the valuation of assets in the $y$-economy (or in the $y$-market). In fact, the pricing kernel process $(h^{y}_t)$ governs the inter-temporal relation between the present value of financial assets and their future cash flows in the associated $y$-economy. It then follows in a straightforward manner, that the price process $(P^{y}_{tT})_{0\le t\le T}$ of a zero coupon bond (ZCB), with payoff $H^y_{T}=P^y_{TT}=1$ at the fixed maturity $T$ and quoted in the $y$-market, is given by
\[
P^{y}_{tT}=\frac{1}{h^{y}_t}\E\left[h^{y}_T\,\vert\,\F_t\right].
\]
The discount bond system---spanned in theory by a continuum, but in practice a finite number of maturities $T=T_1, T_2,\ldots, T_n$---generates a term structure curve. Since this curve is indexed by the particular market $y$, we refer to it as the {\it $y$-curve}. In all that follows, we single out one of the set of the $y$-markets (and thereby its associated $y$-curve) and refer to it as the $x$-market (and its associated term structure curve as the $x$-curve); of course then this market also has an associated $(\F_t)$-adapted pricing kernel $(h^{x}_t)$. The $x$-market is the market within which pricing (or discounting) occurs, while the $y$-market will denote the market within which the cash flows of the financial instruments are forecasted (or accrued).
\

The fundamental pricing problem that is considered in this paper is one where a financial instrument accrues cash flows at a rate of interest that differs from that used for discounting. First we consider the problem of cash flow forecasting and equivalent representation under different curves (or markets), before we tackle the problem of valuation (or discounting). An equivalent cash flow representation across curves (or markets) is justified in Appendix \ref{App-ArbStrat-CC} using no-arbitrage portfolio-based strategies. These findings are formalised in the following definition that introduces the {\it curve-conversion factor process}.

\begin{Definition}\label{Qxy-Prop}
Consider an economy with $n$ distinct markets characterised by a set of pricing kernel processes $(h^{y}_t)$ and associated discount bond systems $(P^y_{tT})$, where $y=0, 1, 2,\ldots,n$ and $0 \leq t \leq T$. The converted value $C^x_t$ in the $x$-market at time $t$ of any spot cash flow $C_t^y$ determined in the $y$-market is given by
\begin{equation*}
C_t^x = \frac{h^y_t}{h^x_t} C_t^y \,,
\end{equation*}
where $x,y = 0,1,\ldots,n$. The converted value $C_{t}^x(t,T)$ at time $t$ in the $x$-market of any forward cash flow $C_{t}^y$(t,T), measurable at time $t$ but payable at time $T$, determined in the $y$-market is given by
\begin{equation*}
C_{t}^x(t,T) = \frac{h^y_t P^y_{tT}}{h^x_t P^x_{tT}} C_{t}^y(t,T) \,,
\end{equation*}
where $x,y = 0,1,\ldots,n$. These two relations are combined by the definition of the $(\F_t)$-adapted curve-conversion factor process
\begin{equation}\label{xy-conv-factor}
Q^{xy}_{tT}=\frac{\E\left[h^{y}_T\,\vert\,\F_t\right]}{\E\left[h^{x}_T\,\vert\,\F_t\right]}=\frac{h^y_t P^y_{tT}}{h^x_t P^x_{tT}},
\end{equation}
where $t\in[0,T]$ is the time until which the cash flow being converted is measurable and $T>0$ is the payment date.
\end{Definition}
We note that the cash flows $C_{t}^x(t,T)$ and $C_{t}^y(t,T)$ are linked by the identity $C_{t}^x(t,T)=Q^{xy}_{tT}C_{t}^y(t,T)$, for $t\in[0,T]$. With this definition at hand, we now have the necessary tool to resolve the fundamental pricing problem considered in this paper, i.e. valuing a generic financial instrument that accrues cash flows under one curve, the $y$-curve, but is priced under another curve, the $x$-curve. Our approach is consistent with the FX analogy proposed by Bianchetti \cite{bianchetti}, but formalised in an economy modelled by a set of pricing kernels---we describe our approach as the xy-formalism. At the heart of this formalism is the pricing formula presented next. We refer to this formula as the \textit{across-curve pricing formula}. The relation of this novel formula to the fundamental pricing formula (\ref{stdpricingformula}) is shown in the proof of the following proposition.
\begin{Proposition}\label{XY-gen-Prop}
Let $0\le s\le t\le T$. Consider a generic financial asset $H$ that has a single $\F_t$-measurable cash flow $H^y_t(t,T)$ occurring at the fixed time $T\ge t$ and determined by the $y$-curve (or the $y$-market). It is noted that in the time interval $[t,T]$, the quantity $H^y_t(t,T)$ is fixed at the value observed at time $t\ge 0$. Within the $xy$-approach, the price process $(H^{xy}_{sT})_{0\le s\le T}$ of a financial instrument, determined by the $x$-curve (or $x$-market) and contingent on the asset $H$, is given by 
\begin{equation}\label{xy-generic}
H^{xy}_{sT}=
\begin{cases}
\displaystyle\frac{1}{h^x_s}\E\left[h^x_t\,P^x_{tT}\,Q^{xy}_{tT}\,H^y_t(t,T)\,\vert\,\F_s\right], &\quad 0\le s<t,\\
\\
P^x_{sT}\,Q^{xy}_{tT}\,H^y_t(t,T), &\quad t\le s\le T.
\end{cases}
\end{equation}
The curve-conversion factor process $(Q^{xy}_{tT})_{0\le t\le T}$ is introduced in Definition \ref{Qxy-Prop}.
\end{Proposition}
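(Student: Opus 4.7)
The plan is to handle the two cases of the piecewise formula (\ref{xy-generic}) separately, with the easier case $t\le s\le T$ serving as the building block for the harder case $0\le s<t$. In both cases the strategy is the same: use Definition \ref{Qxy-Prop} to convert the $y$-market cash flow into its equivalent $x$-market cash flow, and then price the resulting $x$-market claim with the standard pricing formula (\ref{stdpricingformula}) instantiated with the $x$-market pricing kernel $(h^x_t)$.

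First I would treat the case $t\le s\le T$. At time $t$ the forward cash flow $H^y_t(t,T)$ is $\F_t$-measurable and payable at $T$ in the $y$-market, so Definition \ref{Qxy-Prop} identifies its $x$-market equivalent forward cash flow as $Q^{xy}_{tT}H^y_t(t,T)$, again $\F_t$-measurable and payable at $T$. Since $s\ge t$, this quantity is $\F_s$-measurable. Applying (\ref{stdpricingformula}) in the $x$-market with $H_T=Q^{xy}_{tT}H^y_t(t,T)$ and pulling the $\F_s$-measurable factor outside the conditional expectation gives
\[
H^{xy}_{sT}=\frac{1}{h^x_s}\E\!\left[h^x_T\,Q^{xy}_{tT}H^y_t(t,T)\,\vert\,\F_s\right]=Q^{xy}_{tT}H^y_t(t,T)\cdot\frac{1}{h^x_s}\E\!\left[h^x_T\,\vert\,\F_s\right]=P^x_{sT}\,Q^{xy}_{tT}H^y_t(t,T),
\]
which is exactly the second branch of (\ref{xy-generic}).

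For the case $0\le s<t$ I would exploit the fact that the deflated price process $(h^x_u H^{xy}_{uT})$ must be an $(\F_u)$-martingale under $\PR$ in the $x$-market (this is the defining feature of the pricing kernel $(h^x_t)$ and underlies (\ref{stdpricingformula})). The tower property and the just-established value $H^{xy}_{tT}=P^x_{tT}Q^{xy}_{tT}H^y_t(t,T)$ at $u=t$ then yield
\[
H^{xy}_{sT}=\frac{1}{h^x_s}\E\!\left[h^x_t H^{xy}_{tT}\,\vert\,\F_s\right]=\frac{1}{h^x_s}\E\!\left[h^x_t\,P^x_{tT}\,Q^{xy}_{tT}\,H^y_t(t,T)\,\vert\,\F_s\right],
\]
which is the first branch of (\ref{xy-generic}).

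The main obstacle is conceptual rather than computational: one must justify that $Q^{xy}_{tT}H^y_t(t,T)$ is truly the unique arbitrage-free $x$-market representative of the $y$-market forward cash flow $H^y_t(t,T)$. This relies on the portfolio-based no-arbitrage argument of Appendix \ref{App-ArbStrat-CC} that underpins Definition \ref{Qxy-Prop}; without it, one could only assert numerical equality, not economic equivalence. As a consistency check I would also substitute the explicit form $Q^{xy}_{tT}=h^y_t P^y_{tT}/(h^x_t P^x_{tT})$ from (\ref{xy-conv-factor}) into the first branch and apply the tower property together with $P^y_{tT}=(h^y_t)^{-1}\E[h^y_T\,\vert\,\F_t]$ to obtain the compact representation
\[
H^{xy}_{sT}=\frac{1}{h^x_s}\E\!\left[h^y_T\,H^y_t(t,T)\,\vert\,\F_s\right],
\]
which transparently reduces to (\ref{stdpricingformula}) whenever $h^x=h^y$ (i.e. when the $x$- and $y$-curves coincide), thereby confirming that the across-curve formula extends the standard single-curve pricing formula in the expected way.
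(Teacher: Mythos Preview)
Your proposal is correct and follows essentially the same approach as the paper: convert the $y$-market forward cash flow to its $x$-market equivalent via $Q^{xy}_{tT}$, then apply the standard $x$-market pricing formula together with the tower property. The only cosmetic difference is that the paper treats the case $0\le s<t$ first (towering $\E[h^x_T\,\cdot\,\vert\,\F_s]$ through $\F_t$ directly) and then reads off the $t\le s\le T$ case, whereas you establish the latter first and feed it into the martingale relation for the former; the mathematical content is identical.
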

%
\begin{proof}
For information we note that, by an application of the relation (\ref{y-stdpricingformula}), the price process $(H^y_t)_{0\le t \le T}$ of the financial asset $H$ is deduced to be 
\begin{equation}
H^y_t=\frac{1}{h^y_t}\E\left[h^y_T\,H^y_t(t,T)\,\vert\,\F_t\right]=H^y_t(t,T)P^y_{tT},
\end{equation} 
since the cash flow $H^y_t(t,T)$ is $\F_t$-measurable and it occurs at $T$. At time $t\in[0,T]$, we convert $H^y_t(t,T)$ to the corresponding value $H^x_t(t,T)$ in the $x$-market by use of the conversion factor $Q^{xy}_{tT}$:
\begin{equation}
H^x_t(t,T)=Q^{xy}_{tT}H^y_t(t,T).
\end{equation}
Now we insert the converted cash flow $H^x_t(t,T)$ in the standard no-arbitrage formula (\ref{stdpricingformula}) (or formula (\ref{y-stdpricingformula}), where y=0 is taken to be the $x$-curve) where $h_t=h^x_t$ is assumed. We have,
\begin{equation}
H^x_{sT}=\frac{1}{h^x_s}\E\left[h^x_t\,H^x_t(t,T)\,\vert\,\F_s\right].
\end{equation}
Given that $H^x_t(t,T)$ is $\F_t$-measurable, we deduce the following by the tower property of conditional expectation:
\begin{equation}
H^x_{sT}=\frac{1}{h^x_s}\E\left[\E\left[h^x_T\,\vert\,\F_t\right]H^x_t(t,T)\,\vert\,\F_s\right]=\frac{1}{h^x_s}\E\left[h^x_t\,P^x_{tT}\,H^x_t(t,T)\,\vert\,\F_s\right],
\end{equation}
for $0\le s<t$. In addition, for $t\le s\le T$, we have $H^x_{sT}=P^x_{sT}\,H^x_t(t,T)$. Recalling that $H^x_t(t,T)=Q^{xy}_{tT}\,H^y_t(t,T)$, and by choosing to write $H^{xy}_{sT}$ for $H^{x}_{sT}$ in order to emphasise the interaction between the $x$- and the $y$-curves, the proof is complete. We add that the one-to-one across-curve extension to the standard pricing formula (\ref{stdpricingformula}) is recovered by setting $t=T$ in the relation (\ref{xy-generic}).
\end{proof}
%

\begin{Remark}\label{Rem-xyZCB}
When $t = T$ and $H^y_{TT} = 1$, using Proposition \ref{XY-gen-Prop}, we may define the ZCB
\begin{equation}\label{xy-ZCB}
P^{xy}_{sT} = \frac{1}{h^{x}_s}\E\left[h^{y}_T\,\vert\,\F_s\right] = Q_{ss}^{xy} P^y_{sT} = P^x_{sT} Q_{sT}^{xy}\,,
\end{equation}
for $s \in [0,T]$, which has two representations using the definition of the conversion factor (\ref{xy-conv-factor}).
\end{Remark}

Given Proposition \ref{XY-gen-Prop}, we can now present the dual role played by the \textit{curve-conversion factor process}, within the xy-formalism, which is described in the following corollary.

\begin{Corollary}\label{Qxy-dual-coro}
Within the xy-formalism, if the cash flow $H^y_t(t,T)$ is directly observable in the economy, then the curve-conversion factor process enables valuation by acting at the level of the discounting curve as follows:
\begin{equation}\label{y-obs-payoff}
H^{xy}_{sT}=\frac{1}{h^{x}_s}\E\left[h^{x}_t P^x_{tT}Q^{xy}_{tT} H^{y}_{t}(t,T)\,\vert\,\F_s\right] = \frac{1}{h^{x}_s}\E\left[h^{y}_t P^y_{tT} H^{y}_{t}(t,T)\,\vert\,\F_s\right].
\end{equation}
However, if the curve-converted cash flow $H^{xy}_{tT}$ is directly observable in the economy, then the curve-conversion factor process enables valuation by acting at the level of the cash flow as follows:
\begin{equation}\label{xy-obs-payoff}
H^{xy}_{sT}=\frac{1}{h^{x}_s}\E\left[h^{x}_t P^x_{tT}Q^{xy}_{tT} H^{y}_{t}(t,T)\,\vert\,\F_s\right] = \frac{1}{h^{x}_s}\E\left[h^{x}_t H^{xy}_{tT}\,\vert\,\F_s\right],
\end{equation}
where $(H^{xy}_{sT})_{0\le s \le t}$ is the $x$-market value of $H^y_t(t,T)$, for $s \leq t \leq T$.
\end{Corollary}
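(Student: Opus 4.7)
The plan is to observe that the corollary is essentially a bookkeeping rearrangement of Proposition \ref{XY-gen-Prop} and the explicit form of $Q^{xy}_{tT}$ given in Definition \ref{Qxy-Prop}, so the work splits into three almost-mechanical steps: (i) recognise that the first equality in each of (\ref{y-obs-payoff}) and (\ref{xy-obs-payoff}) is nothing other than the statement of equation (\ref{xy-generic}) in the regime $0\le s<t$; (ii) regroup the factor $Q^{xy}_{tT}$ with the $x$-discounting terms to produce $y$-discounting; (iii) regroup it with the cash flow $H^y_t(t,T)$ to produce the $x$-market value.

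For step (ii), which yields the second equality in (\ref{y-obs-payoff}), I would substitute the explicit representation
\begin{equation*}
Q^{xy}_{tT}=\frac{h^y_t P^y_{tT}}{h^x_t P^x_{tT}}
\end{equation*}
from (\ref{xy-conv-factor}) into the integrand $h^x_t P^x_{tT} Q^{xy}_{tT} H^y_t(t,T)$. The $\F_t$-measurable factor $h^x_t P^x_{tT}$ cancels and only $h^y_t P^y_{tT} H^y_t(t,T)$ remains inside the conditional expectation, which is exactly the right-hand side of (\ref{y-obs-payoff}). This makes the interpretation transparent: the conversion factor has been absorbed into the discounting ingredients, effectively switching from $(h^x,P^x)$-discounting to $(h^y,P^y)$-discounting.

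For step (iii), which yields the second equality in (\ref{xy-obs-payoff}), I would invoke the second branch of (\ref{xy-generic}) evaluated at $s=t$, namely
\begin{equation*}
H^{xy}_{tT}=P^x_{tT}\,Q^{xy}_{tT}\,H^y_t(t,T),
\end{equation*}
which is well-defined because $H^y_t(t,T)$ is $\F_t$-measurable. Multiplying by $h^x_t$ and inserting into the conditional expectation gives $\E[h^x_t H^{xy}_{tT}\,|\,\F_s]$, which is precisely the right-hand side of (\ref{xy-obs-payoff}). Here the conversion factor has been absorbed into the cash flow, so one discounts the already-converted $x$-market payoff $H^{xy}_{tT}$ using the $x$-market pricing kernel alone.

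There is no real obstacle; the only thing to be careful about is the measurability bookkeeping, ensuring that at each regrouping step the factors pulled out of the conditional expectation are indeed $\F_t$-measurable (so that the rearrangement is legitimate) and that the application of Proposition \ref{XY-gen-Prop} at $s=t$ is valid. Once these observations are made, the two displayed equalities follow by direct substitution, and the dual role of $Q^{xy}$—acting either on the discount curve or on the cash flow—is exhibited.
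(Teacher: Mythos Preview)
Your proposal is correct and follows essentially the same route as the paper: the paper also obtains (\ref{y-obs-payoff}) by invoking the algebraic identity $h^x_t P^x_{tT} Q^{xy}_{tT}=h^y_t P^y_{tT}$ from (\ref{xy-conv-factor}), and obtains (\ref{xy-obs-payoff}) by recognising $P^x_{tT}Q^{xy}_{tT}H^y_t(t,T)$ as the second branch of (\ref{xy-generic}). The only difference is cosmetic framing—the paper wraps each substitution in an ``observable versus model-implied'' narrative to motivate which regrouping is natural in each case—but the mathematical steps are identical to yours.
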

\begin{proof}
If $H^y_t(t,T)$ is determined in the $y$-market and directly observable (i.e. quoted) within the economy, then according to Proposition \ref{XY-gen-Prop} the value of such a payoff within the $x$-market, at the future terminal time $T$, is given by
\begin{equation}
H^{xy}_{TT} = Q^{xy}_{tT}H^y_t(t,T),
\end{equation}
which is model-implied, since the curve-conversion factor process $Q^{xy}_{tT}$ is determined by the specific forms of the pricing kernels $(h^x_t)$ and $(h^y_t)$, respectively. Therefore, since $H^{xy}_{TT}$ is not directly observable in the economy due to $Q^{xy}_{tT}$, the curve-conversion factor process is subsumed into the discounting process in Eq. (\ref{xy-generic}) for $0 \le s < t$, by observing that $h^x_t P^x_{tT} Q^{xy}_{tT} = h^y_t P^y_{tT}$, which yields Eq. (\ref{y-obs-payoff}).\\
Conversely, if $H^y_t(t,T)$ is determined in the $y$-market but the converted quantity $H^{xy}_{TT}$ is directly observable within the economy, then
\begin{equation}
H^y_t(t,T) = \frac{H^{xy}_{TT}}{Q^{xy}_{tT}}, 
\end{equation}
is model-implied, which is subsumed into the cash flow process by observing that $H^{xy}_{sT} = P^x_{sT}Q^{xy}_{tT}H^y_t(t,T)$ for $t \le s \le T$ from Eq. (\ref{xy-generic}), which yields Eq. (\ref{xy-obs-payoff}).
\end{proof}

\begin{Remark}
Corollary \ref{Qxy-dual-coro} proves to be critical in Section \ref{PK-Multi}, where consistent mutli-curve systems are derived for both, developed and emerging inter-bank swap markets. With regard to FRAs (the fundamental inter-bank swap market derivative), which has an IBOR process as its underlying, it turns out that the $y$-market determined IBOR process is directly observable in the emerging market, but its curve-converted equivalent is directly observable in the developed market. In this instance, the dual nature of the curve-conversion factor process caters for this apparent cross-economy market inconsistency, resulting in one consistent modelling framework.
\end{Remark}
In Appendix \ref{App-B}, we provide the consistent set of changes of numeraire assets and associated equivalent probability measures, which ensure that no arbitrage is produced when the across-pricing formula is applied using an equivalent martingale measure.

\section{Pricing kernel approach to multi-curve systems}\label{PK-Multi}

First we consider the definition of a spot IBOR, i.e. a deposit rate that is offered at a fixed time $t \geq 0$ by a set of suitably credit-rated banks within a given economy. We assume that the maturity of said IBOR is $t+\delta > t$, so that the associated tenor is given by $\delta >0$. Then we may define (or represent) the spot IBOR process via ZCB instruments by 
\begin{equation}\label{spot-IBOR}
L_t(t,t+\delta) = \frac{1}{\delta} \left(\frac{1}{P_{tt+\delta}} - 1\right),
\end{equation}
where $t \geq 0$, $\delta >0$, and where $P_{tt+\delta}$ is the price at time $t$ of a ZCB, with tenor $\delta$, that matures at time $t+\delta$. In the classical single-curve framework, where IBORs are considered an appropriate proxy for risk-free rates and where a tradable discount bond system is assumed, one can then proceed to define the forward IBOR process via the canonical no-arbitrage pricing relation
\begin{equation}\label{mart-IBOR}
L_t(T_{i-1},T_i)=\frac{1}{h_t P_{tT_i}} \E\left[h_{T_{i-1}} P_{T_{i-1}T_i} L_{T_{i-1}}(T_{i-1},T_i) \, \big\vert \, \F_t \right],
\end{equation}
for $0\le t\le T_{i-1}$, and where $\delta_i=T_i-T_{i-1}$ is the IBOR tenor and $(h_t)_{t \geq 0}$ is the pricing kernel process. By use of the relation (\ref{spot-IBOR}) with $t=T_{i-1}$ and $\delta = \delta_i$, and the ZCB pricing relation $h_t P_{tT_i}=\E[ h_{T_{i-1}} P_{T_{i-1}T_i}\, \vert \, \F_t]$, one obtains the forward IBOR process
\begin{equation}\label{class-IBOR}
L_t(T_{i-1},T_i)= \frac{1}{\delta_i}\left(\frac{P_{tT_{i-1}}}{P_{tT_i}}-1\right),
\end{equation}
for $0\le t\le T_{i-1}$. We note that the product of the pricing kernel process and the discounted forward IBOR process $(h_t P_{tT_i} L_t(T_{i-1},T_i))_{0 \leq t \leq T_{i-1}}$ is an $((\F_t),\PR)$-martingale, which is analogous to the forward IBOR process being a martingale under the $\Q^{T_i}$-forward measure in the classical single-curve theory.
\

The classical relation (\ref{class-IBOR}) states that the forward IBOR value at time $t$ can be replicated by a linear combination of zero-coupon bonds, i.e. by one maturing at the IBOR reset date $T_{i-1}$ and another ZCB maturing at the IBOR settlement date $T_i$. In a market where the spread between an overnight indexed swap (OIS) rate and the corresponding IBOR is non-zero, relation (\ref{class-IBOR}) is no longer acceptable. That is, the now \textit{risky} IBOR can no longer be replicated using \textit{risk-free} ZCBs. In oder words, the IBOR market is exposed to risk factors which are not necessarily affecting the \textit{risk-free} ZCB market, while the risk exposure also varies depending on the IBOR tenor $\delta_i=T_i-T_{i-1}$ one is investing in. Hence, one needs to assume that holding a financial contract written on a 3-month IBOR exposes an investor to a different risk profile than when holding an instrument written on a 6-month IBOR. It follows that assuming \textit{risk-free} ZCBs can replicate the same risk exposures as contracts written on an IBOR is wrong because: (a) an IBOR may be subject to more risk sources than the \textit{risk-free} ZCBs; and (b) the number of risk factors affecting an IBOR contract may depend on the IBOR tenor.
\

We ask the following question: If one insisted on keeping the relation (\ref{class-IBOR}), albeit subject to modifications, how would one need to adjust---{\it in a consistent and arbitrage-free manner}---the relation between an IBOR model and the associated ZCBs in a multi-curve setup? It turns out that the answer is an extension based on the xy-formalism introduced above. Here is how we do it.
\

First, we consider a collection of interest rate curves indexed by $x, y = 0,1,2,...,n$ where we refer to the $x$-curve as the \textit{discounting curve} and the $y$-curve as the \textit{forecasting curve}. An example for a pair of curves $(x,y)$ may be the pair $(0,1)$ where the 0-curve is the OIS curve and the 1-curve is the 1-month IBOR curve. The case where $x=y$ is the (classical) single-curve economy. Next we make the relationship (\ref{class-IBOR}) curve-dependent and write  
\begin{equation}\label{y-IBOR}
L^{y}_t(T_{i-1},T_i)=\frac{1}{\delta_i}\left(\frac{P^{y}_{tT_{i-1}}}{P^{y}_{tT_i}}-1\right).
\end{equation}
Thus, the $y$-ZCB system $P^{y}_{tT_i}$ has an associated $y$-tenored IBOR, which is subject to the same set of risk factors, i.e. the $y$-tenored IBOR defines 
the $y$-ZCB system. Moreover, the $y$-ZCB price process satisfies the martingale relation $h^{y}_t P^{y}_{tT_i} = \E[h^{y}_{T_i}\,\vert\,\F_t]$,
which is to say that no-arbitrage is assumed within the self-consistent $y$-market. 
\

Next we detail the development of consistent multi-curve interest rate systems inspired by the xy-formalism for both, emerging and developed markets.

\subsection{Discounting systems in emerging markets}\label{DS-EM}

In this section we consider the simpler case of an \textit{emerging market}, in particular one where no OIS zero-coupon yield curve exists. To be precise, the spot overnight rate is observable but there are no tradable and liquid overnight indexed swaps, i.e. there is no OIS derivative market to enable the construction of a yield curve. For more information on the specific nuances and issues relating to emerging inter-bank swap markets, we refer the reader to Jakarasi et al. \cite{jlm}, and references therein, who consider the problem of estimating an OIS zero-coupon yield curve in South Africa. In such a market, all forecasting and discounting of cash flows is done by one liquid, risky $y$-tenored IBOR zero-coupon yield curve, only. 

To derive the multi-curve discounting system within the xy-formalism, we first consider the pricing of standard FRAs. FRAs are the fundamental primitive securities in any interest rate market, which facilitate price discovery for forward IBORs.  The FRA considered here has reset time $T_{i-1}>0$ and maturity time $T_i>T_{i-1}$, which is also assumed to be the settlement time, and is therefore written on the future spot IBOR $L^y_{T_{i-1}}(T_{i-1},T_i)$. The value at time $t\in [0,T_i]$ of this FRA is denoted by $V^{yy}_{tT_i}$, with the first character of the superscript indicating the discount curve, and the second character denoting the forecasting curve. For a unit nominal, the FRA's payoff at $T_i$ is given by
\begin{equation}\label{yy-FRA-payoff}
V^{yy}_{T_iT_i} = \delta_i \left(L^{y}_{T_{i-1}}(T_{i-1},T_i)-K^y \right),
\end{equation}
where $K^y$ is an arbitrary strike rate expressed in the $y$-market. We emphasise that the FRA's payoff is actually measurable at time $T_{i-1}$, however the actual cash flow is only paid at time $T_i$.\footnote{The market convention is to understand the right-hand-side of (\ref{yy-FRA-payoff}) as the rate (or quote) observed at $T_{i-1}$ and applied at $T_i$ on one unit of currency giving the payoff value $V^{yy}_{T_iT_i}$ of the contract at $T_i$. Since this value is paid at $T_i$, we denote it $V^{yy}_{T_iT_i}$, and use the subscripts $T_i$.} As a consequence, we may also define the in-advance FRA payoff $V^{yy}_{T_{i-1}T_i}$ at $T_{i-1}$, which is the value $V^{yy}_{T_iT_i}$ discounted by $P^{y}_{T_{i-1}T_i}$, by
\begin{equation}\label{yy-FRA-advance-payoff}
V^{yy}_{T_{i-1}T_i} = P^{y}_{T_{i-1}T_i} \delta_i \left(L^{y}_{T_{i-1}}(T_{i-1},T_i)-K^y \right).
\end{equation}
Using the pricing formula (\ref{xy-generic}) with $x=y$, along with relations (\ref{mart-IBOR}), (\ref{class-IBOR}) and (\ref{y-IBOR}), the FRA price process is derived as
\begin{eqnarray}\label{yy-FRA}
V^{yy}_{tT_i}
=\frac{\delta_i}{h^{y}_t}\E\left[h^{y}_{T_{i-1}}V^{yy}_{T_{i-1}T_i}\,\big\vert\,\F_t\right]
&=&\frac{\delta_i}{h^{y}_t}\E\left[h^{y}_{T_{i-1}} P^{y}_{T_{i-1}T_i} \left(L^{y}_{T_{i-1}}(T_{i-1},T_i)-K^y \right)\,\big\vert\,\F_t\right]\nn\\
&=&\delta_i P^{y}_{tT_i}\left(L^{y}_t(T_{i-1},T_i)-K^y \right) \nn\\
&=& P^{y}_{tT_{i-1}} - (1+ \delta_i K^y)P^{y}_{tT_{i}}.
\end{eqnarray}
By setting $V^{yy}_{tT_i}=0$, the fair FRA rate process is recovered and is given by 
\begin{equation}\label{yy-fair-FRA-rate}
K^{yy}_{t}(T_{i-1},T_i)=L^{y}_t(T_{i-1},T_i),
\end{equation}
for $t \in [0,T_{i-1}]$. The notation $K^{yy}_{t}(T_{i-1},T_i)$ emphasises that this fair FRA strike rate applies when the $y$-curve is used for both, discounting and forecasting.
\

Next we consider a standard IRS with unit nominal, reset times $\{T_0,T_1,\ldots,T_{n-1}\}$, payment times $\{T_1,T_2,\ldots,T_n\}$, referencing the $y$-tenored IBOR and arbitrary fixed swap rate under the $y$-market denoted by $S^y$. Again applying pricing relation (\ref{xy-generic}) with $x=y$, together with relations (\ref{mart-IBOR}), (\ref{class-IBOR}) and (\ref{y-IBOR}), the IRS price process is derived as
\begin{eqnarray} \label{yy-IRS}
V^{yy}_{tT_n}
&=& \sum_{i=1}^n \frac{\delta_i}{h^{y}_t}\E\left[h^{y}_{T_i} \left(L^{y}_{T_{i-1}}(T_{i-1},T_i) - S^y \right) \,\big\vert\,\F_t\right] \nn\\
&=&P^{y}_{tT_{0}}  - P^{y}_{tT_{n}} - S^y \sum_{i=1}^n \delta_i P^{y}_{tT_{i}} ,
\end{eqnarray}
for $t \leq T_0$.  Using the same notation convention as with the FRA, the fair IRS rate process is given by
\begin{equation}\label{yy-fair-IRS-rate-process}
S^{yy}_t(T_0,T_n) = \frac{P^{y}_{tT_0}  - P^{y}_{tT_{n}}}{\sum_{i=1}^n \delta_i P^{y}_{tT_{i}}},
\end{equation}
for $t \leq T_0$. For a brief treatment of bootstrapping in an emerging market, we here refer to Appendix \ref{App-Boot-EM}.

\subsection{Discounting systems in developed markets}\label{dsdm}

Next we consider the more complex case of a \textit{developed market} where, in general, an OIS market exists. In such a market, cash flows are forecast using the $y$-tenored IBOR zero-coupon yield curve but discounted using the OIS zero-coupon yield curve. Such a product feature is also consistent with the notion of collateralisation. We consider the same FRA as in the \textit{emerging market} case, however we now assume that discounting occurs under the $x$-curve (or the OIS curve, to be more specific). We now have to make use of relation (\ref{xy-generic}) in order to define the FRA's payoff. 

\begin{Proposition}\label{xy-FRA-payoff-prop}
The developed market FRA with reset time $T_{i-1}$, expiry time $T_i$ and unit nominal has a terminal payoff, within the $x$-market, given by
\begin{equation}\label{xy-FRA-payoff}
V^{xy}_{T_iT_i} =  Q^{xy}_{T_{i-1}T_i} \delta_i \left(L^{y}_{T_{i-1}}(T_{i-1},T_i)-K^y \right) = Q^{xy}_{T_{i-1}T_i} V^{yy}_{T_iT_i}\,,
\end{equation}
where, as before, $\delta_i = T_i - T_{i-1}$ and $K^y$ is the strike rate within the $y$-market. The in-advance FRA payoff is then given by
\begin{equation}\label{xy-FRA-advance-payoff}
V^{xy}_{T_{i-1}T_i} = P^{x}_{T_{i-1}T_i} Q^{xy}_{T_{i-1}T_i} V^{yy}_{T_iT_i},
\end{equation}
which is the discounted value of the terminal payoff within the $x$-market.
\end{Proposition}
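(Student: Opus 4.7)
The plan is to obtain both identities as direct specialisations of the across-curve pricing formula (Proposition \ref{XY-gen-Prop}), with the FRA payoff playing the role of the generic $\F_t$-measurable cash flow $H^y_t(t,T)$. First I would identify the ingredients: the emerging-market FRA payoff derived in the previous subsection, namely $V^{yy}_{T_iT_i} = \delta_i \bigl(L^y_{T_{i-1}}(T_{i-1},T_i) - K^y\bigr)$, is $\F_{T_{i-1}}$-measurable but settled at $T_i$, and is by construction determined in the $y$-market. I would therefore set $t = T_{i-1}$, $T = T_i$ and $H^y_t(t,T) = V^{yy}_{T_iT_i}$ in Proposition \ref{XY-gen-Prop}, so that the claim reduces to evaluating the second branch of formula (\ref{xy-generic}) at the two specific dates $s = T_i$ and $s = T_{i-1}$.

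For the terminal payoff, taking $s = T = T_i$ (which satisfies $t \le s \le T$) in the second branch of (\ref{xy-generic}) and using $P^x_{T_iT_i} = 1$ yields immediately $V^{xy}_{T_iT_i} = Q^{xy}_{T_{i-1}T_i}\, V^{yy}_{T_iT_i}$, which is (\ref{xy-FRA-payoff}). For the in-advance form I would evaluate the same branch at $s = t = T_{i-1}$, producing $V^{xy}_{T_{i-1}T_i} = P^x_{T_{i-1}T_i}\, Q^{xy}_{T_{i-1}T_i}\, V^{yy}_{T_iT_i}$, which is (\ref{xy-FRA-advance-payoff}). An equally short alternative is to invoke Definition \ref{Qxy-Prop} directly: the forward-cash-flow conversion $C^x_t(t,T) = Q^{xy}_{tT}\, C^y_t(t,T)$ applied at $t = T_{i-1}$, $T = T_i$ gives the terminal payoff, after which ordinary $x$-market discounting over $[T_{i-1},T_i]$ recovers the in-advance expression.

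The main issue is conceptual rather than technical: one has to justify using the forward conversion factor $Q^{xy}_{T_{i-1}T_i}$, with measurability time $T_{i-1}$ and payment date $T_i$, rather than a spot conversion $h^y_{T_i}/h^x_{T_i}$ at settlement or $h^y_{T_{i-1}}/h^x_{T_{i-1}}$ at the reset date. This is precisely what the dual role of the conversion factor described in Corollary \ref{Qxy-dual-coro} resolves: because the IBOR-indexed payoff is locked in at $T_{i-1}$ but only delivered at $T_i$, it is a forward cash flow and must be converted as such, consistently with the no-arbitrage portfolio argument referenced in the appendix. Once this identification is granted, both (\ref{xy-FRA-payoff}) and (\ref{xy-FRA-advance-payoff}) follow in one line, and no further obstacle arises.
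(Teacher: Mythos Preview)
Your proposal is correct and follows essentially the same approach as the paper: the paper's own proof is a one-line appeal to a direct application of relation (\ref{xy-generic}), noting that the payoff is $\F_{T_{i-1}}$-measurable with cash flow at $T_i$. You have simply made explicit the substitutions $t=T_{i-1}$, $T=T_i$, $H^y_t(t,T)=V^{yy}_{T_iT_i}$ and the evaluation of the second branch of (\ref{xy-generic}) at $s=T_i$ and $s=T_{i-1}$, which is exactly what that direct application entails.
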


\begin{proof}
A direct application of relation (\ref{xy-generic}) leads to the result in Proposition \ref{xy-FRA-payoff-prop}. Like the \textit{emerging market} FRA, notice that the \textit{developed market} FRA's payoff is also measurable at $T_{i-1}$ with the actual cash flow occurring at $T_i$. 
\end{proof}

Before we consider the derivation of the value of the \textit{developed market} FRA, the following lemmas will prove to be useful in this regard.

\begin{Lemma}\label{xy-IBOR-mart-lemma}
The converted $y$-tenored forward IBOR process
\begin{equation}\label{xy-IBOR-mart}
L^{xy}_t(T_{i-1},T_i) = Q^{xy}_{tT_i}L^{y}_t(T_{i-1},T_i),
\end{equation}
for $t \in [0,T_{i-1}]$, satisfies the martingale relation 
\begin{equation}
L^{xy}_s(T_{i-1},T_i)=\frac{1}{h^{x}_s\,P^{x}_{sT_i}}\E\left[h^{x}_{T_i}L^{xy}_{T_{i-1}}(T_{i-1},T_{i})\,\big\vert\,\F_s\right],
\end{equation}
for $0 \leq s \leq t \leq T_{i-1}$.
\end{Lemma}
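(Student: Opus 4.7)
The plan is to show the martingale property by reducing the $xy$-side to a quantity that is already known to be a $\PR$-martingale on the $y$-side, exploiting the explicit representation of the curve-conversion factor as a ratio of pricing-kernel-discount-bond products.

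First I would simply substitute the two available representations. On the one hand, the definition of $L^{xy}$ together with the expression $Q^{xy}_{tT_i}=h^y_tP^y_{tT_i}/(h^x_tP^x_{tT_i})$ from Definition \ref{Qxy-Prop} gives the fundamental identity
\begin{equation*}
h^x_t\,P^x_{tT_i}\,L^{xy}_t(T_{i-1},T_i)\;=\;h^y_t\,P^y_{tT_i}\,L^y_t(T_{i-1},T_i),
\end{equation*}
valid for $0\le t\le T_{i-1}$. The right-hand side is, within the self-consistent $y$-market, precisely the object that the single-curve relation (\ref{mart-IBOR}), applied with all quantities $y$-tagged, forces to be an $((\F_t),\PR)$-martingale: indeed that relation rewrites as $h^y_tP^y_{tT_i}L^y_t(T_{i-1},T_i)=\E[h^y_{T_{i-1}}P^y_{T_{i-1}T_i}L^y_{T_{i-1}}(T_{i-1},T_i)\,|\,\F_t]$, and the $y$-ZCB martingale property $h^y_tP^y_{tT_i}=\E[h^y_{T_i}\,|\,\F_t]$ together with the tower property (noting $L^y_{T_{i-1}}(T_{i-1},T_i)$ is $\F_{T_{i-1}}$-measurable) collapses this to $\E[h^y_{T_i}L^y_{T_{i-1}}(T_{i-1},T_i)\,|\,\F_t]$.

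Next I would transport this martingale property back to the $x$-side using the identity above: since $h^x_tP^x_{tT_i}L^{xy}_t(T_{i-1},T_i)$ coincides pathwise with the $((\F_t),\PR)$-martingale $h^y_tP^y_{tT_i}L^y_t(T_{i-1},T_i)$, it is itself a martingale. Taking conditional expectations, for $0\le s\le T_{i-1}$,
\begin{equation*}
h^x_s\,P^x_{sT_i}\,L^{xy}_s(T_{i-1},T_i)\;=\;\E\!\left[h^x_{T_{i-1}}\,P^x_{T_{i-1}T_i}\,L^{xy}_{T_{i-1}}(T_{i-1},T_i)\,\big|\,\F_s\right].
\end{equation*}
A final application of the tower property, using the $\F_{T_{i-1}}$-measurability of $L^{xy}_{T_{i-1}}(T_{i-1},T_i)$ together with $h^x_{T_{i-1}}P^x_{T_{i-1}T_i}=\E[h^x_{T_i}\,|\,\F_{T_{i-1}}]$, rewrites the right-hand side as $\E[h^x_{T_i}L^{xy}_{T_{i-1}}(T_{i-1},T_i)\,|\,\F_s]$, and dividing by $h^x_sP^x_{sT_i}$ yields exactly the formula in the statement.

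There is no real obstacle: the whole argument is an algebraic collapse built on the definition of $Q^{xy}$ and two uses of the tower property. The only subtlety worth flagging explicitly in the write-up is to justify that the $y$-tagged single-curve martingale identity (\ref{mart-IBOR}) is legitimately inherited inside the $y$-market (which is part of the modelling hypothesis that each $y$-market is itself arbitrage-free with pricing kernel $(h^y_t)$), because the whole passage from the $y$-side to the $x$-side rests on that fact.
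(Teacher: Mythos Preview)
Your proposal is correct and follows essentially the same approach as the paper, which simply cites Eqs.~(\ref{xy-conv-factor}), (\ref{xy-IBOR-mart}) and (\ref{mart-IBOR}) without further elaboration. Your write-up spells out in detail precisely the algebraic collapse and two applications of the tower property that those citations are meant to convey.
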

\begin{proof}
This statement follows from Eqs (\ref{xy-conv-factor}), (\ref{xy-IBOR-mart}) and (\ref{mart-IBOR}).
\end{proof}

\begin{Lemma}\label{xy-Forward-lemma}
The fair forward price $K^x$ of a forward contract initiated at time $t$ to exchange a cash flow $K^{y}$, determined in the $y$-market, for a cash flow of $K^{x}$, in the $x$-market, with $K^{y}$ being converted at $Q^{xy}_{T_{i-1}T_i}$ but the final payoff occurring at expiry $T_i > T_{i-1} \geq t$ is given by 
\begin{equation}
K^{x} =\frac{h^{y}_t P^{y}_{tT_i}}{h^{x}_t P^{x}_{tT_i}}K^{y}=Q^{xy}_{tT_i}K^{y}.
\end{equation} 
\end{Lemma}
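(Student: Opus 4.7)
The plan is a standard forward-pricing argument: fix $K^x$ by requiring the forward contract to have zero value at initiation time $t$. At expiry $T_i$, the holder exchanges the converted $y$-market cash flow $Q^{xy}_{T_{i-1}T_i}K^y$ for the $x$-market cash flow $K^x$, so the $x$-market terminal payoff is $Q^{xy}_{T_{i-1}T_i}K^y - K^x$. Treating $K^y$ as $\F_t$-measurable (it is contracted at $t$), the fair $K^x$ is the $\F_t$-measurable value that equates the present $x$-market values of the two legs.

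First, I would apply Proposition \ref{XY-gen-Prop} to value the $y$-leg at time $t$, identifying $T_{i-1}$ as the measurability date, $T_i$ as the settlement date, and $H^y_{T_{i-1}}(T_{i-1},T_i)=K^y$ (which is $\F_t$- and hence $\F_{T_{i-1}}$-measurable), giving
\begin{equation*}
H^{xy}_{tT_i} = \frac{K^y}{h^x_t}\,\E\left[h^x_{T_{i-1}}\,P^x_{T_{i-1}T_i}\,Q^{xy}_{T_{i-1}T_i}\,\vert\,\F_t\right].
\end{equation*}
The key simplification is the right-hand identity of (\ref{xy-conv-factor}), namely $h^x_{T_{i-1}}P^x_{T_{i-1}T_i}Q^{xy}_{T_{i-1}T_i} = h^y_{T_{i-1}}P^y_{T_{i-1}T_i}$, after which a single use of the tower property together with the $y$-market ZCB relation $\E[h^y_{T_i}\,\vert\,\F_t]=h^y_t P^y_{tT_i}$ yields $H^{xy}_{tT_i}=K^y h^y_t P^y_{tT_i}/h^x_t$. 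The $x$-leg is immediate: the value at $t$ of receiving $K^x$ at $T_i$ is $K^x P^x_{tT_i}$. Equating the two legs and solving for $K^x$ produces the claim $K^x=Q^{xy}_{tT_i}K^y$.

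The only real subtlety—and the step I expect to receive attention—is that $Q^{xy}_{T_{i-1}T_i}$ is $\F_{T_{i-1}}$-measurable rather than $\F_t$-measurable, so one cannot pull it out of the conditional expectation at $t$; instead one must first push the measurability date in Proposition \ref{XY-gen-Prop} forward from $t$ to $T_{i-1}$ and then collapse the nested expectation via (\ref{xy-conv-factor}). Once this is observed, an equivalent one-line proof is available through Remark \ref{Rem-xyZCB}: the $y$-leg value can be rewritten as $K^y P^{xy}_{tT_i}=K^y Q^{xy}_{tT_i}P^x_{tT_i}$, from which equating with $K^x P^x_{tT_i}$ and cancelling $P^x_{tT_i}$ gives the result instantly. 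I would present both routes briefly to emphasise the consistency of the $xy$-formalism.
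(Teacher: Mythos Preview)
Your argument is correct and essentially identical to the paper's: the paper writes the contract value directly as $V^{xy}_{tT_i}=\tfrac{1}{h^{x}_t}\E[h^{x}_{T_i}(K^{y}Q^{xy}_{T_{i-1}T_i}-K^{x})\,\vert\,\F_t]$, then uses the tower property at $T_{i-1}$ together with (\ref{xy-conv-factor}) to obtain $K^{y}\tfrac{h^{y}_t}{h^{x}_t}P^{y}_{tT_i}-P^{x}_{tT_i}K^{x}$ and sets this to zero. Your route via Proposition~\ref{XY-gen-Prop} is the same computation after one tower step, and your alternative via Remark~\ref{Rem-xyZCB} is a valid shortcut.
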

\begin{proof}
The value of such a forward contract is given by
\begin{eqnarray}\label{xy-Forward}
V^{xy}_{tT_i} &=& \frac{1}{h^{x}_t}\E\left[h^{x}_{T_i}\left(K^{y} Q^{xy}_{T_{i-1}T_{i}} - K^{x} \right)\,\big\vert\,\F_t\right] \nn \\
	     &=& K^{y}\frac{h^{y}_t}{h^{x}_t} P^{y}_{tT_i} - P^{x}_{tT_i}K^{x},
\end{eqnarray}
which follows from Eq. (\ref{xy-conv-factor}) and the tower property of conditional expectations, while setting $V_{tT_i}^{xy} = 0$  and solving for $K^x$ yields the required result.
\end{proof}

We now have the necessary results to derive the value of the \textit{developed market} FRA, which is presented in the following theorem.

\begin{Theorem}\label{xy-IBOR-FRA-theorem}
The value of the developed market FRA with reset time $T_{i-1}$, expiry time $T_i$ and unit nominal, within the $x$-market, is given by
\begin{equation}\label{xy-IBORFRA}
V^{xy}_{tT_i} =  \delta_i P^{x}_{tT_i}  \left( L^{xy}_{t}(T_{i-1},T_i) - K^{x} \right),
\end{equation}
for $t \in [0,T_{i-1}]$, where $\delta_i = T_i - T_{i-1}$ and $K^x$ is the strike rate within the $x$-market.
\end{Theorem}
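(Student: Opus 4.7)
The plan is to apply the across-curve pricing formula (\ref{xy-generic}) to the developed-market FRA payoff supplied by Proposition \ref{xy-FRA-payoff-prop}, and then collapse the resulting expectation to a closed-form expression by exploiting (i) the defining identity of the curve-conversion factor, (ii) the $y$-market martingale property (\ref{mart-IBOR}), and (iii) Lemmas \ref{xy-IBOR-mart-lemma} and \ref{xy-Forward-lemma}.

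First, I would observe that the payoff $V^{xy}_{T_iT_i} = Q^{xy}_{T_{i-1}T_i}\delta_i(L^{y}_{T_{i-1}}(T_{i-1},T_i) - K^y)$ from Proposition \ref{xy-FRA-payoff-prop} is $\F_{T_{i-1}}$-measurable but paid at $T_i$, so formula (\ref{xy-generic}) applies with the inner resolution time $T_{i-1}$ and the payment date $T_i$. This yields
\begin{equation*}
V^{xy}_{tT_i} = \frac{\delta_i}{h^{x}_t}\E\!\left[h^{x}_{T_{i-1}} P^{x}_{T_{i-1}T_i} Q^{xy}_{T_{i-1}T_i}\bigl(L^{y}_{T_{i-1}}(T_{i-1},T_i) - K^y\bigr)\,\big\vert\,\F_t\right]
\end{equation*}
for $t \in [0,T_{i-1}]$. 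The next key move is to apply the identity $h^{x}_u P^{x}_{uT_i} Q^{xy}_{uT_i} = h^{y}_u P^{y}_{uT_i}$ (which is the very content of (\ref{xy-conv-factor}), and was already used in Corollary \ref{Qxy-dual-coro}) at $u = T_{i-1}$ to rewrite the integrand purely in $y$-market quantities.

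Having done so, I would split the expectation and evaluate each term using the $y$-curve tools of Section \ref{PK-Multi}: the martingale relation (\ref{mart-IBOR}) gives $\E[h^{y}_{T_{i-1}} P^{y}_{T_{i-1}T_i} L^{y}_{T_{i-1}}(T_{i-1},T_i)\mid \F_t] = h^{y}_t P^{y}_{tT_i} L^{y}_t(T_{i-1},T_i)$, and the ZCB pricing relation $h^{y}_t P^{y}_{tT_i} = \E[h^{y}_{T_{i-1}} P^{y}_{T_{i-1}T_i}\mid\F_t]$ disposes of the fixed leg. Combined, these yield
\begin{equation*}
V^{xy}_{tT_i} = \frac{\delta_i h^{y}_t P^{y}_{tT_i}}{h^{x}_t}\bigl(L^{y}_t(T_{i-1},T_i) - K^y\bigr) = \delta_i P^{x}_{tT_i} Q^{xy}_{tT_i}\bigl(L^{y}_t(T_{i-1},T_i) - K^y\bigr),
\end{equation*}
after factoring out $Q^{xy}_{tT_i} P^{x}_{tT_i} = h^{y}_t P^{y}_{tT_i}/h^{x}_t$ from Definition \ref{Qxy-Prop}.

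To reach the stated form, I would then distribute $Q^{xy}_{tT_i}$ across the parenthesis and identify the two resulting terms: $Q^{xy}_{tT_i} L^{y}_t(T_{i-1},T_i) = L^{xy}_t(T_{i-1},T_i)$ by Lemma \ref{xy-IBOR-mart-lemma}, and $Q^{xy}_{tT_i} K^y = K^x$ by Lemma \ref{xy-Forward-lemma} (the fair $x$-market exchange price at inception of a strike set in the $y$-market). The main obstacle is not computational but interpretive: the payoff in Proposition \ref{xy-FRA-payoff-prop} is posed with a $y$-strike $K^y$, whereas the theorem states the value in terms of an $x$-strike $K^x$, so one must explicitly invoke Lemma \ref{xy-Forward-lemma} to justify the replacement $K^x \leftrightarrow Q^{xy}_{tT_i}K^y$; once this identification is made the derivation is a routine application of the tower property and the $y$-market no-arbitrage relations.
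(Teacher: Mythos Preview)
Your proposal is correct and follows essentially the same route as the paper's proof: both start from the across-curve pricing of the payoff in Proposition \ref{xy-FRA-payoff-prop}, reduce the floating leg to $\delta_i P^{x}_{tT_i} L^{xy}_t(T_{i-1},T_i)$ (the paper cites Lemma \ref{xy-IBOR-mart-lemma} directly, while you unpack its content via the identity $h^{x}_u P^{x}_{uT_i} Q^{xy}_{uT_i} = h^{y}_u P^{y}_{uT_i}$ and (\ref{mart-IBOR})), reduce the fixed leg to $\delta_i P^{x}_{tT_i} Q^{xy}_{tT_i} K^y$ via (\ref{xy-conv-factor}), and then invoke Lemma \ref{xy-Forward-lemma} to replace $Q^{xy}_{tT_i} K^y$ by $K^x$. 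The only difference is cosmetic: you pass through the $y$-market explicitly before converting back, whereas the paper stays in $x$-market notation and appeals to the lemma.
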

\begin{proof}
Using Proposition \ref{xy-FRA-payoff-prop}, the value of the \textit{developed market} FRA, for $t \in [0,T_{i-1}]$, is given by
\begin{eqnarray}
V^{xy}_{tT_i} 
	     &=& \frac{1}{h^{x}_t}\E\left[h^{x}_{T_{i-1}} P^{x}_{T_{i-1}T_i} Q^{xy}_{T_{i-1}T_i} \delta_i \left(L^{y}_{T_{i-1}}(T_{i-1},T_i)-K^y \right)  \,\big\vert\,\F_t\right] \nn \\
	     &=& \delta_i P^{x}_{tT_i} L^{xy}_{t}(T_{i-1},T_i) - \delta_i P^{x}_{tT_i} Q^{xy}_{tT_i}K^{y} ,
\end{eqnarray}
with the first term following from Lemma \ref{xy-IBOR-mart-lemma} and the second term from Eq. (\ref{xy-conv-factor}). Eq. (\ref{xy-IBORFRA}) follows from applying the result of Lemma \ref{xy-Forward-lemma} to the second term and factorising accordingly.
\end{proof}

The value of this FRA is commensurate with the value of a multi-curve (or basis) FRA in a {\it developed market}, i.e. the price dynamics are consistent with the standard FRA contract traded in developed markets. The form of the developed market FRA's value within the xy-framework leads to the following definition for the multi-curve forward IBOR process.

\begin{Definition}\label{DM-IBOR}
The multi-curve market-implied $y$-tenored forward IBOR process is given by
\begin{equation}\label{xy-IBOR}
L^{xy}_t(T_{i-1},T_i) = Q^{xy}_{tT_i}L^{y}_t(T_{i-1},T_i) = \frac{P^{xy}_{tT_i}}{\delta_i P^{x}_{tT_i}} \left( \frac{P^{xy}_{tT_{i-1}}}{P^{xy}_{tT_i}} - 1\right),
\end{equation}
for $t \in [0,T_{i-1}]$, where $(P^{xy}_{tT_i})$ is defined in Remark \ref{Rem-xyZCB}.
\end{Definition}

Moreover, we may also derive the fair \textit{developed market} FRA rate given the value of the FRA provided by Theorem \ref{xy-IBOR-FRA-theorem}.
\begin{Corollary}\label{fair-xy-rate}
The fair FRA rate process $K^{xy}_{t}(T_{i-1},T_i)$ at time $t$ of a developed market FRA written on the market-implied $y$-tenored forward IBOR (\ref{xy-IBOR-mart}), with reset time $T_{i-1}$ and settlement time $T_i$, is given by 
\begin{equation}\label{xy-FRA-rate}
K^{xy}_t(T_{i-1},T_i) = L^{xy}_t(T_{i-1},T_i),
\end{equation}
for $t \in [0,T_{i-1}]$.
\end{Corollary}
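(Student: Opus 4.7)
The plan is to invoke Theorem \ref{xy-IBOR-FRA-theorem} directly and read off the fair rate by imposing the usual at-the-money condition that a newly initiated FRA has zero value to both counterparties. Concretely, I would set $V^{xy}_{tT_i}=0$ in the pricing expression
\[
V^{xy}_{tT_i} = \delta_i P^{x}_{tT_i}\bigl(L^{xy}_{t}(T_{i-1},T_i) - K^{x}\bigr)
\]
and solve for the strike, calling the resulting rate $K^{xy}_t(T_{i-1},T_i)$ to flag that this is the fair strike quoted in the $x$-market but attached to an IBOR that accrues under the $y$-curve.

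First, I would note that $\delta_i>0$ by construction and $P^{x}_{tT_i}>0$ for all $t\in[0,T_i]$ because the $x$-market discount bond system is induced by a strictly positive pricing kernel $(h^x_t)$ via $P^{x}_{tT_i}=\E[h^{x}_{T_i}\,\vert\,\F_t]/h^{x}_t$. Hence the factor in front of the bracket is strictly positive and the only way the product can vanish is for the bracket itself to vanish, giving $K^{xy}_t(T_{i-1},T_i)=L^{xy}_t(T_{i-1},T_i)$, which is the claimed identity. Using Definition \ref{DM-IBOR}, this can equivalently be written in terms of the $xy$-bond system, $K^{xy}_t(T_{i-1},T_i)=\frac{P^{xy}_{tT_i}}{\delta_i P^{x}_{tT_i}}\bigl(P^{xy}_{tT_{i-1}}/P^{xy}_{tT_i}-1\bigr)$, which may be worth recording as a consistency check against the classical single-curve form.

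If one prefers to express the fair rate in terms of a $y$-market strike, Lemma \ref{xy-Forward-lemma} converts $K^x$ to the corresponding $y$-market quantity via $K^y=K^x/Q^{xy}_{tT_i}$, which together with $L^{xy}_t(T_{i-1},T_i)=Q^{xy}_{tT_i}L^{y}_t(T_{i-1},T_i)$ recovers the emerging-market fair rate $L^{y}_t(T_{i-1},T_i)$ of (\ref{yy-fair-FRA-rate}); this collapse in the $x=y$ case serves as a sanity check on the notation.

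There is no real obstacle here: the corollary is a one-line consequence of Theorem \ref{xy-IBOR-FRA-theorem}, and the only care needed is bookkeeping of which market the strike lives in. The substantive work has already been done in Proposition \ref{xy-FRA-payoff-prop} and Lemmas \ref{xy-IBOR-mart-lemma}--\ref{xy-Forward-lemma}, where the curve conversion factor $Q^{xy}_{tT_i}$ is absorbed into the martingale representation of the converted forward IBOR under the $x$-market pricing kernel.
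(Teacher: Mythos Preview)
Your proposal is correct and follows exactly the paper's approach: set $V^{xy}_{tT_i}=0$ in the pricing formula of Theorem \ref{xy-IBOR-FRA-theorem} and solve for the strike, identifying the result as $K^{xy}_t(T_{i-1},T_i)$. Your explicit justification that $\delta_i P^{x}_{tT_i}>0$ and the consistency checks are additional but harmless elaborations beyond what the paper records.
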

\begin{proof}
Setting the value of the developed market FRA, given by Eq. (\ref{xy-IBORFRA}), equal to zero, we find that $K^x = L^{xy}_t(T_{i-1},T_i)$ at time $t$. Then for any time $t \in [0,T_{i-1}]$, the result for the fair FRA rate process, $K^{xy}_t(T_{i-1},T_i) = L^{xy}_t(T_{i-1},T_i)$, follows accordingly.
\end{proof}

\begin{Remark}
Relation (\ref{xy-FRA-rate}) is the direct multi-curve analogy to the single-curve relation (\ref{yy-fair-FRA-rate}). In fact, for $x=y$ one recovers the single-curve expressions (\ref{yy-FRA}) and (\ref{yy-fair-FRA-rate}).
\end{Remark}

\begin{Remark}
Using Definition \ref{DM-IBOR}, one may re-state the value of the \textit{developed market} FRA as
\begin{equation}\label{xyquanto-FRA}
V^{xy}_{tT_i}= P^{xy}_{tT_{i-1}} - (1+ \delta_i K^y)P^{xy}_{tT_{i}},
\end{equation}
for $t \in [0,T_{i-1}]$, which is the direct multi-curve analogy to the \textit{emerging market} FRA value (\ref{yy-FRA}) with the $y$-ZCBs replaced by the $xy$-ZCBs.
\end{Remark}

Now that we have these results, it is also important to consider the relationship between $L^{xy}_{t}(T_{i-1},T_i)$ and $L^x_{t}(T_{i-1},T_i)$. In particular, one would want $L^{xy}_{t}(T_{i-1},T_i) \geq L^x_{t}(T_{i-1},T_i)$ due to the greater degree of risk associated with the multi-curve $y$-tenored forward IBOR process versus the corresponding $x$-tenored process. The following corollary reveals the conditions under which this feature is achieved, by making use of the  associated {\it forward capitalisation factor} (FCF) processes.

\begin{Corollary}\label{Corr-FCF}
The multi-curve market-implied $y$-tenored FCF process $\overline{v}^{xy}_t(T_{i-1},T_i)$, observed at time $t\le T_{i-1}$ and applying over the period $[T_{i-1},T_i]$, defined by
\begin{equation}\label{xy-FCF}
\overline{v}^{xy}_t(T_{i-1},T_i) := 1 + \delta_i L^{xy}_t(T_{i-1},T_i) \,,
\end{equation} 
is greater than or equal to the corresponding $x$-tenored FCF process 
\begin{equation}
v^x_t(T_{i-1},T_i) := 1 + \delta_i L^x_t(T_{i-1},T_i) \,, 
\end{equation} 
if interest rates are non-negative and $h^y_t \leq h^x_t$ for all $t \in [0,T_{i-1}]$ where $T_{i-1} \leq T_i$.
\end{Corollary}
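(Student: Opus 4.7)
The plan is to reduce the FCF inequality $\overline{v}^{xy}_t(T_{i-1},T_i)\ge v^{x}_t(T_{i-1},T_i)$ to a monotonicity statement for the bond spread $P^{x}_{tT}-P^{xy}_{tT}$ in the maturity $T$, and then to establish that monotonicity via the pricing-kernel representations together with the tower property.

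\textbf{Step 1 (Algebraic reduction).} I would first use Definition \ref{DM-IBOR} to write
\begin{equation*}
\delta_i L^{xy}_t(T_{i-1},T_i)=\frac{P^{xy}_{tT_{i-1}}-P^{xy}_{tT_i}}{P^{x}_{tT_i}},
\end{equation*}
together with the analogous single-curve identity $\delta_i L^{x}_t(T_{i-1},T_i)=(P^{x}_{tT_{i-1}}-P^{x}_{tT_i})/P^{x}_{tT_i}$. Subtracting and simplifying, the FCF difference rearranges to
\begin{equation*}
\overline{v}^{xy}_t(T_{i-1},T_i)-v^{x}_t(T_{i-1},T_i)=\frac{(P^{x}_{tT_i}-P^{xy}_{tT_i})-(P^{x}_{tT_{i-1}}-P^{xy}_{tT_{i-1}})}{P^{x}_{tT_i}}.
\end{equation*}
Since the denominator is strictly positive, the claim reduces to showing that the spread $P^{x}_{tT}-P^{xy}_{tT}$ is non-decreasing between $T=T_{i-1}$ and $T=T_i$.

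\textbf{Step 2 (Pricing-kernel formulation).} Invoking the representations $P^{x}_{tT}=\E[h^{x}_T|\F_t]/h^{x}_t$ and $P^{xy}_{tT}=\E[h^{y}_T|\F_t]/h^{x}_t$ from Remark \ref{Rem-xyZCB}, and setting $D_s:=h^{x}_s-h^{y}_s$, the required monotonicity takes the form of the conditional submartingale-type inequality $\E[D_{T_i}|\F_t]\ge\E[D_{T_{i-1}}|\F_t]$. The domination hypothesis $h^{y}\le h^{x}$ ensures $D_s\ge 0$ and is the natural sign prerequisite for this reading.

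\textbf{Step 3 and the main obstacle.} The remaining task is to derive the submartingale-type inequality. Conditioning on $\F_{T_{i-1}}$ via the tower property and using the ZCB relations $\E[h^{z}_{T_i}|\F_{T_{i-1}}]=h^{z}_{T_{i-1}}P^{z}_{T_{i-1}T_i}$ for $z\in\{x,y\}$, it becomes equivalent to
\begin{equation*}
\E\!\left[h^{y}_{T_{i-1}}\bigl(1-P^{y}_{T_{i-1}T_i}\bigr)-h^{x}_{T_{i-1}}\bigl(1-P^{x}_{T_{i-1}T_i}\bigr)\,\big|\,\F_t\right]\ge 0.
\end{equation*}
Non-negativity of the short rates in each market gives the supermartingale bounds $P^{z}_{T_{i-1}T_i}\in(0,1]$, so both bracketed terms are products of non-negative factors. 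The delicate point -- which I expect to be the central obstacle -- is that the pricing-kernel ordering $h^{y}\le h^{x}$ weights the two terms in the ``wrong'' direction, so the argument must show that the counterpart ordering of the discounting factors $(1-P^{y}_{T_{i-1}T_i})\ge(1-P^{x}_{T_{i-1}T_i})$ in an appropriate conditional sense more than compensates. I would attempt to extract this compensation from the joint use of the supermartingale property of $h^{x}$ and $h^{y}$ together with the hypothesised pricing-kernel dominance, exploiting that a lower pricing kernel over $[T_{i-1},T_i]$ corresponds to a higher implicit short-rate level and hence a larger cumulated discounting effect on the $y$-curve.
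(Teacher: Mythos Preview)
Your Steps 1--2 are correct and yield exactly the same algebraic reduction as the paper: writing $P^{xy}_{tT}=Q^{xy}_{tT}P^{x}_{tT}$, your spread-monotonicity condition $P^{x}_{tT_i}-P^{xy}_{tT_i}\ge P^{x}_{tT_{i-1}}-P^{xy}_{tT_{i-1}}$ is equivalent to the paper's inequality $v^{x}_t(T_{i-1},T_i)\bigl(1-Q^{xy}_{tT_{i-1}}\bigr)\le 1-Q^{xy}_{tT_i}$. The paper then works multiplicatively with the conversion factor, arguing that $Q^{xy}_{tT}$ is non-increasing in $T$ under the hypothesis $h^{y}\le h^{x}$ and combining this with $v^{x}_t\ge 1$, rather than additively with $D_s=h^{x}_s-h^{y}_s$.

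Step 3, however, does not close, and the route you propose cannot be completed as stated. The ``submartingale-type'' target $\E[D_{T_i}\mid\F_t]\ge\E[D_{T_{i-1}}\mid\F_t]$ is \emph{false} in general under the stated hypotheses. Take the deterministic example $h^{x}_s=e^{-r^{x}s}$, $h^{y}_s=e^{-r^{y}s}$ with $r^{y}>r^{x}>0$: then $\E[D_T\mid\F_t]=e^{-r^{x}T}-e^{-r^{y}T}$, which increases for small $T$ but eventually decreases. In your tower-expanded form this is transparent: the term $h^{y}_{T_{i-1}}(1-P^{y}_{T_{i-1}T_i})$ is \emph{not} generally larger than $h^{x}_{T_{i-1}}(1-P^{x}_{T_{i-1}T_i})$, because although $1-P^{y}\ge 1-P^{x}$ (your ``compensation''), the prefactor ordering $h^{y}_{T_{i-1}}\le h^{x}_{T_{i-1}}$ dominates for long horizons. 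So the compensation heuristic you sketch actually cuts the wrong way, and the additive decomposition via $D_s$ is a dead end. If you want to align with the paper's argument you must switch to the multiplicative object $Q^{xy}_{tT}=\E[h^{y}_T\mid\F_t]/\E[h^{x}_T\mid\F_t]$ and exploit its monotonicity in $T$; the additive spread does not carry enough structure.
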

\begin{proof}
Using Eq. (\ref{xy-FCF}) and Definition \ref{DM-IBOR}, we can show that
\begin{eqnarray}
\nonumber \overline{v}^{xy}_t(T_{i-1},T_i) &=& 1 + \delta_i Q^{xy}_{tT_i}  L^y_t(T_{i-1},T_i) \\
\nonumber &=& 1 + Q^{xy}_{tT_i}  \left(v^y_t(T_{i-1},T_i) - 1 \right) \\
\nonumber &=& 1 - Q^{xy}_{tT_i} + v^{xy}_t(T_{i-1},T_i) \,,
\end{eqnarray}
where $v^y_t(T_{i-1},T_i)$ is the $y$-tenored FCF and $v^{xy}_t(T_{i-1},T_i) := Q^{xy}_{tT_i} v^y_t(T_{i-1},T_i)$ is the $y$-tenored FCF represented equivalently in the $x$-market. Then, using Definition \ref{Qxy-Prop} and Eq. (\ref{xy-ZCB}), we can show that
\begin{equation*}
v^{xy}_t(T_{i-1},T_i) = Q^{xy}_{tT_i} \frac{P^y_{tT_{i-1}}}{P^y_{tT_i}} = Q^{xy}_{tT_i} \frac{P^{xy}_{tT_{i-1}}}{P^{xy}_{tT_i}} = \frac{P^x_{tT_{i-1}}}{P^x_{tT_i}}Q^{xy}_{tT_{i-1}} = v^{x}_t(T_{i-1},T_i) Q^{xy}_{tT_{i-1}} \,.
\end{equation*}
Now in order to have $v^{x}_t(T_{i-1},T_i) \leq \overline{v}^{xy}_t(T_{i-1},T_i)$, we must have that
\begin{align*}
v^{x}_t(T_{i-1},T_i) & \leq 1 - Q^{xy}_{tT_i} + v^{x}_t(T_{i-1},T_i)Q^{xy}_{tT_{i-1}} \\
v^{x}_t(T_{i-1},T_i) \left(1 - Q^{xy}_{tT_{i-1}}\right) & \leq 1 - Q^{xy}_{tT_i}  \\
1 - Q^{xy}_{tT_{i-1}} & \leq 1 - Q^{xy}_{tT_i}  \,,
\end{align*}
where the last inequality holds if interest rates are non-negative, i.e. $v^{x}_t(T_{i-1},T_i) \geq 1$. Finally, $Q^{xy}_{tT_{i-1}} \geq Q^{xy}_{tT_i}$ if interest rates are non-negative and $h^y_t \leq h^x_t$ for all $t \in [0,T_{i-1}]$ where $T_{i-1} \leq T_i$. This may be easily evidenced by setting $t=T_{i-1}$ and allowing $T_i$ to vary, while also using the linear and monotonic properties of conditional expectations.
\end{proof}
This corollary proves that the xy-approach, applied to a \textit{developed market}, yields a $y$-market interest rate system which is dominated by the $x$-market system, i.e. $P^y_{tT} \leq P^x_{tT}$ for $0 \leq t \leq T$. Furthermore, this $y$-market system provides a forward IBOR process, $L^y_t(T_{i-1},T_i)$, and enables the construction of a conversion factor process $Q^{xy}_{tT_i}$, which facilitates the definition of the developed market $y$-tenored forward IBOR process $L^{xy}_t(T_{i-1},T_i)$. Therefore, while the $y$-market system is still fictitious, given that it cannot be directly observed, we still consider it to be a model-consistent system given our curve-conversion framework that is inspired by currency modelling.
\begin{Remark}\label{alt-XY}
Using the FCF, one may also express the terminal payoff of the developed market FRA by
\begin{equation}\label{alt-xy-FRA-payoff}
V^{xy}_{T_iT_i} =  Q^{xy}_{T_{i-1}T_i} \left(v^{y}_{T_{i-1}}(T_{i-1},T_i) - v_K^y \right) \,,
\end{equation}
where $v_K^y := \left(1 + \delta_i K^y \right)$. Then, applying the same results as before, the value of the FRA, for $t \in [0,T_{i-1}]$, is
\begin{equation}\label{alt-xy-IBORFRA}
V^{xy}_{tT_i} = P^{x}_{tT_i}  \left( v^{xy}_{t}(T_{i-1},T_i) - v_K^{x} \right) \,,
\end{equation}
where $v_K^x = Q^{xy}_{tT_i} v_K^y$. If we define the multi-curve $y$-tenored forward IBOR process by
\begin{equation}\label{alt-xy-IBOR}
\overline{L}^{xy}_t(T_{i-1},T_i) := \frac{1}{\delta_i} \left(v^{xy}_{t}(T_{i-1},T_i)  - 1\right),
\end{equation}
and the multi-curve $x$-market equivalent FRA strike rate by
\begin{equation}\label{alt-FRA-strike}
\overline{K}^x := \frac{1}{\delta_i} \left(v_K^x  - 1\right),
\end{equation}
we then recover the developed market FRA price process: $V^{xy}_{tT_i} = P^x_{tT_i} \delta_i \left( \overline{L}^{xy}_t(T_{i-1},T_i) - \overline{K}^x \right)$.
We note that in this model $v^{xy}_t(T_{i-1},T_i) = v^{x}_t(T_{i-1},T_i) Q^{xy}_{tT_{i-1}}$ so that
\begin{equation}\label{Qxy-NS}
\frac{1 + \delta_i \overline{L}^{xy}_t(T_{i-1},T_i)}{1 + \delta_i L^x_t(T_{i-1},T_i)} =  \frac{h^y_t P^y_{tT_{i-1}}}{h^x_t P^x_{tT_{i-1}}}  \,,
\end{equation}
and $\overline{L}^{xy}_t(T_{i-1},T_i) \geq L^x_t(T_{i-1},T_i)$ if interest rates are non-negative and $h^x_t \leq h^y_t$ for all $t \in [0,T_{i-1}]$ and for $T_{i-1} \leq T_i$. This is the approach adopted by Nguyen \& Seifried \cite{ns} and it shall be revisited in Section \ref{R-Sec}. Two comments on their multi-curve model, given the context of the xy-approach, follow:
\begin{enumerate}
\item[(i)] The quantities $\overline{L}^{xy}_t(T_{i-1},T_i)$ and $\overline{K}^x$ which determine the FRA's floating and fixed cash flows are derived from the curve-converted quantities $v^{xy}_{t}(T_{i-1},T_i)$ and $v_K^x$ respectively. This is in contrast with $L^{xy}_t(T_{i-1},T_i)$ and $K^x$, the directly comparable curve-converted quantities used in the xy-framework. Therefore, these derived quantities are no longer consistent with a currency modelling analogy, with each differing from the correctly converted quantities by an additive factor of $(Q^{xy}_{tT_i} - 1)/\delta_i$.
\item[(ii)] Observation (i) is further supported by equation (\ref{Qxy-NS}) which shows that the conversion factor process effectively models the spread between the multi-curve $y$-tenored FCF and the corresponding $x$-tenored FCF, as opposed to the classical forward exchange rate. Moreover, the derived y-market system has almost no relation to the developed market y-tenored interest rate system, that one seeks to model, since the model derived y-market system dominates the x-market system, i.e.  $P^x_{tT} \leq P^y_{tT}$ for $0 \leq t \leq T$.
\end{enumerate}
\end{Remark}
\begin{Remark}
The mathematical quantity that directly models the $y$-tenored forward IBOR process is $L^{xy}_{\cdot}(\cdot,\cdot)$ and not $L^{y}_{\cdot}(\cdot,\cdot)$. This is a consequence of industry standards in developed markets, that the product of the $x$-pricing kernel and the $x$-curve discounted $y$-tenored forward IBOR process is a martingale under the $\PR$-measure. In the $xy$-approach, this implies that
\begin{equation}\label{DM-martingale-assumption}
h^{x}_s P^{x}_{sT_{i}} L^{y}_s(T_{i-1},T_i) = \E\left[h^{x}_t P^{x}_{tT_{i}} L^{y}_t(T_{i-1},T_i) \,\vert\,\F_s \right],
\end{equation}
for $0 \leq s \leq t \leq T_{i-1}$. It is not possible to achieve this relationship within the xy-framework, given our representation of the $y$-tenored forward IBOR process (\ref{y-IBOR}). However this relationship is achieved if we replace $L^{y}_{\cdot}(\cdot,\cdot)$ with $L^{xy}_{\cdot}(\cdot,\cdot)$. Our market-implied $y$-tenored forward IBOR process, $L^{xy}_{\cdot}(\cdot,\cdot)$, reveals the convolution of a conversion factor (which facilitates the market's martingale assumption (\ref{DM-martingale-assumption})) and the model $y$-tenored forward IBOR process, $L^{y}_{\cdot}(\cdot,\cdot)$. This result questions the utility of the $y$-ZCB system in the \textit{developed market} context. The $y$-ZCB system is a model construct, derived from the $y$-tenored model-consistent or model-implied forward IBOR process, $L^{y}_{\cdot}(\cdot,\cdot)$, which unravels the market's martingale adjustment from the observed $y$-tenored market-implied IBOR process, $L^{xy}_{\cdot}(\cdot,\cdot)$, via the conversion factor $Q^{yx}_{\cdot \cdot}$.
\end{Remark}
\begin{Remark}
The xy-framework advocates the following price process for a multi-curve FRA
\begin{equation}\label{model-xy-IBORFRA}
V^{xy}_{tT_i} = \delta_i P^{xy}_{tT_i} \left( L^{y}_t(T_{i-1},T_i) - K^{y} \right) ,
\end{equation}
for $t \in [0,T_i]$. We note that the conversion factor (or martingale adjustment) has been applied to the discounting x-ZCB system and not to the model for the $y$-tenored forward IBOR process. However, we note that the terminal FRA payoff would now be
\begin{equation*}
V^{xy}_{T_iT_i} = \delta_i \frac{h^y_{T_i}}{h^x_{T_i}} \left( L^{y}_{T_{i-1}}(T_{i-1},T_i) - K^{y} \right) .
\end{equation*}
This allows us to \textit{disentangle} the y-ZCB system from the x-ZCB system, which enables us to model the y-curve discounting in a consistent, robust and rigorous fashion. From an economics perspective, if one compares the return generated from an xy-FRA to a yy-FRA, one can show that 
\begin{equation}
\frac{V^{yy}_{tT_i}}{V^{yy}_{0T_i}} > \frac{V^{xy}_{tT_i}}{V^{xy}_{0T_i}} = \frac{V^{xy}_{tT_i}}{V^{yy}_{0T_i}} ,
\end{equation}
as required, since discounting at the $x$-curve essentially represents a collateralised FRA which should therefore return the holder less than an equivalent investment in a non-collateralised FRA, represented by the $y$-curve discounting.
\end{Remark}
Next we consider the \textit{developed market} IRS, i.e. one which forecasts cash flows under the $y$-curve but discounts under the $x$-curve, unlike the \textit{emerging market} IRS.
\begin{Theorem}\label{xy-IRS-theorem}
The value of a developed market IRS, within the $x$-market, with reset times $\{T_0,T_1,\ldots,T_{n-1}\}$, payment times $\{T_1,T_2,\ldots,T_n\}$ and unit nominal, referencing the $y$-tenored IBOR is given by
\begin{equation}\label{xy-IRS} 
V^{xy}_{tT_n} = \sum_{i=1}^n \delta_i P^{x}_{tT_i}   \left(L^{xy}_{t}(T_{i-1},T_i) - S^x \right),
\end{equation}
for $t \leq T_0$, where $\delta_i = T_i - T_{i-1}$ and where $S^x$ is the fixed swap rate within the $x$-market.
\end{Theorem}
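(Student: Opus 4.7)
The plan is to view the developed market IRS as a portfolio of $n$ single-period contracts and apply Theorem~\ref{xy-IBOR-FRA-theorem} to each. Concretely, the $i$-th coupon contributes at time $T_i$ the cash flow (quoted in the $x$-market)
\begin{equation*}
Q^{xy}_{T_{i-1}T_i}\,\delta_i\, L^{y}_{T_{i-1}}(T_{i-1},T_i) \;-\; \delta_i\, S^{x},
\end{equation*}
where the floating $y$-tenored IBOR leg is carried into the $x$-market via the curve-conversion factor (Definition~\ref{Qxy-Prop} and Proposition~\ref{XY-gen-Prop}), while the fixed leg $\delta_i S^x$ is already expressed in the $x$-market and therefore requires no conversion. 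This is precisely the terminal payoff structure of the developed market FRA of Proposition~\ref{xy-FRA-payoff-prop}, with strike chosen to be $K^{x}=S^{x}$ for every $i$.

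The computation then reduces to applying the $x$-market pricing operator $(h^{x}_t)^{-1}\E[\,h^{x}_{T_i}\,\cdot\;|\,\F_t]$ termwise and invoking linearity of conditional expectation. For the floating term, I would replay the argument in the proof of Theorem~\ref{xy-IBOR-FRA-theorem}: condition on $\F_{T_{i-1}}$, use $\E[h^{x}_{T_i}\,|\,\F_{T_{i-1}}]=h^{x}_{T_{i-1}}P^{x}_{T_{i-1}T_i}$, and then invoke the martingale property of $(h^{x}_t P^{x}_{tT_i} L^{xy}_t(T_{i-1},T_i))$ furnished by Lemma~\ref{xy-IBOR-mart-lemma} to obtain $\delta_i P^{x}_{tT_i}\,L^{xy}_t(T_{i-1},T_i)$. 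For the fixed term, the plain ZCB relation $h^{x}_t P^{x}_{tT_i}=\E[h^{x}_{T_i}\,|\,\F_t]$ yields $\delta_i P^{x}_{tT_i}\,S^{x}$. Summing over $i=1,\ldots,n$ and factoring out $\delta_i P^{x}_{tT_i}$ gives the claimed formula \eqref{xy-IRS}.

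Because each step has already been carried out in the FRA proof and the rest is linearity, there is no genuine analytic obstacle. The only point requiring a little care is conceptual: one must treat the fixed rate $S^{x}$ as an $x$-market quantity that is common to every coupon period, rather than converting it (as one would for an $S^y$) via a period-dependent $Q^{xy}_{tT_i}$. This is what distinguishes the developed market IRS in Theorem~\ref{xy-IRS-theorem} from the emerging market IRS formula \eqref{yy-IRS}, and it is automatic once the FRA decomposition is set up with strike $S^{x}$ throughout.
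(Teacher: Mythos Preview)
Your argument is correct, but the route differs from the paper's. You take the fixed rate $S^{x}$ as a primitive $x$-market quantity, so each coupon has payoff $Q^{xy}_{T_{i-1}T_i}\,\delta_i L^{y}_{T_{i-1}}(T_{i-1},T_i)-\delta_i S^{x}$; the floating leg is handled by Lemma~\ref{xy-IBOR-mart-lemma} and the fixed leg by plain $x$-discounting, and summation gives~\eqref{xy-IRS}. The paper instead begins with the \emph{$y$-market} swap rate $S^{y}$, applies the across-curve formula to the full payoff $Q^{xy}_{T_{i-1}T_i}\,\delta_i\big(L^{y}_{T_{i-1}}(T_{i-1},T_i)-S^{y}\big)$, obtains $\sum_i \delta_i P^{x}_{tT_i}\big(L^{xy}_t(T_{i-1},T_i)-Q^{xy}_{tT_i}S^{y}\big)$, and then \emph{derives} the identification $S^{x}=S^{y}\sum_i \delta_i P^{x}_{tT_i}Q^{xy}_{tT_i}\big/\sum_i \delta_i P^{x}_{tT_i}$ via a fixed-for-fixed swap argument analogous to Lemma~\ref{xy-Forward-lemma}. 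Your approach is shorter because it sidesteps this conversion step; the paper's approach is more faithful to the $xy$-formalism (everything originates in the $y$-market) and, as a by-product, gives the explicit $S^{x}\!\leftrightarrow\! S^{y}$ relation needed later for Corollary~\ref{fair-xy-IRS-rate}.

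One small wording slip: your payoff is \emph{not} literally that of Proposition~\ref{xy-FRA-payoff-prop}, since there the conversion factor multiplies the strike too (the strike there is $K^{y}$, not $K^{x}$). What you actually use is the floating-leg computation from the proof of Theorem~\ref{xy-IBOR-FRA-theorem} together with direct $x$-discounting of an $x$-denominated constant; that is fine, just cite it that way.
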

\begin{proof}
Starting with the \textit{emerging market} version of the IRS with fixed swap rate $S^y$ within the $y$-market and applying pricing relation (\ref{xy-generic}), analagous to Proposition \ref{xy-FRA-payoff-prop}, the \textit{developed market} IRS price process is given by
\begin{equation}
V^{xy}_{tT_n} = \sum_{i=1}^n \frac{\delta_i}{h^{x}_t}\E\left[h^{x}_{T_{i-1}} P^{x}_{T_{i-1}T_i} Q^{xy}_{T_{i-1}T_i} \left(L^{y}_{T_{i-1}}(T_{i-1},T_i) - S^y \right) \,\big\vert\,\F_t\right] ,
\end{equation} 
which, upon application of Lemma \ref{xy-IBOR-mart-lemma} and Equation (\ref{xy-conv-factor}), simplifies to 
\begin{equation}
V^{xy}_{tT_n} = \sum_{i=1}^n \delta_i P^{x}_{tT_i}\left( L^{xy}_{t}(T_{i-1},T_i) - Q^{xy}_{tT_i}S^y \right) ,
\end{equation} 
for $t \leq T_0$. The result follows by observing that the fixed IRS rate may be expressed in the $x$-market by $S^x = S^y(\sum_{i=1}^n \delta_i P^{x}_{tT_i} Q^{xy}_{tT_i})/(\sum_{i=1}^n \delta_i P^{x}_{tT_i})$. This may be justified in an analogous fashion to the fixed FRA rate, but this time making use of a fixed-for-fixed swap contract as opposed to a forward contract, as in Lemma \ref{xy-Forward-lemma}.
\end{proof}

\begin{Remark}
Using Definition \ref{xy-IBOR}, one may re-state the value of the \textit{developed market} IRS as
\begin{equation}\label{xy-IRS2}
V^{xy}_{tT_n}= \left[P^{xy}_{tT_0}  - P^{xy}_{tT_{n}} \right] - S^y \sum_{i=1}^n \delta_i P^{xy}_{tT_{i}} ,
\end{equation}
for $t \leq T_0$, which is the direct multi-curve analogy to the \textit{emerging market} IRS value (\ref{yy-IRS}) with the $y$-ZCBs replaced by the $xy$-ZCBs.
\end{Remark}

\begin{Corollary}\label{fair-xy-IRS-rate}
The fair fixed swap rate process $S^{xy}_t(T_0,T_n)$ of a developed market IRS written on the market-implied $y$-tenored forward IBOR (\ref{xy-IBOR-mart}), with reset times $\{T_0,T_1,\ldots,T_{n-1}\}$, payment times $\{T_1,T_2,\ldots,T_{n}\}$ and unit nominal, is given by 
\begin{equation}\label{xy-fair-IRS-rate-process}
S^{xy}_t(T_0,T_n) = \frac{P^{xy}_{tT_0}  - P^{xy}_{tT_{n}}}{\sum_{i=1}^n \delta_i  P^{x}_{tT_{i}}} ,
\end{equation}
for $t \leq T_0$.
\end{Corollary}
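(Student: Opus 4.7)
The plan is to obtain the fair swap rate by imposing that the initial value of the swap vanishes, using the closed-form expression for the developed market IRS value derived in Theorem \ref{xy-IRS-theorem}. Concretely, I would start from
\[
V^{xy}_{tT_n} = \sum_{i=1}^n \delta_i P^{x}_{tT_i}\bigl(L^{xy}_{t}(T_{i-1},T_i) - S^x\bigr)
\]
and set $V^{xy}_{tT_n}=0$, which immediately yields
\[
S^{xy}_t(T_0,T_n) \,=\, \frac{\sum_{i=1}^n \delta_i P^{x}_{tT_i}\, L^{xy}_{t}(T_{i-1},T_i)}{\sum_{i=1}^n \delta_i P^{x}_{tT_i}},
\]
so that the denominator already has the desired form. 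What remains is to massage the numerator into $P^{xy}_{tT_0}-P^{xy}_{tT_n}$.

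For this I would invoke Definition \ref{DM-IBOR}, which expresses the multi-curve market-implied forward IBOR as
\[
L^{xy}_t(T_{i-1},T_i) \,=\, \frac{P^{xy}_{tT_i}}{\delta_i P^{x}_{tT_i}}\!\left(\frac{P^{xy}_{tT_{i-1}}}{P^{xy}_{tT_i}}-1\right).
\]
Substituting this into the numerator produces
\[
\delta_i P^{x}_{tT_i}\, L^{xy}_{t}(T_{i-1},T_i) \,=\, P^{xy}_{tT_{i-1}} - P^{xy}_{tT_i},
\]
and summing from $i=1$ to $n$ collapses the sum telescopically to $P^{xy}_{tT_0}-P^{xy}_{tT_n}$, which delivers exactly formula (\ref{xy-fair-IRS-rate-process}).

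As a consistency check I would observe that the same conclusion follows directly from the alternative representation (\ref{xy-IRS2}) given in the remark: setting that expression to zero gives the analogous fair rate in terms of $S^y$ via the $xy$-annuity $\sum_i \delta_i P^{xy}_{tT_i}$, and converting to $S^x$ through the weighting $S^x = S^y(\sum_i \delta_i P^{x}_{tT_i} Q^{xy}_{tT_i})/(\sum_i \delta_i P^{x}_{tT_i})$ introduced in the proof of Theorem \ref{xy-IRS-theorem}, together with the identity $P^{xy}_{tT_i} = P^x_{tT_i} Q^{xy}_{tT_i}$ from Remark \ref{Rem-xyZCB}, reproduces the same annuity in the denominator. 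There is no genuine obstacle here; the only point requiring a moment of care is the bookkeeping between the $x$-market and $y$-market representations of the fixed leg, which is handled cleanly once one uses $P^{xy}_{tT_i} = P^x_{tT_i} Q^{xy}_{tT_i}$ to pass between the two forms of the annuity.
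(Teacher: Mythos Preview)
Your proof is correct. Your primary route is actually a slightly more direct variant than the paper's: you start from the $S^x$-form of the IRS value in Theorem \ref{xy-IRS-theorem}, set it to zero, and telescope the floating leg using Definition \ref{DM-IBOR} to obtain the numerator $P^{xy}_{tT_0}-P^{xy}_{tT_n}$ in one step. The paper instead begins from the $S^y$-representation (\ref{xy-IRS2}), solves for $S^y$ with the $xy$-annuity in the denominator, and then converts $S^y$ to $S^x$ via the weighting formula from the proof of Theorem \ref{xy-IRS-theorem} together with $P^{xy}_{tT_i}=P^{x}_{tT_i}Q^{xy}_{tT_i}$. Your ``consistency check'' is precisely the paper's argument, so you have in fact covered both derivations; the telescoping route buys you a cleaner one-line simplification of the floating leg, while the paper's route makes the $y$-market/$x$-market conversion of the fixed leg explicit.
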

\begin{proof}
Setting the value of the developed market IRS equal to zero, given by Eq. (\ref{xy-IRS2}), it follows that the $y$-market fair fixed IRS rate is $S^y = \left(P^{xy}_{tT_0}  - P^{xy}_{tT_{n}}\right)/\sum_{i=1}^n \delta_i  P^{xy}_{tT_{i}}$ at time $t$. Using the proof of Theorem \ref{xy-IRS-theorem} and Remark \ref{Rem-xyZCB}, the $x$-market fair fixed IRS rate (converting the $y$-market rate) is given by $S^x = S^y(\sum_{i=1}^n \delta_i P^{x}_{tT_i} Q^{xy}_{tT_i})/(\sum_{i=1}^n \delta_i P^{x}_{tT_i})$ at time $t$. Then for any time $t \leq T_0$, the result for the developed market fair IRS rate follows accordingly by setting $S^{xy}_t(T_0,T_n) = S^x$.
\end{proof}

For a brief treatment of bootstrapping in a developed market, we here refer to Appendix \ref{App-Boot-DM}.

\subsection{Consistent multi-curve discounting in emerging markets}

Now that we have a good understanding of how the xy-formalism enables the modelling of multi-curve interest rate systems in  \textit{developed markets}, we may consider resolving the same problem for the case of an \textit{emerging market}. Our first hurdle in moving from a \textit{developed} to an \textit{emerging market} setting is the non-existence of the OIS curve. 

Recall that we have assumed the existence of a collection of interest rate curves indexed by $x, y = 0,1,2, \ldots,n$ where we refer to the $x$-curve as the \textit{discounting curve} and the $y$-curve as the \textit{forecasting curve}. In a common \textit{developed market}, $n=4$ with $0$ denoting the nominal OIS curve, $1$ the 1-month IBOR curve, $2$ the 3-month IBOR curve, $3$ the 6-month IBOR curve and $4$ the 12-month IBOR curve. Moreover, the stochastic evolution of each of these curves are modelled via a pricing kernel process $(h_t^y)$ which are in turn calibrated using liquid linear and non-linear interest rate market instruments. In a common \textit{emerging market}, only one IBOR tenor is usually tradable and liquid, therefore it is not possible to calibrate the entire set of pricing kernel processes $(h_t^y)$ which span the common developed interest rate market. This leads us to the following remark.
\begin{Remark}\label{Est-EM-PKs-prop} 
In the common emerging market, only one IBOR tenor, $y^*$, is tradable and liquid thereby enabling the specification and calibration of a well-defined pricing kernel process $(h_t^{y^*})$. Pricing kernel processes for all other IBOR tenors $(\widehat{h}_t^{y})$ are to be estimated statistically (or otherwise) as a suitable functional form of  $(h_t^{y^*})$, i.e.
\begin{equation}\label{Est-EM-PK}
\widehat{h}_t^{y} = f\left( h_t^{y^*} \right) .
\end{equation}
where $f:\mathbb{R}^+ \rightarrow \mathbb{R}^+$ is measurable and adapted, such that the corresponding estimated $y$-ZCB (and $y$-curve) systems, $(\widehat{P}_{tT}^{y})$, may be constructed via
\begin{equation}
\widehat{P}_{tT}^{y} = \frac{1}{\widehat{h}_t^{y}} \E\left[\widehat{h}^{y}_T\,\vert\,\F_t\right] = \frac{1}{f\left( h_t^{y^*} \right)} \E\left[f\left( h_T^{y^*} \right)\,\vert\,\F_t\right] ,
\end{equation}
for $0 \leq t \leq T$.
\end{Remark}
In Remark \ref{Est-EM-PKs-prop}, if the function $f(\cdot)$ is linear, then the estimated $y$-ZCB is given by
\begin{equation}
\widehat{P}_{tT}^{y} = \frac{1}{f\left( h_t^{y^*} \right)} f\left( P_{tT}^{y^*} \right) ,
\end{equation}
which implies that it is possible to directly replicate the estimated $y$-ZCB through either a static or dynamic replication strategy using the $y^*$-ZCB. However, this may not be possible, in general, if the function $f(\cdot)$ is convex (concave), as the estimated $y$-ZCB will be governed by the following inequality
\begin{equation}
\widehat{P}_{tT}^{y} \geq (\leq) \frac{1}{f\left( h_t^{y^*} \right)} f\left( P_{tT}^{y^*} \right) ,
\end{equation}
which follows by the application of Jensen's inequality. The xy-formalism may now be applied in the \textit{emerging market} setting, assuming the existence of a collection of interest rate curves, indexed by $x, y = 0,1,\ldots, y^*,\ldots, n$, that are modelled by the calibrated pricing kernel process $(h_t^{y^*})$ and the set of estimated pricing kernel processes $(\widehat{h}_t^y;y \neq y^*)$. First, we consider the \textit{developed market} FRA within the \textit{emerging market} context, i.e. one where the payoff is forecasted by the $y$-curve and then discounted by the $x$-curve. The terminal and in-advance FRA payoffs remain unchanged and are identical to Eqs (\ref{xy-FRA-payoff}) and (\ref{xy-FRA-advance-payoff}), respectively\footnote{Assuming that one insists on maintaining measurability of the payoff at the IBOR reset time $T_{i-1}$.}, with the FRA price process also assuming the familiar form
\begin{eqnarray}\label{xy-EM-IBORFRA}
 V^{xy}_{tT_i} &=& \delta_i P^{xy}_{tT_i} \left(L^{y}_{t}(T_{i-1},T_i)-K^{y} \right) \nn \\
	     &=& P^{xy}_{tT_{i-1}} - (1+ \delta_i K^y)P^{xy}_{tT_{i}},
\end{eqnarray}
for $t \in [0,T_{i-1}]$, while $L^{y}_{t}(T_{i-1},T_i)$ continues to be the correct forward IBOR process. Notice that the derivation of equation (\ref{xy-EM-IBORFRA}) follows by a direct application of Corollary \ref{Qxy-dual-coro}.

\begin{Definition}\label{EM-IBOR}
The multi-curve emerging market $y$-tenored forward IBOR process is given by
\begin{equation}\label{xy-EM-IBOR-mart}
L^{y}_t(T_{i-1},T_i) =  \frac{1}{\delta_i} \left( \frac{P^{y}_{tT_{i-1}}}{P^{y}_{tT_i}} - 1\right),
\end{equation}
for $t \in [0,T_{i-1}]$, unlike the developed market which required the definition of the market-implied $y$-tenored forward IBOR process $L^{xy}_{t}(T_{i-1},T_i) := Q^{xy}_{tT_i} L^{y}_{t}(T_{i-1},T_i)$ for $t \in [0,T_{i-1}]$.
\end{Definition}
This is due to the fact that there is currently no market standard for pricing an \textit{emerging market} FRA that is forecasted and discounted under different curves, with the only observable market quantity being the spot IBOR process $L^{y}_{t}(t,t+\delta)$ for $t \geq 0$. It is also possible, as in the case of \textit{developed markets}, to define a fair FRA rate process, $K_t^{xy}(T_{i-1},T_i) = L_t^{y}(T_{i-1},T_i)$, however one would not be able to observe this quantity in the market (since these FRAs are not traded, in general), therefore this would be a model-implied quantity\footnote{If the $y$-tenored IBOR corresponds to the most liquid and tradable tenor, i.e. $y=y^*$, then one will also have access to the set of forward IBOR processes $L^{y}_{t}(T_{i-1},T_i)$ for $0 \leq t \leq T_{i-1}$, from the standard and liquidly tradable set of single-curve \textit{emerging market} FRAs, and $K^{xy}_t(T_{i-1},T_i) = K^{yy}_t(T_{i-1},T_i) = L^{y}_t(T_{i-1},T_i)$.}.
\

Similarly, we may consider the standard \textit{developed market} IRS in the context of an \textit{emerging market}. The value of the IRS at some time $t \leq T_0$, making use of the same relations as before, is again given by
\begin{eqnarray} \label{xy-EM-IRS}
V^{xy}_{tT_n}
&=& \sum_{i=1}^n \delta_i P^{xy}_{tT_i} \left(L^{y}_{t}(T_{i-1},T_i) - S^y \right) \nn\\
&=&P^{xy}_{tT_0}  - P^{xy}_{tT_{n}} - S^y \sum_{i=1}^n \delta_i P^{xy}_{tT_{i}},
\end{eqnarray} 
where $L^{y}_{t}(T_{i-1},T_i)$, for $t \in [0,T_{i-1}]$, continues to be the correct forward IBOR process, analagous to the FRA result. As with the FRA, the fair IRS rate process is model-implied (unless the $y$-tenored IBOR process is the tradable tenor and $t=T_0=0$) and given by $S_t^{xy}(T_0,T_n)=(P^{xy}_{tT_0}  - P^{xy}_{tT_{n}})/(\sum_{i=1}^n \delta_i  P^{xy}_{tT_{i}})$ for $t \leq T_0$.
\

In a multi-curve \textit{emerging market} interest rate system, within the xy-framework, the initial (estimated) $y$-ZCB systems may be constructed in a completely analogous fashion to the single-curve \textit{emerging market} relations, see Appendix \ref{App-C}, since $K_0^{xy}(T_{i-1},T_i) = L_0^{y}(T_{i-1},T_i)$ and $S_0^{xy}(0,T_n)=(1  - P^{y}_{0T_{n}})/(\sum_{i=1}^n \delta_i  P^{y}_{0T_{i}})$. That is, all initial model-implied quantities are only dependent on the $y$-curve or $y$-ZCB system.
\

If we consider a FRA and an IRS within this context with payoffs forecasted by the $y^*$-curve and discounted by one of the other curves, denoted by the $x$-curve, then the pricing formulae are given by
\begin{equation}\label{xy*-EM-IBORFRA}
 V^{xy^*}_{tT_i} = P^{xy^*}_{tT_{i-1}} - (1+ \delta_i K^{y^*})P^{xy^*}_{tT_{i}},
\end{equation}
and
\begin{equation} \label{xy*-EM-IRS}
V^{xy^*}_{tT_n} = \left[P^{xy^*}_{tT_0}  - P^{xy^*}_{tT_{n}} \right] - S^y \sum_{i=1}^n \delta_i P^{xy^*}_{tT_{i}},
\end{equation} 
from Eqs (\ref{xy-EM-IBORFRA}) and (\ref{xy-EM-IRS}), respectively. At this juncture, it is important to note that the $xy^*$-ZCB, $(P_{tT}^{xy^*})$, plays the same role as the $y^*$-ZCB, $(P_{tT}^{y^*})$, does in the single-curve emerging market setting in Section \ref{DS-EM}. This leads us to the following definition for the $xy$-ZCB system, in general.

\begin{Definition}\label{quanto-bond}
In the multi-curve interest rate system derived within the xy-framework, the $xy$-ZCB system, $(P_{tT}^{xy})$, defined by
\begin{equation*}
P^{xy}_{tT}=\frac{1}{h^{x}_t}\E\left[h^{x}_T \left(\frac{h^{y}_T}{h^{x}_T}\right)(1) \,\big\vert\,\F_t\right] = P_{tT}^{x}Q_{tT}^{xy} = Q_{tt}^{xy} P_{tT}^{y},
\end{equation*}
may be considered to be a quanto-bond assuming
\begin{enumerate}
	\item[(i)] the $x$-curve with varying notional defined by the forward conversion factor $Q_{tT}^{xy}$; or 
	\item[(ii)] the $y$-curve with varying notional defined by the spot conversion factor $Q_{tt}^{xy}$.
\end{enumerate}
\end{Definition}

\begin{Remark}
Within the developed market context---where the nominal OIS curve is considered to be the distinct, single-curve tradable system, which we shall denote here as the $x^*$-curve---one may dynamically replicate $y$-ZCBs and $x^*y$-ZCBs, where $y \neq x^*$, via the following set of $x^*$-curve quanto-bonds
\begin{equation*}
P_{tT}^{y} = \frac{Q_{tT}^{x^*y}}{Q_{tt}^{x^*y}}P_{tT}^{x^*} \;\; \mathrm{and} \;\; P_{tT}^{x^*y} = Q_{tT}^{x^*y}P_{tT}^{x^*} ,
\end{equation*}
whereas, within the emerging market context where one nominal IBOR swap curve is considered to be the distinct, single-curve tradable system, which we have denoted as the $y^*$-curve, one may dynamically replicate $x$-ZCBs and $xy^*$-ZCBs, where $x \neq y^*$, via the following set of $y^*$-curve quanto-bonds
\begin{equation*}
P_{tT}^{x} = \frac{Q_{tt}^{xy^*}}{Q_{tT}^{xy^*}}P_{tT}^{y^*} \;\; \mathrm{and} \;\; P_{tT}^{xy^*} = Q_{tt}^{xy^*}P_{tT}^{y^*} .
\end{equation*}
\end{Remark}

\section{xy-HJM multi-curve models}\label{xy-HJM-Sec}

In this section we develop Heath-Jarrow-Morton (HJM) multi-curve interest rate systems based on the xy-formalism introduced in this paper. The xy-HJM multi-curve system will be derived using results from Section \ref{PK-Multi}.
\

We consider the filtered probability space $(\Omega, \F, (\F_t), \PR)$ where $(\F_t)_{0\le t}$ is the filtration generated by two sets of independent multi-dimensional $\PR$-Brownian motions $(W_t)_{t \ge 0}$ and $(Z_t)_{t \ge 0}$, respectively. Being synonymous with the xy-formalism, we consider an economy with two distinct markets, $x$ and $y$, where $x$ may be interpreted as a proxy default-free OIS-based market and $y$ as a risky IBOR-based market. Furthermore, we assume that the $x$- and $y$-markets are driven by the multi-dimensional $\PR$-Brownian motions $(W_t^x)_{t \ge 0} = (W_t)_{t \ge 0}$ and $(W_t^y)_{t \ge 0} = (W_t, Z_t)_{t \ge 0}$ respectively, where $(W_t)_{t \ge 0}$ is $n$-dimensional and $(Z_t)_{t \ge 0}$ is $m$-dimensional. This allows us to define the pricing kernel process associated with each market.
\begin{Definition}
The $(\F_t)$-adapted $x$- and $y$-market pricing kernel processes $(h^x_t)_{0\le t}$ and $(h^y_t)_{0\le t}$ satisfy, respectively,
\begin{align}\label{xy-HJM-PK}
&\frac{\rd h^x_t}{h^x_t}=-r^x_t\rd t - \lambda^x_t\rd W^x_t,& &\frac{\rd h^y_t}{h^y_t}=-r^y_t\rd t - \lambda^y_t\rd W^y_t,&
\end{align}
where $(r^x_t)_{t \ge 0}$ and $(r^y_t)_{t \ge 0}$ are the short rates of interest; and $(\lambda^x_t)_{t \ge 0}$ and $(\lambda^y_t)_{t \ge 0} = (\lambda^x_t, \lambda^z_t)_{t \ge 0}$ are the $n$- and $(n+m)$-dimensional market price of risk processes associated with the $x$- and $y$-markets, respectively.
\end{Definition}
Next, let $(X_{tT})_{0\le t\le T}$ and $(Y_{tT})_{0\le t\le T}$ be (well-defined) processes, respectively satisfying
\begin{align}\label{xy-HJM-marts}
\nonumber &\frac{\rd X_{tT}}{X_{tT}}= \left(-A_{tT}^x + \frac{1}{2} \left| \Sigma^x_{tT} \right|^2 \right) \rd t - \left(\Sigma^x_{tT} + \lambda^x_t \right) \rd W^x_t,\\
&\frac{\rd Y_{tT}}{Y_{tT}}= \left(-A_{tT}^y + \frac{1}{2} \left| \Sigma^y_{tT} \right|^2 \right) \rd t - \left(\Sigma^y_{tT} + \lambda^y_t \right) \rd W^y_t,&
\end{align}
for $0 \le t \le T$, where $A^{(\cdot)}_{tT} = \int_t^T a^{(\cdot)}_{tu} \rd u$ is $1$-dimensional, $| \; \cdot \; |$ denotes the Euclidean norm, and $\Sigma^{(\cdot)}_{tT} = \int_t^T \sigma^{(\cdot)}_{tu} \, \rd u$ with $\Sigma^{x}_{tT}$ and $\Sigma^{y}_{tT}$ being $n$- and $(n+m)$-dimensional, respectively. The processes $(\sigma_{tT}^{x})$, $(\sigma_{tT}^{y}) = (\sigma_{tT}^{w}, \sigma_{tT}^{z})$ and $(a^{(\cdot)}_{tT})$ are generic adapted processes satisfying the implicit integrability conditions, with $\sigma_{tT}^w$ and $\sigma_{tT}^z$ being $n$- and $m$-dimensional, respectively. We may then define the respective ZCB prices as follows:

\begin{Definition}\label{xy-HJM-ZCB-defn}
Setting $P^{x}_{tT} := X_{tT} / h^{x}_t$ and $P^{y}_{tT} := Y_{tT} / h^{y}_t$, the $(\F_t)$-adapted $x$- and $y$-market ZCB-systems satisfy, respectively, the dynamical equations
\begin{eqnarray}\label{xy-HJM-ZCB}
\nonumber \frac{\rd P^x_{tT}}{P^x_{tT}}&=&\left(r^x_t -A_{tT}^x + \frac{1}{2} \left| \Sigma^x_{tT} \right|^2 - \lambda^x_t \Sigma^x_{tT} \right)\rd t - \Sigma^x_{tT}\rd W^x_t,\\
\frac{\rd P^y_{tT}}{P^y_{tT}}&=&\left(r^y_t -A_{tT}^y + \frac{1}{2} \left| \Sigma^y_{tT} \right|^2 - \lambda^y_t \Sigma^y_{tT} \right)\rd t - \Sigma^y_{tT}\rd W^y_t,
\end{eqnarray}
following the application of Ito's Lemma. Invoking the classical HJM drift condition, $A_{tT}^{(\cdot)} = \frac{1}{2} |\Sigma^{(\cdot)}_{tT}|^2$, results in $(h^{(\cdot)}_t P^{(\cdot)}_{tT})_{0 \le t \le T}$ being a $\PR$-(local) martingale which is a requirement for the xy-HJM framework. 
\end{Definition}
\begin{Proposition}
Assuming that the $x$- and $y$-market ZCB-systems are differentiable in $T$, the instantaneous forward rate processes $(f^x_{tT})_{0\le t\le T}$ and $(f^{y}_{tT})_{0\le t\le T}$, respectively defined by $f^{x}_{tT}=-\partial_T\ln\left(P^{x}_{tT}\right)$ and $f^{y}_{tT}=-\partial_T\ln\left(P^{y}_{tT}\right)$, satisfy
\begin{eqnarray}
\nonumber \rd f^x_{tT}&=&\left(a^x_{tT} + \lambda^x_t \sigma^x_{tT} \right)\rd t + \sigma^x_{tT}\rd W^x_t,\\
\rd f^y_{tT}&=&\left(a^y_{tT} + \lambda^y_t \sigma^y_{tT} \right)\rd t + \sigma^y_{tT}\rd W^y_t,
\end{eqnarray}
which are consistent with the classical HJM instantaneous forward rate model.
\end{Proposition}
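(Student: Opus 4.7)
The plan is to apply It\^o's formula to $\ln P^{(\cdot)}_{tT}$ using the ZCB dynamics given in Definition \ref{xy-HJM-ZCB-defn}, and then differentiate the resulting expression with respect to $T$. Since the derivation for the $x$- and $y$-markets is identical up to the dimensionality of the driving Brownian motions and volatility vectors, I only outline the $x$-case; the $y$-case follows by replacing every superscript $x$ with $y$ and using the $(n+m)$-dimensional objects $\lambda^y_t$, $\Sigma^y_{tT}$, $\sigma^y_{tT}$ and $W^y_t$ throughout.

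First I would apply It\^o's formula to the dynamics of $P^x_{tT}$ in Definition \ref{xy-HJM-ZCB-defn}, observing that the quadratic variation contribution $\tfrac{1}{2}|\Sigma^x_{tT}|^2 \rd t$ cancels the corresponding term in the drift, yielding
\begin{equation*}
\rd \ln P^x_{tT} = \left(r^x_t - A^x_{tT} - \lambda^x_t \Sigma^x_{tT}\right)\rd t - \Sigma^x_{tT}\rd W^x_t.
\end{equation*}
Integrating from $0$ to $t$ produces the integral representation
\begin{equation*}
\ln P^x_{tT} = \ln P^x_{0T} + \int_0^t \left(r^x_s - A^x_{sT} - \lambda^x_s \Sigma^x_{sT}\right)\rd s - \int_0^t \Sigma^x_{sT}\,\rd W^x_s.
\end{equation*}

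Next I would differentiate with respect to the maturity $T$. Exploiting $\partial_T r^x_s = 0$, and invoking the fundamental theorem of calculus for $A^x_{sT}=\int_s^T a^x_{su}\rd u$ and $\Sigma^x_{sT}=\int_s^T \sigma^x_{su}\rd u$ to obtain $\partial_T A^x_{sT}=a^x_{sT}$ and $\partial_T \Sigma^x_{sT}=\sigma^x_{sT}$, the definition $f^x_{tT} = -\partial_T \ln P^x_{tT}$ gives
\begin{equation*}
f^x_{tT} = f^x_{0T} + \int_0^t \left(a^x_{sT} + \lambda^x_s \sigma^x_{sT}\right)\rd s + \int_0^t \sigma^x_{sT}\,\rd W^x_s,
\end{equation*}
whose differential form is precisely the asserted HJM equation for $\rd f^x_{tT}$.

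The principal technical obstacle lies in the interchange of $\partial_T$ with the Lebesgue and, especially, stochastic integrals. The former follows by dominated convergence under the implicit integrability assumptions on $a^x_{sT}$ and $\lambda^x_s\sigma^x_{sT}$. The latter requires a stochastic Fubini argument: writing $\int_0^t \Sigma^x_{sT}\rd W^x_s = \int_0^t\!\int_s^T \sigma^x_{su}\rd u\,\rd W^x_s$ and exchanging the order of integration justifies $\partial_T \int_0^t \Sigma^x_{sT}\rd W^x_s = \int_0^t \sigma^x_{sT}\rd W^x_s$, provided $\sigma^x_{sT}$ enjoys enough joint regularity and $L^2$-integrability in $(s,T)$. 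These conditions are precisely the ``implicit integrability conditions'' already imposed on the generic adapted processes $(\sigma^{(\cdot)}_{tT})$ and $(a^{(\cdot)}_{tT})$, so the interchange is legitimate and the proof concludes.
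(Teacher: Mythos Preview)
Your proposal is correct and follows essentially the same route as the paper: apply It\^o's formula to $\ln P^{(\cdot)}_{tT}$, integrate, and then differentiate in $T$ using $\partial_T A^{(\cdot)}_{sT}=a^{(\cdot)}_{sT}$ and $\partial_T \Sigma^{(\cdot)}_{sT}=\sigma^{(\cdot)}_{sT}$. If anything, you are more careful than the paper in flagging the dominated-convergence and stochastic-Fubini justifications for swapping $\partial_T$ with the time and stochastic integrals, which the paper's proof leaves implicit.
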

\begin{proof}
By direct application of Ito's Lemma, the logarithm of the ZCB price process is
\begin{equation}
\ln\left(P^{(\cdot)}_{tT} \right) = \ln\left(P^{(\cdot)}_{0T} \right) + \int_0^t \left(r_s - A^{(\cdot)}_{sT} - \lambda^{(\cdot)}_s \Sigma^{(\cdot)}_{sT} \right) \rd s - \int_0^t \Sigma^{(\cdot)}_{sT} \rd W_s^{(\cdot)}, 
\end{equation}
and therefore taking the negative and differentiating with respect to $T$ gives
\begin{equation}
-\frac{\partial }{\partial T}\ln\left(P^{(\cdot)}_{tT} \right) = \int_0^t \left( a^{(\cdot)}_{sT} + \lambda^{(\cdot)}_s \sigma^{(\cdot)}_{sT} \right) \rd s + \int_0^t \sigma^{(\cdot)}_{sT} \rd W_s^{(\cdot)}, 
\end{equation}
which yields the required instantaneous forward rate result.
\end{proof}

Grbac \& Runggaldier \cite{gr} provide a thorough account of the approaches that have been adopted in modeling a developed market multi-curve interest rate system with the HJM framework. Here we reprise the key results, given the economy that has already been introduced in this section, in order to contextualise the xy-HJM framework within the existing body of literature. Grbac \& Runggaldier \cite{gr} note that all approaches that have been adopted model the $x$-market ZCB-system $(P^x_{tT})$ with the classical HJM model while the multi-curve market-implied $y$-tenored forward IBOR process, which we denote here by $\overline{L}^{xy}_t(T_{i-1},T_i)$, is modeled in one of three ways:
\begin{enumerate}
	\item[(i)] $\overline{L}^{xy}_t(T_{i-1},T_i)$ is specified in an ad hoc fashion (usually) inspired by the LIBOR Market Model (LMM) such that this approach is referred to as a \textit{hybrid HJM-LMM};
	\item[(ii)] $\overline{L}^{xy}_t(T_{i-1},T_i) :=( v^y_t(T_{i-1},T_i) - 1)/\delta_i$ where, as before, $v^y_t(T_{i-1},T_i) := P^y_{tT_{i-1}}/P^y_{tT_{i}}$ defines the FCF such that under certain parameter restrictions (see Proposition \ref{modelii} below) $(h_t^x P^x_{tT_i} v^y_t(T_{i-1},T_i))_{0 \le t \le T_{i-1}}$ is a $\PR$-(local) martingale; and
	\item [(iii)] $\overline{L}^{xy}_t(T_{i-1},T_i) :=\E[ h^x_{T_i} L^y_{T_{i-1}}(T_{i-1},T_{i}) \,\big\vert\,\F_t]/(h^x_t P^x_{tT_i})$ assuming the classical HJM drift condition for the $y$-market ZCB system.
\end{enumerate}
In each approach $(h_t^x P^x_{tT_i} \overline{L}^{xy}_t(T_{i-1},T_i))_{0 \le t \le T_{i-1}}$ is a $\PR$-(local) martingale, as required. Model (i) is inconsistent with our approach, since our focus is on modeling ZCB-systems directly and implying simple spot and forward rate models therefrom, therefore we merely make note of (i) for completeness. Models (ii) and (iii) are comparable to our approach, therefore we expand upon them below.
\begin{Proposition}\label{modelii}
If the following parameter restrictions hold:
\begin{equation}\label{modelii-condition}
A^y_{tT_i} - A^y_{tT_{i-1}} = -\frac{1}{2}\left|\Sigma^y_{tT_i} - \Sigma^y_{tT_{i-1}} \right|^2 + \Sigma^x_{tT_i}\left(\Sigma^w_{tT_i}-\Sigma^w_{tT_{i-1}} \right) -\lambda^z_t \left(\Sigma^z_{tT_i} - \Sigma^z_{tT_{i-1}}\right),
\end{equation}
then the process $(h^x_t P^x_{tT_i} v^y_t(T_{i-1},T_i))_{0 \le t \le T_{i-1}}$ a $\PR$-(local) martingale, thereby enabling the use of Model (ii).
\end{Proposition}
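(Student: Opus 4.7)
The plan is to compute the $\PR$-drift of the process $M_t := h^x_t\, P^x_{tT_i}\, v^y_t(T_{i-1},T_i)$ via Itô's product rule and show that this drift vanishes if and only if condition \eqref{modelii-condition} holds. I would proceed in the following stages.

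First, observe that by Definition \ref{xy-HJM-ZCB-defn} we have $h^x_t P^x_{tT_i}=X_{tT_i}$, and that under the HJM drift condition $A^x_{tT}=\tfrac{1}{2}|\Sigma^x_{tT}|^2$ (assumed throughout the xy-HJM framework) the dynamics in \eqref{xy-HJM-marts} collapse to $\rd X_{tT_i}/X_{tT_i}=-(\Sigma^x_{tT_i}+\lambda^x_t)\,\rd W^x_t$, so $X_{tT_i}$ is a driftless $\PR$-(local) martingale driven only by $W^x=W$. Next, since $v^y_t(T_{i-1},T_i)=P^y_{tT_{i-1}}/P^y_{tT_i}=Y_{tT_{i-1}}/Y_{tT_i}$, I would apply Itô's formula to this ratio using the $Y_{tT}$ dynamics in \eqref{xy-HJM-marts}. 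Carrying out the algebra (and using the algebraic identity $\tfrac{1}{2}(|\Sigma^y_{tT_{i-1}}|^2-|\Sigma^y_{tT_i}|^2)+(\Sigma^y_{tT_i}-\Sigma^y_{tT_{i-1}})\Sigma^{y\top}_{tT_i}=\tfrac{1}{2}|\Sigma^y_{tT_i}-\Sigma^y_{tT_{i-1}}|^2$ to collapse the quadratic terms) yields
\begin{equation*}
\frac{\rd v^y_t}{v^y_t}=\Bigl[A^y_{tT_i}-A^y_{tT_{i-1}}+\tfrac{1}{2}|\Sigma^y_{tT_i}-\Sigma^y_{tT_{i-1}}|^2+(\Sigma^y_{tT_i}-\Sigma^y_{tT_{i-1}})\lambda^{y\top}_t\Bigr]\rd t+(\Sigma^y_{tT_i}-\Sigma^y_{tT_{i-1}})\,\rd W^y_t.
\end{equation*}

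Then I would invoke Itô's product rule for $M_t=X_{tT_i}\,v^y_t$. Since $X_{tT_i}$ is a local martingale, the drift of $M_t/M_t$ is the drift of $v^y_t/v^y_t$ plus the cross-variation rate $\rd[X_{\cdot T_i},v^y_{\cdot}]_t/(X_{tT_i}v^y_t)$. Using the splittings $W^y=(W,Z)$, $\Sigma^y=(\Sigma^w,\Sigma^z)$ and $\lambda^y=(\lambda^x,\lambda^z)$ (so that only the $W$-components contribute to the covariation with $X_{tT_i}$), this cross term equals $-(\Sigma^x_{tT_i}+\lambda^x_t)(\Sigma^w_{tT_i}-\Sigma^w_{tT_{i-1}})^\top$. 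I would now decompose $(\Sigma^y_{tT_i}-\Sigma^y_{tT_{i-1}})\lambda^{y\top}_t=(\Sigma^w_{tT_i}-\Sigma^w_{tT_{i-1}})\lambda^{x\top}_t+(\Sigma^z_{tT_i}-\Sigma^z_{tT_{i-1}})\lambda^{z\top}_t$ and observe the key cancellation: the $\lambda^x$-piece exactly matches the $\lambda^x(\Sigma^w_{tT_i}-\Sigma^w_{tT_{i-1}})^\top$-piece coming from the cross-variation, so these two terms drop out of the total drift.

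Collecting what remains gives the drift of $M_t$ as
\begin{equation*}
M_t\Bigl[A^y_{tT_i}-A^y_{tT_{i-1}}+\tfrac{1}{2}|\Sigma^y_{tT_i}-\Sigma^y_{tT_{i-1}}|^2+(\Sigma^z_{tT_i}-\Sigma^z_{tT_{i-1}})\lambda^{z\top}_t-\Sigma^x_{tT_i}(\Sigma^w_{tT_i}-\Sigma^w_{tT_{i-1}})^\top\Bigr]\rd t,
\end{equation*}
and setting this to zero reproduces \eqref{modelii-condition} verbatim, proving that the restriction is both necessary and sufficient for $M_t$ to be a $\PR$-(local) martingale. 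I expect no conceptual obstacle—the main risk of error is purely notational bookkeeping: keeping the $W$ versus $Z$ components of $W^y$, the Euclidean inner products $|\,\cdot\,|^2$, and the sign conventions of the diffusion coefficients straight throughout. A secondary subtlety is ensuring the HJM drift condition is only invoked for the $x$-market (making $X_{tT_i}$ a local martingale), while no analogous condition is imposed on $A^y_{tT}$ individually—only on the spread $A^y_{tT_i}-A^y_{tT_{i-1}}$, which is precisely the content of Model (ii).
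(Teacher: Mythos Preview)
Your proposal is correct and follows essentially the same approach as the paper: both proofs compute the It\^o dynamics of $h^x_t P^x_{tT_i} v^y_t(T_{i-1},T_i)$ and read off the drift, with the paper applying It\^o's lemma directly to the triple product via Eqs.~(\ref{xy-HJM-PK}) and (\ref{xy-HJM-ZCB}) while you first factor it as $X_{tT_i}\cdot v^y_t$ and handle the two pieces separately before taking the product---a purely organisational difference. Your explicit remark that the $x$-market HJM drift condition is invoked (to make $X_{tT_i}$ driftless) while no individual restriction on $A^y_{tT}$ is imposed is accurate and a useful clarification that the paper leaves implicit.
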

\begin{proof}
Applying Ito's Lemma to $h^x_t P^x_{tT_i} v^y_t(T_{i-1},T_i)$, using Eqs (\ref{xy-HJM-PK}) and (\ref{xy-HJM-ZCB}), we have
\begin{eqnarray}
\nonumber & & \frac{\rd \left(h^x_t P^x_{tT_i} v^y_t(T_{i-1},T_i)\right)}{h^x_t P^x_{tT_i} v^y_t(T_{i-1},T_i)} \\
\nonumber &=& \left[A^y_{tT_i} - A^y_{tT_{i-1}} +\frac{1}{2}\left|\Sigma^y_{tT_i} - \Sigma^y_{tT_{i-1}} \right|^2 - \Sigma^x_{tT_i}\left(\Sigma^w_{tT_i}-\Sigma^w_{tT_{i-1}} \right) +\lambda^z_t \left(\Sigma^z_{tT_i} - \Sigma^z_{tT_{i-1}}\right) \right]\rd t  \\
	& & + \left(\Sigma^w_{tT_i} - \Sigma^w_{tT_{i-1}} - \Sigma^x_{tT_i} - \lambda^x_t  \right) \rd W^x_t + \left(\Sigma^z_{tT_i} - \Sigma^z_{tT_{i-1}}\right) \rd Z_t,
\end{eqnarray}
from which it follows that the required martingale condition is achieved only if Eq. (\ref{modelii-condition}) is enforced.
\end{proof}

\begin{Remark}
In Grbac \& Runggaldier \cite{gr}, both the $x$- and $y$-markets have the same sources of risk, i.e. are driven by the same set of Brownian motions, which resolves the parameter restrictions to
\begin{equation}
A^y_{tT_i} - A^y_{tT_{i-1}} = -\frac{1}{2}\left(\Sigma^y_{tT_i} - \Sigma^y_{tT_{i-1}} \right)^2 + \Sigma^x_{tT_i}\left(\Sigma^y_{tT_i}-\Sigma^y_{tT_{i-1}} \right),
\end{equation}
for $0 \le t \le T_{i-1}$. 
\end{Remark}
Model (iii) requires one to compute a conditional expectation, $\E[ h^x_{T_i} L^y_{T_{i-1}}(T_{i-1},T_{i}) \,\big\vert\,\F_t]$, which is possible given our model choices, i.e. Eqs (\ref{xy-HJM-PK}) and (\ref{xy-HJM-ZCB}), along with the classical HJM drift condition applied to the $y$-market ZCB-system. Note that in Grbac \& Runggaldier \cite{gr}, this model is justified by analogies to credit and foreign exchange modeling. Their model setup leads to two different parameter restrictions, depending on which analogy is assumed. Our pricing kernel-based HJM setup leads to a unique parameter restriction (the classical HJM drift condition for the $y$-market ZCB-system) which subsumes both analogies, since our setup does not require us to specify an exchange rate process in an ad hoc exogenous manner. This may be seen in the following proposition, recalling the results from Proposition \ref{Qxy-Prop}.

\begin{Proposition}
The xy-HJM framework's forward curve-conversion factor process $(Q^{xy}_{tT})_{0\le t\le T}$ satisfies
\begin{equation}
\frac{\rd Q^{xy}_{tT}}{Q^{xy}_{tT}}= (\Sigma^x_{tT} - \Sigma^w_{tT})  (\Sigma^x_{tT} \rd t + \lambda^x_t \rd t + \rd W^x_t) - (\Sigma^z_{tT} + \lambda^z_t) \rd Z_t ,
\end{equation}
while the spot curve-conversion factor process $(Q^{xy}_{tt})_{t\ge 0}$ satisfies
\begin{equation} 
\frac{\rd Q^{xy}_{tt}}{Q^{xy}_{tt}}=\left(r^x_t-r^y_t\right)\rd t  + \lambda^x_t \rd W^x_t -\lambda^y_t \rd W^y_t,
\end{equation}
where $\lambda^x_t \rd W^x_t -\lambda^y_t \rd W^y_t = -\lambda_t^z \rd Z_t$.
\end{Proposition}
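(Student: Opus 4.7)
The plan is to reduce both claims to straightforward applications of Itô's lemma to a ratio of two Itô processes, exploiting the factorisation from Definition \ref{xy-HJM-ZCB-defn}: since $P^{x}_{tT}=X_{tT}/h^{x}_{t}$ and $P^{y}_{tT}=Y_{tT}/h^{y}_{t}$, Definition \ref{Qxy-Prop} gives
\begin{equation*}
Q^{xy}_{tT}=\frac{h^{y}_{t}P^{y}_{tT}}{h^{x}_{t}P^{x}_{tT}}=\frac{Y_{tT}}{X_{tT}}.
\end{equation*}
Under the classical HJM drift condition $A^{(\cdot)}_{tT}=\tfrac{1}{2}|\Sigma^{(\cdot)}_{tT}|^{2}$ imposed in Definition \ref{xy-HJM-ZCB-defn}, the dynamics (\ref{xy-HJM-marts}) collapse to driftless SDEs so that $X_{tT}$ and $Y_{tT}$ are $\PR$-local martingales. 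Using $W^{y}_{t}=(W_{t},Z_{t})$, $\Sigma^{y}_{tT}=(\Sigma^{w}_{tT},\Sigma^{z}_{tT})$ and $\lambda^{y}_{t}=(\lambda^{x}_{t},\lambda^{z}_{t})$ to split the $y$-market noise, one has
\begin{equation*}
\frac{\rd X_{tT}}{X_{tT}}=-(\Sigma^{x}_{tT}+\lambda^{x}_{t})\,\rd W^{x}_{t},\qquad \frac{\rd Y_{tT}}{Y_{tT}}=-(\Sigma^{w}_{tT}+\lambda^{x}_{t})\,\rd W^{x}_{t}-(\Sigma^{z}_{tT}+\lambda^{z}_{t})\,\rd Z_{t}.
\end{equation*}

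Next, I would apply the standard ratio formula from Itô's lemma,
\begin{equation*}
\frac{\rd Q^{xy}_{tT}}{Q^{xy}_{tT}}=\frac{\rd Y_{tT}}{Y_{tT}}-\frac{\rd X_{tT}}{X_{tT}}+\frac{\rd\langle X\rangle_{t}}{X_{tT}^{2}}-\frac{\rd\langle X,Y\rangle_{t}}{X_{tT}Y_{tT}}.
\end{equation*}
The diffusion part collects to $(\Sigma^{x}_{tT}-\Sigma^{w}_{tT})\rd W^{x}_{t}-(\Sigma^{z}_{tT}+\lambda^{z}_{t})\rd Z_{t}$, while the Itô-correction terms simplify, using $|\Sigma^{x}_{tT}+\lambda^{x}_{t}|^{2}-(\Sigma^{w}_{tT}+\lambda^{x}_{t})\cdot(\Sigma^{x}_{tT}+\lambda^{x}_{t})=(\Sigma^{x}_{tT}-\Sigma^{w}_{tT})\cdot(\Sigma^{x}_{tT}+\lambda^{x}_{t})$, to the announced drift $(\Sigma^{x}_{tT}-\Sigma^{w}_{tT})(\Sigma^{x}_{tT}+\lambda^{x}_{t})\rd t$. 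Factoring out $(\Sigma^{x}_{tT}-\Sigma^{w}_{tT})$ from the $\rd t$ and $\rd W^{x}_{t}$ contributions reproduces the stated expression.

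For the spot conversion factor $Q^{xy}_{tt}=h^{y}_{t}/h^{x}_{t}$, I would apply the same ratio formula directly to the pricing kernel dynamics (\ref{xy-HJM-PK}). The first-order contributions immediately yield $(r^{x}_{t}-r^{y}_{t})\rd t+\lambda^{x}_{t}\rd W^{x}_{t}-\lambda^{y}_{t}\rd W^{y}_{t}$. The key observation is that the Itô-correction $|\lambda^{x}_{t}|^{2}\rd t-(\lambda^{x}_{t},\mathbf{0})\cdot(\lambda^{x}_{t},\lambda^{z}_{t})\rd t$ vanishes identically, because the $W$-block of $\lambda^{y}$ coincides with $\lambda^{x}$ and there is no $Z$-driver in $h^{x}$. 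The claimed identity $\lambda^{x}_{t}\rd W^{x}_{t}-\lambda^{y}_{t}\rd W^{y}_{t}=-\lambda^{z}_{t}\rd Z_{t}$ is then a direct consequence of the block structure of $\lambda^{y}$ and $W^{y}$.

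The computation is essentially routine; the only place requiring care is the bookkeeping of dot products when the $x$- and $y$-noise vectors have different dimensions. Both results really hinge on two structural facts: that $Q^{xy}_{tT}$ is the ratio of two genuine $\PR$-local martingales once the HJM drift condition is enforced (so the drift comes entirely from the quadratic variation), and that the $y$-market pricing kernel is driven by the $x$-market Brownian motion through exactly the same risk-premium component $\lambda^{x}$, which is what produces the perfect cancellation of the Itô-correction in the spot case.
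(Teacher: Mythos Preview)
Your proposal is correct and follows exactly the approach of the paper's proof, which simply observes that $Q^{xy}_{tT}=Y_{tT}/X_{tT}$ and $Q^{xy}_{tt}=h^{y}_{t}/h^{x}_{t}$ and then appeals to It\^o's lemma applied to Eqs.~(\ref{xy-HJM-marts}) and (\ref{xy-HJM-PK}). Your write-up in fact supplies the explicit quadratic-variation bookkeeping that the paper leaves implicit, and your identification of the two structural facts driving the result---the driftlessness of $X_{tT},Y_{tT}$ under the HJM condition, and the shared $\lambda^{x}$-block between $h^{x}$ and $h^{y}$---is exactly the point.
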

\begin{proof} 
Using the definition of the conversion factor, Eq. (\ref{xy-conv-factor}), along with Definition \ref{xy-HJM-ZCB-defn}, observe that $Q^{xy}_{tT} = Y_{tT}/X_{tT}$ while $Q^{xy}_{tt} = h^y_t/h^x_t$. The result then follows by a straightforward application of Ito's Lemma using Eqs. (\ref{xy-HJM-marts}) and (\ref{xy-HJM-PK}).
\end{proof}

\begin{Remark}
By the Girsanov Theorem, it is straightforward to show that there is a multi-dimensional $\Q_x$-Brownian motion $(W^{\Q_x}_t)$ that satisfies $\rd W^{\Q_x}_t=\lambda^x_t\rd t + \rd W^x_t$ upon changing measure from $\PR$ to the $x$-market risk-neutral measure $\Q_x$. Moreover there is also a multi-dimensional $\Q_x^T$-Brownian motion $(W^{\Q_x^T}_t)$ that satisfies $\rd W^{\Q_x^T}_t = \Sigma^x_{tT} \rd t + \rd W^{\Q_x}_t$ upon changing measure from $\Q_x$ to the $x$-market $T$-forward measure $\Q_x^T$.
\end{Remark}

In this paper we have proposed the xy-formalism for multi-curve interest rate modeling, and in turn advocated model structures for both multi-curve emerging and developed market forward IBOR processes (see Definitions \ref{EM-IBOR} and \ref{DM-IBOR}, respectively). We document these multi-curve forward IBOR processes within the xy-HJM context in the next definition.

\begin{Definition}
Within the xy-HJM framework, the multi-curve emerging market $y$-tenored forward IBOR process is given by
\begin{equation}
L^y_t(T_{i-1},T_i) = \frac{1}{\delta_i}\left(v^y_t(T_{i-1},T_i) - 1\right),
\end{equation}
with the FCF process, $v^y_t(T_{i-1},T_i)$, satisfying
\begin{equation}
\frac{\rd v^y_t(T_{i-1},T_i)}{v^y_t(T_{i-1},T_i)} =  \left(\Sigma^y_{tT_i} - \Sigma^y_{tT_{i-1}}\right) \left(\Sigma^y_{tT_i} \rd t +\lambda^y_t  \rd t + \rd W^y_t \right),
\end{equation}
for $0 \le t \le T_{i-1}$, such that the process $(h_t^y P^y_{tT_i} v^y_t(T_{i-1},T_i))_{0 \le t \le T_{i-1}}$ is a $\PR$-(local) martingale. The multi-curve developed market $y$-tenored forward IBOR process is given by
\begin{equation}
L^{xy}_t(T_{i-1},T_i) = Q^{xy}_{tT_i}L^y_t(T_{i-1},T_i) = \frac{1}{\delta_i}\left(v^{xy}_t(T_{i-1},T_i) - Q^{xy}_{tT_i}\right),
\end{equation}
with the converted FCF process, $v^{xy}_t(T_{i-1},T_i):=Q^{xy}_{tT_i} v^y_t(T_{i-1},T_i)$, satisfying
\begin{equation}
\frac{\rd v^{xy}_t(T_{i-1},T_i)}{v^{xy}_t(T_{i-1},T_i)} = \left(\Sigma^x_{tT_i} - \Sigma^w_{tT_{i-1}}\right) \left(\Sigma^x_{tT_i} \rd t +\lambda^x_t \rd t + \rd W^x_t\right)  - (\Sigma^z_{tT_{i-1}} + \lambda^z_t) \rd Z_t,
\end{equation}
for $0 \le t \le T_{i-1}$, such that the process $(h_t^x P^x_{tT_i} v^{xy}_t(T_{i-1},T_i))_{0 \le t \le T_{i-1}}$ is a $\PR$-(local) martingale.
\end{Definition}

We note that $(h_t^x P^x_{tT_i} L^{xy}_t(T_{i-1},T_i))_{0 \le t \le T_{i-1}}$ is also a $\PR$-(local) martingale, however the multi-curve developed market $y$-tenored forward IBOR process does not have an elegant differential representation as it is essentially the difference between two stochastic processes, these being the converted FCF process and the curve-conversion factor process.

\begin{Remark}
The only parameter restrictions required by the xy-HJM framework are the classical HJM drift conditions for both the $x$- and $y$-market ZCB systems. Therefore model (iii), as presented in Grbac \& Runggaldier \cite{gr}, is also a viable model for the developed market forward IBOR process, albeit an unnatural one given the incompatibility between the $x$-market pricing kernel ($h^x_t$) and the $y$-market forward IBOR process ($L^y_t(T_{i-1},T_i)$). Another viable model within the xy-HJM framework is that of Nguyen \& Seifried \cite{ns}, given by equation (\ref{alt-xy-IBOR}), however recall the observations in Remark \ref{alt-XY} regarding this model.
\end{Remark}
In the next section, rational multi-curve models are introduced. Such models, and in particular those produced in Section \ref{grmcm}, provide a rich class of flexible and tractable specifications for xy-HJM multi-curve models and associated spread dynamics.

\section{Rational multi-curve models}\label{R-Sec}

As reported in Grbac \& Runggaldier \cite{gr}, multi-curve rational interest rate models based on the pricing kernel approach have appeared in Cr\'epey et al. \cite{cmns} and in Nguyen \& Seifried \cite{ns}. The multi-curve approach proposed by Cr\'epey et al. \cite{cmns} assumes a discount bond system associated with an overnight-indexed swap (OIS) market and introduces a (forward) LIBOR process that has a built-in spread when compared to the OIS rate. The OIS-based discount bond price system, which in our setup would correspond to the $x$-curve ZCB price system, is generated by pricing kernel models driven by stochastic factors. The (forward) LIBOR process is derived by pricing a forward rate agreement (FRA) written on the LIBOR. The factor-based model of the multi-curve (forward) LIBOR process is then deduced from the no-arbitrage relation the FRA price process is required to satisfy. The LIBOR model turns out to be a rational function(al) of stochastic drivers that is given in units of the OIS pricing kernel proxy. Thus, whenever the LIBOR dynamics depend on an idiosyncratic driving factor (not affecting the OIS pricing kernel proxy), an OIS-LIBOR spread is generated that depends on a spread-idiosyncratic stochastic factor. The source of the spread can be readily read off from the expression of the LIBOR model owing to the transparency of the multi-curve approach brought forward. Given that the OIS-LIBOR spread is obtained by focusing on how the offer rate is modelled, we refer to Cr\'epey et al. \cite{cmns}, and also the Nguyen \& Seifried \cite{ns}, as a {\it rate-based modelling approach}.
\subsection{Hybrid rational-LMM multi-curve models}
A feature that is rather telling in understanding the structure of multi-curve models, and thus helps in their classification, is the nature of the discount and the forecasting curve, respectively. In the multi-curve models by Cr\'epey et al. \cite{cmns}, the term-structure of the discount (OIS-based) curve is constructed by a rational model. However, the LIBOR model is postulated in a rather ad-hoc manner and ensues directly from modelling the payoff of the forward rate agreement written on it. Similar to the hybrid HJM-LMM models in Section \ref{xy-HJM-Sec}, the forecasting curve (i.e. LIBOR-based term structure) is constructed akin to LIBOR market models. This is why we refer to Cr\'epey et al. \cite{cmns}, and to some extent also to Nguyen \& Seifried \cite{ns}, as {\it rational-LMM hybrid models}. Next, we establish the relations between these models and the framework presented in this paper.
\begin{Proposition}
Let $K(t;T_{i-1},T_i)$ be the value at time $t\in[0,T_{i-1}]$ of the fair FRA rate obtained in Cr\'epey et al. \cite{cmns}, Section 2.1. Then it holds that $K(t;T_{i-1},T_i)=K^{xy}_t(T_{i-1},T_i)$, where $K^{xy}_t(T_{i-1},T_i)$ is determined by Eq. (\ref{xy-FRA-rate}).
\end{Proposition}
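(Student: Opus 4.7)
The plan is to show that the fair FRA rate in Cr\'epey et al.~\cite{cmns} and the fair FRA rate in the xy-framework coincide by matching the underlying conditional expectations against the pricing-kernel representation of the $xy$-ZCB. The starting point is the defining condition of the fair FRA rate in \cite{cmns}: $K(t;T_{i-1},T_i)$ is the unique $\F_t$-measurable strike that nulls the time-$t$ value of the FRA when discounting is performed with the OIS pricing kernel $(\pi_t)$. Writing this out gives
\[
K(t;T_{i-1},T_i)=\frac{\E\!\left[\pi_{T_i}\,L_{T_{i-1}}(T_{i-1},T_i)\,\big\vert\,\F_t\right]}{\E\!\left[\pi_{T_i}\,\vert\,\F_t\right]},
\]
where $L_{T_{i-1}}(T_{i-1},T_i)$ is the spot LIBOR rate modelled in~\cite{cmns} as a rational functional of driving factor processes. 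Translating their OIS kernel $\pi$ into our notation $h^x$, and using $h^x_t\,P^x_{tT_i}=\E[h^x_{T_i}\,\vert\,\F_t]$, the right-hand side becomes $\E[h^x_{T_i}\,L_{T_{i-1}}(T_{i-1},T_i)\,\vert\,\F_t]/(h^x_t P^x_{tT_i})$.

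Next I would line this up with the xy-side. By Corollary~\ref{fair-xy-rate} we have $K^{xy}_t(T_{i-1},T_i)=L^{xy}_t(T_{i-1},T_i)$, and Lemma~\ref{xy-IBOR-mart-lemma} together with the tower property (using that $L^{xy}_{T_{i-1}}$ is $\F_{T_{i-1}}$-measurable) yields the analogous representation
\[
K^{xy}_t(T_{i-1},T_i)=\frac{\E\!\left[h^x_{T_i}\,L^{xy}_{T_{i-1}}(T_{i-1},T_i)\,\big\vert\,\F_t\right]}{h^x_t\,P^x_{tT_i}}.
\]
Hence the whole proposition reduces to the pointwise identification $L_{T_{i-1}}(T_{i-1},T_i)=L^{xy}_{T_{i-1}}(T_{i-1},T_i)$ (more precisely, to the equality of the two numerators under the conditional expectation).

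To establish this identification I would unpack the rational LIBOR specification of~\cite{cmns}. Their market-implied LIBOR has the form $L_t(T_{i-1},T_i)=N_t/(h^x_t P^x_{tT_i})$ with $(N_t)$ a $\PR$-martingale built from OIS-shared and idiosyncratic factors. In our framework, Definition~\ref{DM-IBOR} together with Remark~\ref{Rem-xyZCB} gives
\[
L^{xy}_t(T_{i-1},T_i)=\frac{P^{xy}_{tT_{i-1}}-P^{xy}_{tT_i}}{\delta_i P^x_{tT_i}}=\frac{\E\!\left[h^y_{T_{i-1}}-h^y_{T_i}\,\vert\,\F_t\right]}{\delta_i\,h^x_t P^x_{tT_i}},
\]
so that $L^{xy}_t$ is of exactly the rational form $M_t/(h^x_t P^x_{tT_i})$ for the $\PR$-martingale $M_t=\E[h^y_{T_{i-1}}-h^y_{T_i}\,\vert\,\F_t]/\delta_i$. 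The match is then produced by constructing the $y$-pricing kernel $(h^y_t)$ implicit in the \cite{cmns} construction: I would read off the driving factors $(\pi_t,b^i_t)$ used there and set $h^y_t$ so that $\E[h^y_{T_{i-1}}-h^y_{T_i}\,\vert\,\F_t]=\delta_i N_t$, which pins down $h^y$ up to the freedom in $\E[h^y_{T_i}\,\vert\,\F_t]$ (and this latter freedom is fixed by requiring $P^{xy}_{tT_i}$ to agree with the $xy$-ZCB system implicit in the model of~\cite{cmns}). Substituting the resulting martingale into the numerator completes the identification of the two fair FRA rates.

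The main obstacle is precisely this last step: \cite{cmns} never writes down a $y$-pricing kernel, so one has to recognise that their rational LIBOR specification can be re-expressed in pure pricing-kernel form via the $xy$-ZCB $P^{xy}_{sT}=\E[h^y_T\,\vert\,\F_s]/h^x_s$. Once that re-expression is in place the proposition is immediate; the calculation itself is then routine, since both fair rates are then written as the same ratio of conditional expectations against $h^x$.
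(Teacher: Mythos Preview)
Your argument is correct but considerably more elaborate than the paper's. The paper proceeds by a one-line identification: it sets the OIS bond $P_{tT_i}$ of Cr\'epey et al.\ equal to $P^x_{tT_i}$ and then observes that their (discounted) LIBOR process $L(t;T_{i-1},T_i)$, defined in their Eq.~(2.6), is precisely $P^x_{tT_i}\,L^{xy}_t(T_{i-1},T_i)=P^{xy}_{tT_i}\,L^{y}_t(T_{i-1},T_i)$. Since in both setups the fair FRA rate is obtained by dividing this object by the same OIS bond price, the equality $K(t;T_{i-1},T_i)=K^{xy}_t(T_{i-1},T_i)$ is immediate. No $y$-pricing kernel is ever constructed; the match is made at the level of the discounted-LIBOR martingale and its structural role in the FRA valuation, not at the level of the underlying factors.

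Your route---writing both rates as ratios of conditional expectations against $h^x$ and then reverse-engineering an $h^y$ so that $\E[h^y_{T_{i-1}}-h^y_{T_i}\,\vert\,\F_t]$ reproduces the Cr\'epey et al.\ numerator---is sound, and in fact anticipates the \emph{Remark} that follows the proposition in the paper, where an explicit $h^y$ is written down to recover the specific rational class of \cite{cmns}, Section~2.2. But for the proposition itself this is overkill: what you flag as the ``main obstacle'' (that \cite{cmns} never exhibits a $y$-kernel) dissolves once you see that the identification is by role rather than by construction. The benefit of your approach is that it makes the embedding of \cite{cmns} into the $xy$-framework fully explicit at the pricing-kernel level; the benefit of the paper's approach is that it isolates exactly the minimal identification needed for the stated equality.
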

\begin{proof}
By setting $P_{tT_i}=P^x_{tT_i}$, it follows that
\begin{equation}\label{xy-crepey}
L(t;T_{i-1},T_i)=P^{x}_{tT_i}L^{xy}_t(T_{i-1},T_i)
		      =P^{xy}_{tT_i}L^{y}_t(T_{i-1},T_i),
\end{equation}
where $L(t;T_{i-1},T_i)$ is the LIBOR specified in Cr\'epey et al. \cite{cmns}, Eq. (2.6).
\end{proof}
Furthermore, in Section 2.2 of Cr\'epey et al. \cite{cmns}, a particular class of rational LIBOR models is presented that becomes the workhorse, later in the paper. Next we show how such class is obtained within the $xy$-framework.
\begin{Remark}
From the relation (\ref{xy-crepey}) and by recalling that $P^y_{tT_i}=\E[h^y_{T_i}\vert\F_t]/h^y_t$, we deduce that
\begin{equation}\label{crepey-mm}
L(t;T_{i-1},T_i)=\frac{1}{\delta_i\, h^{x}_t}\left(\E\left[h^{y}_{T_{i-1}}\,\big\vert\,\F_t\right]-\E\left[h^{y}_{T_{i}}\,\big\vert\,\F_t\right]\right).
\end{equation}
Next we specify the discounting and forecasting kernels as follows: 
\begin{align}
&h^{x}_t=P^{x}_{0t}+b_1(t)\,A^{(1)}_t,\label{x-pk-rational}\\
&h^{y}_t=P^{y}_{0t}+\bar{b}_2(t)\,A^{(2)}_t+\bar{b}_3(t)\,A^{(3)}_t,
\end{align}
where, for $i=1,2,3$, the processes $(A^{(i)}_t)$ are martingales. The quantities $P^{x}_{0t}$, $P^{y}_{0t}$, $b_1(t)$ and $\bar{b}_i(t)$, $i=2,3$, are suitably chosen deterministic functions. The correspondence to the rational multi-curve LIBOR models by Cr\'epey et al. \cite{cmns}, Section 2.2, is found by setting
\begin{align}\label{xy-crepey-specs}
&L(0;T_{i-1},T_i)=\frac{1}{\delta_i}\left(P^{y}_{0T_{i-1}}-P^{y}_{0T_i}\right),\nn\\
&b_2(T_i,T_{i-1})=\frac{1}{\delta_i}\left[\bar{b}_{2}(T_{i-1})-\bar{b}_{2}(T_{i})\right],
&b_3(T_i,T_{i-1})=\frac{1}{\delta_i}\left[\bar{b}_{3}(T_{i-1})-\bar{b}_{3}(T_{i})\right].
\end{align}
The specifications (\ref{xy-crepey-specs}) cause a slight loss of generality. However, whether in practical terms such specifications are indeed restrictive can be decided once this model class is calibrated to actual market data.
\end{Remark}
We now turn our attention to the rational multi-curve models presented in Nguyen \& Seifried \cite{ns}. They propose to make use of the so-called FX-analogy to motivate pricing kernel models for the spread observed between the OIS rate and LIBOR. In particular in Section 4, Theorem 4.1, a multiplicative spread is considered. The spread is given by the ratio of a conditional expectation of the OIS-based pricing kernel (state-price deflator) and a conditional expectation of a hypothetical pricing kernel. The latter deflator may be associated with a foreign currency, although they distance themselves from such an interpretation, c.f. Section 4.2 of Nguyen \& Seifried \cite{ns}. It is our view that, although the OIS-LIBOR spread is interpreted as a kind of currency exchange rate in their work, the deduced rational multi-curve LIBOR models are of the kind that Cr\'epey et al. \cite{cmns} derive. This is especially so because the rational LIBOR models developed in Nguyen \& Seifried \cite{ns} are {\it rate-based models}---just as those produced by Cr\'epey et al. \cite{cmns}---which relate the OIS forward rate and LIBOR, directly. As they seek to dissociate themselves from the work of Bianchetti \cite{bianchetti}, who, among other authors, unequivocally sticks to the FX-analogy, we show that the rational models in Nguyen \& Seifried \cite{ns} do not need to rely on the FX-analogy in order to be derived within a pricing kernel setup.   
\begin{Proposition}
In Nguyen \& Seifried \cite{ns}, the multi-curve fair FRA rate $L^{\Delta}(t;T,T+\Delta)$ is given by 
\begin{equation}\label{L-ns}
L^{\Delta}(t;T,T+\Delta)=\frac{1}{\Delta}\left(\frac{p(t,T)}{p(t, T+\Delta)}\frac{\E\left[D_T^{\Delta}\vert\,\F_t\right]}{\E\left[D_T\vert\,\F_t\right]}-1\right),
\end{equation}
for $t\in[0,T]$. This model can be obtained by the following specification of the LIBOR process $(L(t;T_{i-1},T_i))_{0\le t\le T_{i-1}}$, for $i=1,2,\ldots,n$, in Cr\'epey et al. \cite{cmns}, Section 2.1, Eq. (2.7):
\begin{equation}\label{crepey-ns}
L(t;T_{i-1},T_i)=\frac{1}{\Delta D_t}\left(\E\left[D^{\Delta}_{T}\,\vert\,\F_t\right]-\E\left[D_{T+\Delta}\,\vert\,\F_t\right]\right),
\end{equation}
where $T_{i-1}=T$ and $T_i=T+\Delta$.
\end{Proposition}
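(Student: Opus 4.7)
The plan is to show that the Nguyen--Seifried formula for $L^{\Delta}(t;T,T+\Delta)$ in equation (\ref{L-ns}) is exactly the fair FRA rate $K^{xy}_t(T,T+\Delta)=L^{xy}_t(T,T+\Delta)$ in the $xy$-framework (Corollary \ref{fair-xy-rate}) under the dictionary $D_t \leftrightarrow h^x_t$, $D_t^{\Delta} \leftrightarrow h^y_t$, $p(t,T) \leftrightarrow P^x_{tT}$, together with the specification (\ref{crepey-ns}). So the task is really an exercise in rewriting the two sides in the same notation and verifying they coincide; the content is algebraic identification, not deep.

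First I would start from the $xy$-identity (\ref{xy-crepey}), namely $L(t;T_{i-1},T_i)=P^x_{tT_i}L^{xy}_t(T_{i-1},T_i)$, which is the bridge between the Cr\'epey et al.\ LIBOR quantity and the $xy$-market forward IBOR. Setting $T_{i-1}=T$ and $T_i=T+\Delta$ and dividing (\ref{crepey-ns}) by $P^x_{t,T+\Delta}$, I obtain
\begin{equation*}
L^{xy}_t(T,T+\Delta)=\frac{1}{\Delta\, D_t\,P^x_{t,T+\Delta}}\Bigl(\E[D^{\Delta}_T\vert\F_t]-\E[D_{T+\Delta}\vert\F_t]\Bigr).
\end{equation*}
Using the pricing kernel representation $D_t\,P^x_{t,T+\Delta}=\E[D_{T+\Delta}\vert\F_t]$ in the denominator, the above simplifies to
\begin{equation*}
L^{xy}_t(T,T+\Delta)=\frac{1}{\Delta}\left(\frac{\E[D^{\Delta}_T\vert\F_t]}{\E[D_{T+\Delta}\vert\F_t]}-1\right).
\end{equation*}

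Second, I would rewrite the right-hand side of (\ref{L-ns}) using the same kernel representation, $p(t,T)/p(t,T+\Delta)=\E[D_T\vert\F_t]/\E[D_{T+\Delta}\vert\F_t]$. Substituting this ratio inside the parentheses of (\ref{L-ns}) makes the common factor $\E[D_T\vert\F_t]$ cancel, leaving precisely $(\E[D^{\Delta}_T\vert\F_t]/\E[D_{T+\Delta}\vert\F_t]-1)/\Delta$, which matches the expression derived in the previous step. Hence $L^{\Delta}(t;T,T+\Delta)=L^{xy}_t(T,T+\Delta)$, and the equality of Nguyen--Seifried's fair FRA rate with that implied by the Cr\'epey et al.\ specification (\ref{crepey-ns}) follows.

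The only mild obstacle is the translation of notation: one must verify that the $D_t^{\Delta}$ of Nguyen--Seifried plays the role of a $y$-pricing kernel (not of a spread multiplier), and that $p(t,T)$ is automatically the $x$-ZCB price, so that the cancellation step is legitimate. Once this dictionary is in place, the argument reduces to two lines of algebra and the use of $D_t\,P^x_{tT}=\E[D_T\vert\F_t]$, as outlined above, completing the proof.
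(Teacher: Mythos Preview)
Your proposal is correct and follows essentially the same approach as the paper: both arguments reduce to the algebraic identification that equating the two fair FRA rates (Nguyen--Seifried's $L^{\Delta}$ and Cr\'epey et al.'s $K=L/P_{tT_i}$) together with the kernel identity $D_t\,p(t,T)=\E[D_T\vert\F_t]$ yields (\ref{crepey-ns}). The only minor difference is direction---you assume (\ref{crepey-ns}) and recover (\ref{L-ns}), while the paper equates the two fair FRA rates and solves for $L(t;T_{i-1},T_i)$ to obtain (\ref{crepey-ns})---but this is the same computation read forwards versus backwards.
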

\begin{proof}
Relation (\ref{crepey-ns}) is directly obtained by equating the fair FRA rate (4.2) in Nguyen \& Seifried \cite{ns} with the fair FRA rate (2.7) in Cr\'epey et al. \cite{cmns}. This shows that the OIS-LIBOR spread models, given in Theorem 4.1 in Nguyen \& Seifried \cite{ns}, do not necessitate the use of the FX-analogy in order to derive (rate-based) multi-curve discounting models in a pricing kernel approach. While $(D_t)$ corresponds to the OIS-associated pricing kernel process $(\pi_t)$ in Cr\'epey et al. \cite{cmns}, there is indeed no reason to identify the process $(D^{\Delta}_t)$ with a fictitious pricing kernel associated with a foreign currency/economy. It may just be viewed as an idiosyncratic component of the LIBOR process. 
\end{proof}
\begin{Remark}
Comparing Eq. (\ref{crepey-ns}) with Eq. (\ref{crepey-mm}) we observe a discrepancy in the way that the conversion to a multi-curve setup is obtained in Nguyen \& Seifried \cite{ns}. The source of such incongruence is discussed in Remark \ref{alt-XY}, (i). The difference is resolved by the following adjustment in the multi-curve model (\ref{L-ns}):
\begin{equation}
L^{\Delta}(t;T,T+\Delta)=\frac{1}{\Delta}\left(\frac{p(t,T)}{p(t, T+\Delta)}\frac{\E\left[D_T^{\Delta}\vert\,\F_t\right]}{\E\left[D_T\vert\,\F_t\right]}-1\cdot Q(t,T+\Delta)\right),
\end{equation}
where, based on to the $xy$-approach, the conversion factor $Q(t,T+\Delta)$, or spread process, is given by
\begin{equation}
Q(t,T+\Delta)=\frac{\E\left[D_{T+\Delta}^{\Delta}\vert\,\F_t\right]}{\E\left[D_{T+\Delta}\vert\,\F_t\right]}.
\end{equation} 
The adjustment allows the model to be derived by a consistent application of the FX-analogy in a pricing kernel setup as produced in the $xy$-approach developed in this paper.
\end{Remark}

\subsection{Pure-rational multi-curve models}\label{grmcm}

Unlike the preceding rational-LMM hybrid multi-curve models, we now consider rational models for both, the discounting curve and the forecasting curve, that is for the ZCB price process $(P^{x}_{tT_i})_{0\le t\le T_i}$ and $(P^{y}_{tT_i})_{0\le t\le T_i}$, respectively, which feature as desirable properties (i) tractability, (ii) transparency of the dependence structure among the risk factors and thus (iii) a good understanding of the resulting model for the spread dynamics between the $x$- (discounting) and the $y$- (forecasting) curves. The rational price models considered by Macrina \cite{am}, and by Cr\'epey et al. \cite{cmns} for multi-curve interest rate modelling in particular, offer the set of properties we require. For the $x$- and $y$-ZCB, we postulate the following:
\begin{align}\label{x&y-zcb}
&P^{x}_{tT_i}
=\frac{P^{x}_{0T_i}\prod^m_{k=1} Z^{x}_k(t,T_i)}{P^{x}_{0t}\prod^m_{k=1} Z^{x}_k(t)},&
&P^{y}_{tT_i}
=\frac{P^{y}_{0T_i}\prod^n_{\ell=1} Z^{y}_\ell(t,T_i)}{P^{y}_{0t}\prod^n_{\ell=1} Z^{y}_\ell(t)},&
\end{align}
where $Z^{x}_k(t,T_i)=(1+b^{x}_k(T_i)\,A^{x}_{t,\,k})$ and $Z^{y}_{\ell}(t,T_i)=(1+b^{y}_{\ell}(T_i)\,A^{y}_{t,\,\ell})$ are taken to be positive processes. The quantities $P^{x}_{0t}$ and $P^{y}_{0T_i}$ are the initial term structures of the $x$ and $y$ ZCBs, $b_{k}$ and $b_{\ell}$ are deterministic functions, and $(A^{x}_{t,\,k})$ and $(A^{y}_{t,\,\ell})$ are martingales with respect to some $(\PR$-equivalent) probability measure. For further (technical) details, we refer to Macrina \cite{am} and Cr\'epey et al. \cite{cmns}. We take a closer look at $(P^{y}_{tT_i})$, although the structural properties of the model also apply to $(P^{x}_{tT_i})$. The return process of the forecasting ZCB is given by 
\begin{equation}
\ln\left(P^{y}_{tT_i}\right)=\ln\left(\frac{P^{y}_{0T_i}}{P^{y}_{0t}}\right)+\sum^n_{\ell=1}\ln\left(\frac{1+b^{y}_{\ell}(T_i)\,A^{y}_{t,\,\ell}}{1+b^{y}_{\ell}(t)\,A^{y}_{t,\,\ell}}\right).
\end{equation}
The associated short rate process $(r^{y}_t)$ is given by
\begin{equation}\label{n-factor_y-rate}
r^{y}_t=-\left(\frac{\partial_t P^{y}_{0t}}{P^{y}_{0t}}+\sum^n_{\ell=1}\theta^{y}_{t,\,\ell}\right),
\end{equation}
where we define the $(A^{y}_{t,\,\ell})$-driven factor component $(\theta^{y}_{t,\,\ell})$ by 
\begin{equation}\label{theta-ry}
\theta^{y}_{t,\,\ell}=\frac{\partial_t b^{y}_{\ell}(t)\,A^{y}_{t,\,\ell}}{1+b^{y}_{\ell}(t)\,A^{y}_{t,\,\ell}}.
\end{equation}
Now let us assume, for the sake of the explanation, that the number $n$ of factor components is given by the particular tenor $y$. So, let $a=1,2,3,\ldots$, $y=3a$ which is the 3-month, 6-month, 9-month, 12-month, etc LIBOR tenor, and $n=a+2$ the number of factor components. For the 1-month LIBOR tenor, we assume that the short rate of the associated 1-month ZCB is driven by two factor components, i.e $n=2$. Then, we have the following additive structure for the short rate model associated with the corresponding forecasting ZCBs:
\begin{align}\label{short-rate-structure}
&y=1\text{-month-tenored ZCB},\ (P^{1}_{tT_i}):& &r^{1}_t=-\left(\frac{\partial_t P^{1}_{0t}}{P^{1}_{0t}}+\theta^{1}_1(t)+\theta^{1}_2(t)\right),&\nn\\
&y=3\text{-month-tenored ZCB},\ (P^{3}_{tT_i})
:& &r^{3}_t=-\left(\frac{\partial_t P^{3}_{0t}}{P^{3}_{0t}}+\theta^{3}_1(t)+\theta^{3}_2(t)+\theta^{3}_3(t)\right),&\nn\\
&\vdots & &\vdots &\nn\\ 
&y=3a\text{-month-tenored ZCB},\ (P^{y}_{tT_i}):& &r^{y}_t=-\left(\frac{\partial_t P^{y}_{0t}}{P^{y}_{0t}}+\sum^{a+2}_{\ell=1}\theta^{y}_\ell(t)\right), \quad a=1,2,3,\ldots&. 
\end{align}
Depending on the specific interbank offer rate market, we could envisage the situation where $\theta^{i}_\ell=\theta^{j}_\ell$ for all $i,j=3a$. This would mean that the various $y$-curves only differed by the number of factor components driving the corresponding short rates (i.e, forecasting ZCBs). We would then have
\begin{align}
&y\text{-month-tenored ZCB},\ (P^{y}_{tT_i}):& &r^{y}_t=-\left(\frac{\partial_t P^{y}_{0t}}{P^{y}_{0t}}+\sum^{a+2}_{\ell=1}\theta^{}_\ell(t)\right), \quad a=1,2,3,\ldots,&
\end{align}
where the short rate model of the 1m-tenored ZCB is recovered by setting $a=0$. From the FRA price process (\ref{xyquanto-FRA}), one sees that the quantity responsible for the consistent transfer from a single-curve to a multi-curve setting is the quanto-bond with price process $(P^{xy}_{tT_i})$. 
\

Next we introduce the multiplicative class of rational models for the ``multi-curve'' quanto-bond price process $(P^{xy}_{tT_i})_{0\le t\le T_i}$:
\begin{equation}
P^{xy}_{tT_i}=\frac{1}{h^{x}_t}\E\left[h^{y}_{T_i}\,\big\vert\,\F_t\right]=\frac{P^{y}_{0T_i}\prod^n_{\ell=1} Z^{y}_\ell(t,T_i)}{P^{x}_{0t}\prod^m_{k=1} Z^{x}_k(t)}.
\end{equation} 
The model for the short rate of interest $(r_t^{xy})_{0\le t}$, associated with the  quanto-bond, is obtained by $r^{xy}_t=-\partial_{T_i}\ln(P^{xy}_{tT_i})\vert_{T=t}$, assuming that the quanto-bond price function is differentiable in its maturity $T_i$. It follows that $r_t^{xy}=r^{y}_t$. One could argue that it is somewhat artificial to introduce the $x$-discounting bond because, after all, the $x$-curve may be a specific $y$-curve. We wish however to allow for more generality: there is no reason why the type of model ought to be the same for the $x$-ZCB and for the $y$-ZCB. It is only for convenience that we here decide to consider the same type of pricing model for both types of bonds. In any case, the discounting curve---identified with the one-day deposit---can be viewed as the $y=0$-forecasting curve in the above setup (\ref{short-rate-structure}):
\begin{align}
&\text{1-day-tenored ZCB},\ P^{x}_{tT_i}=P^{0}_{tT_i}:& &r^{x}_t=r^{0}_t=-\left(\frac{\partial_t P^{0}_{0t}}{P^{0}_{0t}}+\theta^{0}_1(t)\right).&
\end{align}  
A byproduct of the multi-curve modelling approach based on bonds as considered in this paper, is the implicit, or rather emerging, spread models. Within the rational models, the process for the spread between the $y$ and $y+3\textrm{m}$ curves is given by
\begin{equation}
s^{y,y+3m}_{tT_i}=\frac{P^{y+3m}_{tT_i}}{P^{y}_{tT_i}}=\frac{P^{y}_{0t}}{P^{y+3m}_{0t}}\frac{P^{y+3m}_{0T_i}}{P^{y}_{0T_i}}\,\Delta_{a+2}(t,T_i), \quad a=1,2,3,\ldots\ ,
\end{equation}
where the stochastic spread process $(\Delta_{a+2}(t,T_i))_{0\le t\le T_i}$ is given by 
\begin{equation}
\Delta_{a+2}(t,T_i)=\frac{1+b^{y}_{a+2}(T_i)\,A^{y}_{t,\,a+2}}{1+b^{y}_{a+2}(t)\,A^{y}_{t,\,a+2}}.
\end{equation}
We note that the stochastic spread is positive assuming that the rates underlying the tenors are non-negative, see Corollary \ref{Corr-FCF}.
\subsection{Linear-rational term structure models}
Filipovi\'c et al. \cite{flt} introduce the so-called linear-rational term structure (LRTS) models. In this section we show how the multi-curve extension to the LRTS is produced by showing that the LRTS models belong to the more general class introduced in the previous section. We thus prove that (a) the LRTS models belong to the class of models developed in Macrina \cite{am} when an infinite-time horizon is considered, and (b) that the pricing kernel generating the LRTS is a weighted heat kernel (WHK). Pricing kernels generated by WHKs in an infinite time horizon setting are introduced in Akahori et al. \cite{ahtt} and developed in Akahori \& Macrina \cite{akm} in the case tha the WHK is driven by a time-inhomogeneous Markov process. In particular, we shall show that the LRTS models produce bond price processes $(P_{tT})_{0\le t\le T}$ of the form
\begin{equation}\label{bA-models}
P_{tT}=\frac{P_{0T}+b(T)\,A_t}{P_{0t}+b(t)\,A_t},\qquad (0\le t\le T)
\end{equation}
which are identified as a class of Markov functionals. The function $b(t)$, $0\le t\le T$, is deterministic and $(A_t)_{0\le t}$ is a martingale process. The explicit construction of this class of term structure models is presented in Macrina \cite{am}.
\begin{Definition}\label{Def-LRTS} Linear-Rational Term Structure Models, Filipovi\'c et al. \cite{flt}.
\\
Let $(Z_t)_{0\le t}$ denote the multivariate process with state space $E\subset\R^m$ that satisfies the stochastic differential equation
\begin{equation}\label{Z-SDE}
\rd Z_t =\kappa(\theta-Z_t)\rd t + \rd M_t,
\end{equation}
where $\kappa\in\R^{m\times m}$ and $\theta\in\R^m$, and where $(M_t)_{0\le t}$ is an $m$-dimensional martingale. Let $(\zeta_t)_{0\le t}$ denote the pricing kernel process defined by
\begin{equation}\label{LRTS-pk}
\zeta_t=\e^{-\alpha t}\left(\phi+\psi Z_t\right),
\end{equation}
where $\alpha\in\R$, $\phi\in\R$, and $\psi\in\R^m$ such that $\phi+\psi z>0$ for all $z\in E$. The linear-rational term structure, generated by the linear pricing kernel process $(\zeta_t)_{0\le t}$, have zero-coupon bond price processes $(P_{tT})_{0\le t\le T}$ given by
\begin{equation}\label{LRTS-P}
P_{tT}=\e^{-\alpha (T-t)}\frac{\phi + \psi\theta + \psi\e^{(T-t)}\left(Z_t-\theta\right)}{\phi + \psi Z_t},
\end{equation}
where $T$ is the bond maturity date.
\end{Definition}
\begin{Proposition}
The stochastic differential equation (\ref{Z-SDE}) has the unique solution given by
\begin{equation}\label{Z-sol}
Z_t=\e^{-\kappa t}\left(Z_0+\kappa\int^t_0 \e^{\kappa s}\rd s\,\theta\right)+\e^{-\kappa t}A_t.
\end{equation} 
The process $(A_t)_{0\le t}$, defined by
\begin{equation}\label{A-mart}
A_t=\int^t_0\e^{\kappa s}\rd M_s,
\end{equation}
is a martingale.
\end{Proposition}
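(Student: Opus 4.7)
The plan is to verify the explicit formula for $(Z_t)$ by an integration-by-parts argument, establish uniqueness through a standard pathwise comparison, and confirm the martingality of $(A_t)$ by appealing to the stochastic integral of a deterministic bounded integrand against a martingale. Each step is essentially classical once one is careful with the matrix algebra.

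First, I would apply the product rule to $e^{\kappa t} Z_t$, noting that $t \mapsto e^{\kappa t}$ is a deterministic, $C^1$ (matrix-valued) process of bounded variation on compacts, and that $\kappa$ commutes with $e^{\kappa t}$. Substituting the SDE (\ref{Z-SDE}) yields
\begin{equation*}
\rd\!\left(e^{\kappa t} Z_t\right) = \kappa\,e^{\kappa t} Z_t\,\rd t + e^{\kappa t}\,\rd Z_t = \kappa\, e^{\kappa t}\,\theta\,\rd t + e^{\kappa t}\,\rd M_t,
\end{equation*}
since the $Z_t$ terms cancel. Integrating from $0$ to $t$ and using $\int_0^t \kappa e^{\kappa s}\,\rd s = e^{\kappa t} - I$ gives $e^{\kappa t} Z_t = Z_0 + (e^{\kappa t}-I)\theta + A_t$, where $A_t$ is defined by (\ref{A-mart}). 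Premultiplying by $e^{-\kappa t}$ rearranges this into exactly the claimed representation (\ref{Z-sol}). Verification that (\ref{Z-sol}) indeed solves (\ref{Z-SDE}) is then a direct differentiation check, so existence is settled.

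For uniqueness, I would take two solutions $Z^{(1)},Z^{(2)}$ of (\ref{Z-SDE}) sharing the same initial value $Z_0$ and driving martingale $(M_t)$. Their difference $D_t := Z_t^{(1)} - Z_t^{(2)}$ satisfies $\rd D_t = -\kappa D_t\,\rd t$ with $D_0 = 0$, which is a linear homogeneous ODE whose unique solution is $D_t \equiv 0$; this forces $Z_t^{(1)} = Z_t^{(2)}$ for all $t \geq 0$ pathwise. The martingale property of $(A_t)$ then follows from the fact that $s \mapsto e^{\kappa s}$ is deterministic and bounded on every compact $[0,t]$, so the stochastic integral of such an integrand against the martingale $(M_s)$ is a local martingale; under the integrability already implicit in calling $(M_t)$ a martingale on $[0,\infty)$ (square-integrability on bounded intervals), the local martingale is a true martingale.

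There is no serious obstacle here; the argument is essentially bookkeeping. The mildly delicate point is the matrix calculus, specifically that $\kappa$ commutes with $e^{\kappa t}$ (so that $\int_0^t \kappa e^{\kappa s}\rd s = e^{\kappa t} - I$ without an ordering issue) and that the integration by parts applied componentwise to the matrix–vector product $e^{\kappa t} Z_t$ is legitimate because the driver $e^{\kappa t}$ has no quadratic variation. Once those points are flagged, the rest is immediate.
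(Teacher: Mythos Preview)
Your derivation of the solution formula via the product rule on $\e^{\kappa t} Z_t$ and your uniqueness argument match the paper's approach (the paper simply cites ``a straightforward application of Ito's Lemma''). The martingale part is where you diverge: you appeal to the general fact that a stochastic integral of a bounded deterministic integrand against a martingale is a local martingale, then upgrade via assumed square-integrability. The paper instead uses integration by parts to write $A_u-A_s = \e^{\kappa u}M_u - \e^{\kappa s}M_s - \int_s^u \kappa\,\e^{\kappa r} M_r\,\rd r$ pathwise, and then verifies $\E[A_u\,\vert\,\F_s]=A_s$ directly from Fubini and the martingale property of $(M_t)$. Your route is shorter if the stochastic-integration machinery is taken for granted, but note that square-integrability of $(M_t)$ is not among the hypotheses, so your upgrade step is not fully justified as stated; the paper's route needs only $L^1$-integrability (already built into the martingale definition) and thereby sidesteps both the local-to-true martingale upgrade and any extra moment assumption.
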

\begin{proof}
That the mean-reverting process (\ref{Z-sol}) is the unique solution to the SDE (\ref{Z-SDE}) follows from a straightforward application of Ito's Lemma. To show that $(A_t)_{0\le t}$ is a martingale, one remarks that $\E[|\,A_t|]<\infty$ for all $t\ge 0$ and that $\E[A_u\vert\F_s]=A_s$, for $0\le s\le u$. The latter follows by calculating $\E[\int^u_s \rd[\phi(t)M_t]\,\vert\,\F_s]$, where $0\le s\le t\le u$, and by applying Fubini's theorem. One then obtains 
\begin{equation}
\E[A_u\vert\F_s]-A_s=\E[\phi(u)M_u-\phi(s)M_s\,\vert\,\F_s]-\E\left[\int^u_s M_t\partial_t\phi(t)\rd t\,\big\vert\,\F_s\right]=0,
\end{equation}
which completes the proof.
\end{proof}
\begin{Theorem}
The pricing kernel process $(\zeta_t)_{0\le t}$ that generates the linear-rational term structure models, specified in Definition \ref{Def-LRTS}, is given by 
\begin{equation}
\zeta_t=\zeta_0\left[P_{0t}+b(t)\,A_t\right],
\end{equation}
where $\zeta_0=\phi+\psi Z_0$. The positive, deterministic function $(P_{0t})_{0\le t\le T}$ is the initial term structure of the associated $T$-maturity bond system with price process
\begin{equation}\label{bA-P}
P_{tT}=\frac{P_{0T}+b(T)\,A_t}{P_{0t}+b(t)\,A_t}\qquad (0\le t\le T)
\end{equation}
where $P_{0t}$, the deterministic function $b(t)$ and the martingale $(A_t)$ are determined by
\begin{eqnarray}\label{LRTS-P0t}
P_{0t}&=&\frac{\e^{-\alpha t}}{\phi+\psi Z_0}\left[\phi+\psi\e^{-\kappa t}\left(Z_0+\kappa\int^t_0\e^{\kappa s}\rd s\, \theta\right)\right], \qquad 0\le t\le T,\label{P0t}\\
b(t)&=&\frac{\e^{-\alpha t}}{\phi+\psi Z_0}\,\psi\e^{-\kappa t}, \qquad 0\le t\le T,\label{bt}\label{LRTS-b}\\
A_t&=&\int^t_0\e^{\kappa s}\rd M_s, \qquad t\ge 0.\label{At}\label{LRTS-A}
\end{eqnarray}
\end{Theorem}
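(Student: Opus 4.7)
My plan is to derive the claimed representation by direct substitution of the mean-reverting solution for $(Z_t)$ into the LRTS pricing kernel, and then verify the bond-price formula by the martingale property of $(A_t)$. The key insight is that the linear pricing kernel $\zeta_t = \e^{-\alpha t}(\phi + \psi Z_t)$, being an affine functional of a process whose stochastic part is $\e^{-\kappa t} A_t$, naturally rearranges into the additive form $\zeta_0[P_{0t} + b(t)A_t]$ that characterises the models studied in the previous section.

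First I would insert the explicit solution (\ref{Z-sol}) into the definition (\ref{LRTS-pk}) and group the deterministic and stochastic pieces:
\begin{equation*}
\zeta_t = \e^{-\alpha t}\left[\phi + \psi\e^{-\kappa t}\Bigl(Z_0 + \kappa\int_0^t\e^{\kappa s}\rd s\,\theta\Bigr)\right] + \e^{-\alpha t}\psi\e^{-\kappa t}A_t.
\end{equation*}
Dividing and multiplying by $\zeta_0 = \phi + \psi Z_0$, one reads off the candidate expressions (\ref{LRTS-P0t}) and (\ref{LRTS-b}) for $P_{0t}$ and $b(t)$, thereby obtaining $\zeta_t = \zeta_0[P_{0t} + b(t)A_t]$. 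A quick sanity check at $t=0$ gives $P_{00}=1$, as required for an initial term structure.

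Next I would verify the bond-price representation (\ref{bA-P}). Using the standard pricing-kernel formula $P_{tT} = \E[\zeta_T\,\vert\,\F_t]/\zeta_t$, and exploiting the martingale property of $(A_t)$ established in the preceding proposition (so that $\E[A_T\,\vert\,\F_t] = A_t$), one immediately obtains
\begin{equation*}
P_{tT} = \frac{\zeta_0\bigl[P_{0T} + b(T)A_t\bigr]}{\zeta_0\bigl[P_{0t} + b(t)A_t\bigr]} = \frac{P_{0T} + b(T)A_t}{P_{0t} + b(t)A_t},
\end{equation*}
which matches (\ref{bA-P}). To close the loop with Definition \ref{Def-LRTS}, I would then substitute the explicit expressions for $P_{0T}$, $P_{0t}$, $b(T)$ and $b(t)$ into this ratio and simplify; the $\e^{-\alpha t}/(\phi+\psi Z_0)$ factors and the terms linear in $\theta$ collapse to recover precisely the original LRTS bond price formula (\ref{LRTS-P}), confirming consistency.

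The main obstacle is purely notational rather than conceptual: one must be careful with the vector--matrix structure of $\psi\e^{-\kappa t}$ and with the integral $\kappa\int_0^t \e^{\kappa s}\rd s\,\theta$ when verifying that the rearranged form still satisfies $\phi + \psi z > 0$ on $E$, which guarantees positivity of $(\zeta_t)$ and hence of $P_{0t}+b(t)A_t$. Beyond this bookkeeping, the claim follows essentially by inspection, and establishes the LRTS models as a specialisation of the $(b,A)$-class (\ref{bA-models}) introduced by Macrina \cite{am}, paving the way for their multi-curve extension via the rational framework of Section \ref{grmcm}.
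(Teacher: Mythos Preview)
Your argument is correct and, if anything, slightly cleaner than the paper's. You work at the level of the pricing kernel: you substitute the solution (\ref{Z-sol}) directly into $\zeta_t=\e^{-\alpha t}(\phi+\psi Z_t)$, read off $P_{0t}$ and $b(t)$, and then obtain (\ref{bA-P}) in one line from the martingale property of $(A_t)$ via $P_{tT}=\E[\zeta_T\,\vert\,\F_t]/\zeta_t$. The paper, by contrast, works at the level of the bond price: it substitutes (\ref{Z-sol}) into the LRTS formula (\ref{LRTS-P}) for $P_{tT}$, introduces auxiliary functions $\gamma(t,T)$, $\lambda(t,T)$ and $\tilde b(t)$, and then uses the identities $\gamma(0,0)P_{0t}=\gamma(t,t)$ and $\gamma(t,T)+\lambda(t,T)=\gamma(0,T)+\lambda(0,T)$ to rearrange the ratio into the $(b,A)$-form. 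Your route avoids these auxiliary objects and the attendant cancellations, at the cost of invoking the pricing-kernel bond formula explicitly; the paper's route has the minor advantage of verifying the equivalence of the two bond-price expressions without ever taking a conditional expectation. Both establish the same result.
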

\begin{proof}
One direction is straightforward: it suffices to insert (\ref{P0t}), (\ref{bt}) and (\ref{At}) in (\ref{bA-P}) to obtain (\ref{LRTS-P}). The other direction, i.e. beginning from Definition \ref{Def-LRTS}, goes as follows: The solution (\ref{Z-sol}) is inserted in (\ref{LRTS-P}) to obtain
\begin{equation}
P_{tT}=\frac{\e^{-\alpha T}\left[\phi+\psi\e^{-\kappa T}\left(Z_0+\kappa\int^t_0\e^{\kappa s}\rd s\,\theta\right)\right]+\e^{-\alpha T}\psi\kappa\int^t_0\e^{-\kappa(T-s)}\rd s\,\theta+\e^{-\alpha T}\psi\e^{-\kappa T}A_t}{\e^{-\alpha t}\left[\phi+\psi\e^{-\kappa t}\left(Z_0+\kappa\int^t_0\e^{\kappa s}\rd s\,\theta\right)\right]+\e^{-\alpha t}\psi\e^{-\kappa t}A_t}.
\end{equation}
Next, we define the functions $\gamma(t,T)$, $\lambda(t,T)$ and $\tilde{b}(t)$ by
\begin{eqnarray}
\gamma(t,T)&=&\e^{-\alpha T}\left[\phi+\psi\e^{-\kappa T}\left(Z_0+\kappa\int^t_0\e^{\kappa s}\rd s\,\theta\right)\right],\\
\lambda(t,T)&=&\e^{-\alpha T}\psi\kappa\int^t_0\e^{-\kappa(T-s)}\rd s\,\theta,\\
{\tilde{b}(t)}&=&\e^{-\alpha t}\psi\e^{-\kappa t}
\end{eqnarray}
for $t\in[0,T]$, and therewith express the bond price process in the form
\begin{equation}
P_{tT}=\frac{\gamma(t,T)+\lambda(t,T)+\tilde{b}(T)\,A_t}{\gamma(t,t)+\tilde{b}(t)\,A_t}.
\end{equation}
The initial term structure $P_{0t}$, $0\le t\le T$, satisfies the relation
$\gamma(0,0)P_{0t}=\gamma(0,t)+\lambda(0,t)=\gamma(t,t).$
Furthermore, $\gamma(t,T)+\lambda(t,T)-[\gamma(0,T)+\lambda(0,T)]=0$ holds. We thus write 
\begin{equation}
P_{tT}=\frac{\gamma(t,T)+\lambda(t,T)-[\gamma(0,T)+\lambda(0,T)]+\gamma(0,0)P_{0T}+\tilde{b}(T)\,A_t}{\gamma(0,0)P_{0t}+\tilde{b}(t)\,A_t},
\end{equation}
and immediately obtain (\ref{bA-P}) by observing that $b(t)=\tilde{b}(t)/\gamma(0,0)$ for $0\le t\le T$.
\end{proof}
\begin{Corollary}
The Linear-Rational Term Structure models can be expressed in the form 
\begin{equation}\label{barbA-P}
P_{tT}=\frac{P_{0T}\left[1+\bar{b}(T)\,A_t\right]}{P_{0t}\left[1+\bar{b}(t)\,A_t\right]},
\end{equation}
for $0\le t\le T$, where $\bar{b}(t)=b(t)/P_{0t}$. This is the form (\ref{x&y-zcb}) for $m=1$, and thus the necessary basis for the extension to the {\it multi-curve linear-rational term structure models via Theorem \ref{xy-IBOR-FRA-theorem} and Definition \ref{xy-IBOR}, in a developed market, and via Definitions \ref{EM-IBOR} and \ref{quanto-bond} in the emerging market}.
\end{Corollary}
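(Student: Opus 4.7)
The proof plan is short because essentially all the analytic work has already been done in the preceding theorem. The plan is first to take the representation $P_{tT}=(P_{0T}+b(T)A_t)/(P_{0t}+b(t)A_t)$ established in the previous theorem as the starting point, and then to rewrite the numerator and denominator in ``normalised'' form. Concretely, I would factor $P_{0T}$ out of the numerator and $P_{0t}$ out of the denominator, so that
\begin{equation*}
P_{tT}=\frac{P_{0T}\bigl[1+(b(T)/P_{0T})\,A_t\bigr]}{P_{0t}\bigl[1+(b(t)/P_{0t})\,A_t\bigr]}.
\end{equation*}
Setting $\bar{b}(u):=b(u)/P_{0u}$ for $u\in[0,T]$, which is a well-defined positive-denominator deterministic function since $(P_{0u})$ is the strictly positive initial term structure obtained in equation (\ref{LRTS-P0t}), immediately yields (\ref{barbA-P}).

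The second step is to identify this form with the template (\ref{x&y-zcb}) in the single-factor case $m=1$. Writing $Z(t,T):=1+\bar{b}(T)A_t$ and noting that $Z(t,t)=1+\bar{b}(t)A_t$, the representation reads $P_{tT}=(P_{0T}/P_{0t})\cdot Z(t,T)/Z(t,t)$, which is exactly (\ref{x&y-zcb}) with $m=n=1$, $b^{x}_1=\bar{b}$ and $A^{x}_{t,1}=A_t$ (and analogously on the $y$-side). This identification is what justifies the second sentence of the corollary.

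Finally, once the LRTS bond price is placed within the rational class (\ref{x&y-zcb}), the multi-curve extension is a plug-in: I would simply invoke, on the discounting side, Theorem \ref{xy-IBOR-FRA-theorem} together with Definition \ref{DM-IBOR} to obtain the \emph{developed market} multi-curve LRTS FRA price and implied forward IBOR process; and on the forecasting side, Definitions \ref{EM-IBOR} and \ref{quanto-bond} to obtain the \emph{emerging market} counterpart and the associated $xy$-ZCB/quanto-bond $P^{xy}_{tT}$. No further computation is required beyond substituting the specific $\bar{b}$ and $A_t$ prescribed by (\ref{LRTS-b})--(\ref{LRTS-A}).

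There is no real obstacle: the only point that requires a brief check is strict positivity of $P_{0t}$ to legitimise the division by $P_{0t}$ when defining $\bar{b}(t)$, which follows from $\phi+\psi z>0$ on $E$ and the explicit formula (\ref{LRTS-P0t}). Everything else is algebraic rearrangement and pattern-matching with the structure (\ref{x&y-zcb}), so the corollary is best presented as a one-line calculation followed by the observation that the ensuing multi-curve machinery of Sections \ref{PK-Multi} applies verbatim.
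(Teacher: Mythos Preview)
Your proposal is correct and matches the paper's approach exactly: the paper's proof is the single line ``This follows directly from (\ref{bA-P}),'' which is precisely the factor-out-and-rename step you spell out. Your additional remarks on positivity of $P_{0t}$ and the identification with (\ref{x&y-zcb}) for $m=1$ are sound elaborations of what the paper leaves implicit.
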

\begin{proof}
This follows directly from (\ref{bA-P}). 
\end{proof}
\begin{Remark}
We emphasise that the form (\ref{bA-P}), or equivalently (\ref{barbA-P}), shows that the Linear-Rational Term Structure has, by (\ref{LRTS-P0t}), a functionally fully specified initial term structure $P_{0t}$ of bond prices for $t\in[0, T]$. Also, the models (\ref{bA-P}) specified by (\ref{LRTS-P0t})-(\ref{LRTS-A}) produce an example of the larger class (\ref{bA-models}), or equivalently (\ref{barbA-P}), of term structure models that can accommodate unspanned stochastic volatility as considered in Filipovi\'c et al. \cite{flt}, Section C.
\end{Remark}
Next we consider weighted heat kernel processes over an infinite-time horizon, see Akahori et al. \cite{ahtt}, and in particular the case where the propagator is a conditional expectation, as in Akahori \& Macrina \cite{akm} and Macrina \cite{am}. Such weighted heat kernels are used to generate (explicit) pricing kernel processes. The definition that follows provides weighted heat kernels in a multivariate setting. 
\begin{Definition} Let $(X_t)_{0\le t}$ be an $m$-dimensional $(\F_t)$-adapted Markov process, $F(t,x)$ be a vector-valued and deterministic function in $\R^m$, and $w(t,u)$ a matrix-valued deterministic function in $\R^{m\times m}$. Furthermore, let the functions $f_0(t)\in\R$ and $f_1(t)\in\R^m$ be deterministic. The process $(\pi_t)_{0\le t}$ is a weighted heat kernel defined by
\begin{equation}\label{whk}
\pi_t=f_0(t)+f_1(t)\int^{\infty}_0 w(t,u)\E\left[F(t+u,X_{t+u})\,\vert\,\F_t\right]\rd u,
\end{equation}
where $t\wedge u\ge 0$, and $f_0(t)$, $f_1(t)$, $F(t,x)$ and $w(t,u)$ are chosen such that $(\pi_t)$ is a positive and finite (scalar-valued) process.
\end{Definition}
The next statement asserts that the pricing kernel process $(\zeta_t)$ in Filipovi\'c et al. \cite{flt} is a weighted heat kernel and it establishes the relation between $(\zeta_t)$ and the class (\ref{whk}).
\begin{Theorem}
The pricing kernel (\ref{LRTS-pk}) that generates the linear-rational term structure models by Filipovi\'c et al. \cite{flt}, is a special case of the process (\ref{whk}) where the following holds:
\begin{enumerate}
\item Let $(X_t)$ be the Markov process $(Z_t)$ that satisfies (\ref{Z-SDE}).
\item $F(t,X_t)=Z_t$, for all $t\ge 0$.
\item $w(t,u)=\e^{-\beta(t+u)}$, $\beta\in\R^{m\times m}$ invertible where $\beta\kappa=\kappa\beta$ for $\kappa\in\R^{m\times m}$ invertible.
\item The functions $f_0(t)$ and $f_1(t)$ are give by
\begin{eqnarray}\label{f0}
f_0(t)&=&f_1(t)\left[(\beta+\kappa)^{-1}-\beta^{-1}\right]\e^{-\beta t}\theta+\e^{-\alpha t}\phi,\\
f_1(t)&=&\e^{-\alpha t}\psi\e^{\beta t}(\beta+\kappa),\label{f1}
\end{eqnarray}
where $\phi\in\R$, $\alpha\in\R$, $\theta\in\R^m$, $\psi\in\R^m$ and $\beta\in\R^{m\times m}$ with $\beta\kappa=\kappa\beta$ for $\kappa\in\R^{m\times m}$. It is assumed that $(\beta+\kappa)$ is invertible.
\end{enumerate}
\end{Theorem}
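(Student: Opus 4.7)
The plan is to substitute the proposed data into the definition (\ref{whk}) of a weighted heat kernel and show that the resulting process coincides identically with the LRTS pricing kernel $\zeta_t = \e^{-\alpha t}(\phi+\psi Z_t)$. The core ingredient is the conditional expectation of $Z_{t+u}$: from the explicit mean-reverting solution (\ref{Z-sol}) together with the martingale property of $(A_t)$, one immediately obtains
\begin{equation*}
\E\left[Z_{t+u}\,\vert\,\F_t\right]=\e^{-\kappa u}Z_t+(I-\e^{-\kappa u})\theta,
\end{equation*}
so that setting $X_t=Z_t$ and $F(t,Z_t)=Z_t$ turns the inner integrand into an elementary matrix-exponential expression.

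Next, with $w(t,u)=\e^{-\beta(t+u)}$, I would split the $u$-integral into the $Z_t$-part and the $\theta$-part and evaluate
\begin{equation*}
\int^{\infty}_0 \e^{-\beta(t+u)}\,\e^{-\kappa u}\,\rd u\,Z_t=\e^{-\beta t}(\beta+\kappa)^{-1}Z_t,\qquad
\int^{\infty}_0 \e^{-\beta(t+u)}(I-\e^{-\kappa u})\,\rd u\,\theta=\e^{-\beta t}\bigl[\beta^{-1}-(\beta+\kappa)^{-1}\bigr]\theta.
\end{equation*}
Here the commutativity hypothesis $\beta\kappa=\kappa\beta$ is used so that $\e^{-\beta u}\e^{-\kappa u}=\e^{-(\beta+\kappa)u}$ and so that $(\beta+\kappa)^{-1}$, $\beta^{-1}$, $\e^{\pm\beta t}$ and $\e^{\pm\kappa t}$ commute pairwise; invertibility of $\beta$ and of $\beta+\kappa$ (implicit in the theorem's statement) is needed to form these inverses, and the eigenvalue conditions on $\beta$, $\beta+\kappa$ necessary for the improper integrals to converge are part of the tacit integrability requirement on $\pi_t$ in (\ref{whk}).

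Multiplying by $f_1(t)=\e^{-\alpha t}\psi\,\e^{\beta t}(\beta+\kappa)$ and using commutativity, the factors $\e^{\pm\beta t}$ cancel and $(\beta+\kappa)(\beta+\kappa)^{-1}=I$ collapses the $Z_t$-term to $\e^{-\alpha t}\psi Z_t$, while the $\theta$-term becomes $\e^{-\alpha t}\psi\bigl[(\beta+\kappa)\beta^{-1}-I\bigr]\theta$. An identical manipulation applied to the term $f_1(t)[(\beta+\kappa)^{-1}-\beta^{-1}]\e^{-\beta t}\theta$ appearing in $f_0(t)$ yields $\e^{-\alpha t}\psi\bigl[I-(\beta+\kappa)\beta^{-1}\bigr]\theta$, which is the exact negative of the previous $\theta$-term. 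Adding everything and using the remaining summand $\e^{-\alpha t}\phi$ of $f_0(t)$ gives
\begin{equation*}
\pi_t=\e^{-\alpha t}\phi+\e^{-\alpha t}\psi Z_t=\zeta_t,
\end{equation*}
which is the claimed identity.

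The main obstacle is purely bookkeeping: keeping the matrix operations on the correct side (since $\psi$ is a row vector in $\R^m$ and $\beta,\kappa$ are $m\times m$) and consistently exploiting the commutation $\beta\kappa=\kappa\beta$ so that the cross-cancellations between $f_0(t)$ and the $\theta$-piece of $f_1(t)\int w\,\E[\cdot]\,\rd u$ go through cleanly. There is no probabilistic difficulty beyond the computation of $\E[Z_{t+u}\,\vert\,\F_t]$; once the two $\theta$-contributions are recognised as opposites, the identification of $\pi_t$ with $\zeta_t$ is immediate.
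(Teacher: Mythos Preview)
Your proposal is correct and follows essentially the same route as the paper: compute $\E[Z_{t+u}\mid\F_t]$, integrate against $w(t,u)=\e^{-\beta(t+u)}$ using $\beta\kappa=\kappa\beta$ to obtain $(\beta+\kappa)^{-1}\e^{-\beta t}Z_t+[\beta^{-1}-(\beta+\kappa)^{-1}]\e^{-\beta t}\theta$, and then verify that the choices of $f_0(t)$, $f_1(t)$ collapse everything to $\e^{-\alpha t}(\phi+\psi Z_t)$. The only cosmetic difference is that you write the conditional expectation in the compact mean-reverting form $\e^{-\kappa u}Z_t+(I-\e^{-\kappa u})\theta$, whereas the paper keeps it expanded in terms of $Z_0$ and $A_t=\int_0^t\e^{\kappa s}\rd M_s$; the two are of course equivalent and lead to the same integral evaluation.
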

\begin{proof}
One direction is straightforward: It suffices to insert items 1-\,4 into Equation (\ref{whk}) to obtain the pricing kernel process (\ref{LRTS-pk}). In the other direction, that is starting from (\ref{whk}), one makes the initial assumptions that the first and second items shall hold. This leads to 
\begin{equation}
\E\left[Z_{t+u})\,\vert\,\F_t\right]=\e^{-\kappa(t+u)}\left(Z_0+\kappa\int^{t+u}_0\e^{\kappa s}\rd s\,\theta\right)+\e^{-\kappa(t+u)}\int^t_0\e^{\kappa s}\rd M_s.
\end{equation}
Then, by choosing the ansatz given in the third item, one obtains
\begin{equation}
\int^{\infty}_0 w(t,u)\E\left[F(t+u,X_{t+u})\,\vert\,\F_t\right]\rd u=(\beta+\kappa)^{-1}\e^{-\beta t}Z_t+\left[\beta^{-1}-(\beta+\kappa)^{-1}\right]\e^{-\beta t}\theta.
\end{equation}
Thus, the functions $f_0(t)$ and $f_1(t)$ are selected such that the pricing kernel process (\ref{LRTS-pk}) is obtained, that is (\ref{f0}) and (\ref{f1}).
\end{proof}

\section{Inflation-linked and FX pricing}\label{IL-FX}

In this section, we show how the across-curve valuation approach developed in this paper extends to the pricing of other fixed-income financial instruments. The curve-conversion factor process, developed in the present work, may conveniently be applied to the pricing and hedging of inflation-linked and foreign-exchange (FX) securities. In particular, the quanto-bond process $(P^{xy}_{tT})_{0\le t\le T}$ plays an important role in the pricing of hybrid securities, suchlike inflation-linked foreign-exchange products, where consistent asset valuation can still be a challenge. 
\subsection{Inflation-linked pricing}
It is customary in inflation-linked price modelling and hedging to consider two economies, the so-called nominal and real economies. Such a viewpoint matches the $xy$-concept so much so that the curve-conversion factor associated with inflation-linked pricing is obtained with little effort. But this is the strength and appeal we see in this formalism. The nominal  (N), cash-based economy is associated with the $x$-curve, and the real (R), goods/services-based economy is associated with the $y$-curve. So, we set $x=N$ and $y=R$. Next we apply the scheme developed in Sections \ref{CCF} and \ref{PK-Multi} of this paper. 
\

We assume positive pricing kernel processes $(h^N_t)_{0\le t}$ and $(h^R_t)_{0\le t}$ for the nominal and the real economies, respectively. The process $(C_t)_{0\le t}$ of the consumer price index links prices between the nominal and the real economies by
\begin{equation}\label{PLP}
C_t=C_0\frac{h^R_t}{h^N_t},
\end{equation}
where $C_0$ is the base price level at time 0 (not necessarily normalised to one). The price $P^{NR}_{tT}$ at time $t\le T$ of an inflation-linked discount bond with cash flow $C_T$ at maturity $T$ is given by
\begin{equation}\label{PNR-price}
P^{NR}_{tT}=\frac{1}{h^N_t}\,\E\left[h^N_T \frac{C_T}{C_0}\,\vert\,\F_t\right]=\frac{1}{h^N_t}\,\E\left[h^R_T\,\vert\,\F_t\right].
\end{equation}
In the $xy$-formalism, where we recall $x=N$ and $y=R$, we may write the price process $(P^{NR}_{tT})$ in terms of the conversion formula
\begin{equation}\label{PNR-Q}
P^{NR}_{tT}=P^{N}_{tT}\,Q^{NR}_{tT},
\end{equation} 
where $(P^{N}_{tT})$ is the price process of the nominal discount bond, and where
\begin{equation}\label{NR-convfact}
Q^{NR}_{tT}=\frac{\E\left[h^R_T\,\vert\,\F_t\right]}{\E\left[h^N_T\,\vert\,\F_t\right]}\qquad (0\le t\le T)
\end{equation}
is the curve-conversion factor (spread process) linking discounting on the nominal $N$-curve and forecasting on the real $R$-curve. The expression for the quanto-bond (\ref{PNR-price}) can be obtained in a straightforward fashion from Eq. (\ref{xy-generic}) by setting
t = T, thereafter replacing the pricing time $s$ with $t$, and further by setting $x=N$, $y=R$ and $H^R_T=1$. The nominal curve serves as the base-curve; hence the curve-conversion factor process (\ref{NR-convfact}), in the relation (\ref{PNR-Q}), quantifies the number of positions in the nominal $T$-maturity discount bond necessary to replicate the no-arbitrage value at $t\in[0,T]$ of the inflation-linked discount bond with value $P^{NR}_{tT}$ at time $t$. Given that the nominal discount bond $P^{N}_{tT}$ and the inflation-linked discount bond $P^{NR}_{tT}$ are traded sufficiently on a market, one can imply from the market the inflation-linked conversion factor 
\begin{equation}
Q^{NR}_{tT}=\frac{P^{NR}_{tT}}{P^{N}_{tT}}.
\end{equation} 
The pricing formulae for an inflation-linked forward rate agreement (or inflation-linked zero-coupon swap) and for a year-on-year swap contract can be expressed in terms of the conversion factor. The derivations of such pricing formulae follow those for the forward rate agreement and the swap contracts presented in Section \ref{PK-Multi}. Price models for inflation-linked securities, which are based on explicit pricing kernel models---hence, on explicit curve conversion factor processes---have been developed by Dam et al. \cite{DMSS}. Such models feature a high degree of flexibility and good calibration properties.
\subsection{Exchange to foreign currencies}\label{efc}
We consider two currencies $i$ and $j$ in the respective nominal (cash-based) economies $N_i$ and $N_j$. Here we show that the {\it forward foreign exchange rate}, which converts an amount of domestic currency $j$ into an amount of foreign currency $i$ at a fixed future data, is given by today's spot exchange rate multiplied with the appropriate currency conversion factor. We set $x=i$ and $y=j$ in the xy-formalism, see Sections \ref{CCF} and \ref{PK-Multi}. In the following, we abbreviate ``foreign exchange" with ``FX".
\

We denote by $(X^{ij}_t)_{0\le t}$ the spot FX rate process, which converts, e.g., GBP to EUR. We emphasise that the notation $ij$ implies, in this example, ${\rm EUR}/{\rm GBP}$. By $(F^{ij}_{tT})_{0\le t\le T}$ we denote the process of the forward FX rate. We conjecture the following relation:
\begin{equation}\label{FX-Fwd-conj}
F^{ij}_{tT}=X^{ij}_t\, \frac{P^j_{tT}}{P^i_{tT}}.
\end{equation}
Here, $(P^i_{tT})_{0\le t\le T}$ and $(P^j_{tT})_{0\le t\le T}$ are assumed to be the nominal OIS discount bond price processes denominated in the $i$ (EUR) and $j$ (GBP) currencies, respectively. We acknowledge that the correct discount bond price processes, in practice, are those determined by the respective FX basis curves. While these may be easily incorporated into the framework via pricing kernels and associated curve-conversion factor processes, we ignore this fact throughout this section for ease of exposition. We note that $F^{ij}_{tt}=X^{ij}_t$, $t\in[0,T]$. The $i$- and $j$-denominated economies are assumed to be equipped with the respective (nominal) pricing kernel processes $(h^i_t)$ and $(h^j_t)$.
By recalling the price formula of a discount bond, it follows from the conjecture (\ref{FX-Fwd-conj}) that
\begin{equation}\label{FX-fwd}
F^{ij}_{tT}=X^{ij}_t\,\frac{h^i_t\ \E[h^j_T\,\vert\,\F_t]}{h^j_t\ \E\left[h^i_T\,\vert\,\F_t\right]}
=X^{ij}_t\,\frac{h^i_t}{h^j_t}\,Q^{ij}_{tT},
\end{equation}
where the {\it FX conversion factor} $(Q^{ij}_{tT})$ for the currency pair $(i,j)$ has the familiar form
\begin{equation}\label{fwd-fx}
Q^{ij}_{tT}=\frac{\E[h^j_T\,\vert\,\F_t]}{\E\left[h^i_T\,\vert\,\F_t\right]}.
\end{equation}

Next we validate the conjecture (\ref{FX-Fwd-conj}) by pricing an FX forward contract in this setup. We model the spot FX rate process $(X^{ij}_t)$ by
\begin{equation}\label{spot-fx}
X^{ij}_t=X^{ij}_0\ \frac{h^j_t}{h^i_t},
\end{equation}
and, by recalling (\ref{FX-fwd}), we obtain
\begin{equation}\label{fwd-FX-Q}
F^{ij}_{tT}=X^{ij}_t\,\frac{h^i_t}{h^j_t}\,Q^{ij}_{tT}=X^{ij}_0\, Q^{ij}_{tT}.
\end{equation}
This is the relation we would expect to emerge in the xy-approach for the forward FX process. The stochastic price dynamics of the forward FX contract are determined by the ratio of the forecasting curves in the two economies denominated in units of the respective currencies. We shall now see whether the expression (\ref{fwd-FX-Q}) for the forward FX rate is indeed the {\it fair rate} obtained from pricing the FX forward contract. 
\begin{Proposition}
Let $(P^i_{tT})_{0\le t\le T}$ and $(P^j_{tT})_{0\le t\le T}$ be the price processes of the discount bonds denominated in the $i$ and $j$ currencies, respectively. Let $(X^{ij}_t)_{t\ge 0}$ be the spot FX rate process exchanging $j$ currency for $i$ currency at time $t\ge0$. Then, for $0\le t\le T$, the fair forward FX rate is given by
\begin{equation}\label{fair-fwd-FX}
F^{ij}_{tT}=X^{ij}_0\, Q^{ij}_{tT}=X^{ij}_t\, \frac{P^j_{tT}}{P^i_{tT}},
\end{equation}
where $(Q^{ij}_{tT})_{0\le t\le T}$ is the curve-conversion process (\ref{fwd-fx}).
\end{Proposition}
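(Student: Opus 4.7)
The plan is to derive the fair forward FX rate by pricing a standard FX forward contract initiated at time $t\in[0,T]$ using the $i$-market pricing kernel, and then comparing the result with the conjectured expression (\ref{FX-Fwd-conj}). Consider a forward contract entered at $t$ that exchanges one unit of $j$-currency (worth $X^{ij}_T$ units of $i$-currency at maturity $T$) for $F$ units of $i$-currency, so that the terminal payoff in the $i$-market is $X^{ij}_T - F$. Applying the standard no-arbitrage formula (\ref{stdpricingformula}) with the $i$-market pricing kernel $(h^i_t)$ gives the time-$t$ contract value $V_t = \E[h^i_T(X^{ij}_T - F)\,\vert\,\F_t]/h^i_t$.

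First I would substitute the spot FX model (\ref{spot-fx}), $X^{ij}_T = X^{ij}_0\,h^j_T/h^i_T$, so that $h^i_T$ cancels in the first term and the expectation splits cleanly into $X^{ij}_0\,\E[h^j_T\,\vert\,\F_t]/h^i_t - F\,P^i_{tT}$, where the second term is just a scaled $i$-bond. Imposing the fair-rate condition $V_t=0$ and recognising the numerator–denominator ratio that defines the curve-conversion factor in (\ref{fwd-fx}), I obtain $F = X^{ij}_0\,Q^{ij}_{tT}$, which establishes the first equality in (\ref{fair-fwd-FX}).

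For the second equality, I would unpack $Q^{ij}_{tT}=(h^j_tP^j_{tT})/(h^i_tP^i_{tT})$, using the alternative representation in Definition \ref{Qxy-Prop}, and then invoke the spot FX identity (\ref{spot-fx}) at time $t$, $X^{ij}_t = X^{ij}_0\,h^j_t/h^i_t$, to absorb the factor $X^{ij}_0\,h^j_t/h^i_t$ into $X^{ij}_t$. This yields $F^{ij}_{tT} = X^{ij}_t\,P^j_{tT}/P^i_{tT}$, confirming the classical covered-interest-parity form of the forward FX rate and validating the conjecture. The derivation is essentially a currency-analogue of Lemma \ref{xy-Forward-lemma}, and there is no real computational obstacle; the only conceptual input is the postulated link (\ref{spot-fx}) between the spot FX process and the ratio of pricing kernels, which is precisely the ingredient that renders the two-currency economy $(h^i_t,h^j_t)$ jointly arbitrage-free and allows the conversion factor to double as the spot-to-forward FX multiplier.
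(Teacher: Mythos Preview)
Your proposal is correct and follows essentially the same approach as the paper: price an FX forward payoff with the $i$-market pricing kernel, substitute the spot FX relation $X^{ij}_T=X^{ij}_0\,h^j_T/h^i_T$, set the value to zero, and identify the resulting fair strike with $X^{ij}_0\,Q^{ij}_{tT}$. The only cosmetic difference is that the paper normalises the payoff as $X^{ij}_T/X^{ij}_0-K^i$ and casts the computation as an instance of the across-curve formula (\ref{xy-generic}), whereas you work with $X^{ij}_T-F$ and the standard pricing formula directly; the algebra and conclusions are identical.
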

Equation (\ref{fair-fwd-FX}) confirms the expression given in conjecture (\ref{FX-Fwd-conj}). Furthermore, the FX curve-conversion factor process $(Q^{ij}_{tT})_{0\le t\le T}$ can be implied from the quoted forward FX rates and the spot rates on the market, that is, 
\begin{equation}\label{fx-Q}
Q^{ij}_{tT}=\frac{F^{ij}_{tT}}{X^{ij}_0}.
\end{equation}
\begin{proof}
Consider the payoff $V^i_T=X^{ij}_T/X^{ij}_0-K^i$ of an FX forward contract, with expiry date $T>0$ and strike value $K^i$, denominated in $i$-currency. The price process $(V^{i}_{tT})$ of the FX forward contract is given by
\begin{equation}\label{FX-fwd-contract}
V^{i}_{tT}=\frac{1}{h^i_t}\,\E\left[h^i_T\left(X^{ij}_T/X^{ij}_0-K^i\right)\,\big\vert\,\F_t\right].
\end{equation}
This follows as an application of the across-curve formula (\ref{xy-generic}), where one sets $x=i$ and $y=j$, alongside $t=T$ and where the pricing time $s$ is replaced with $t$. Furthermore, $H^j_T=1-K^i/Q^{ij}_{TT}$ shall hold, which is indeed a $j$-currency quantity. The relation $H^{ij}_{tT}=V^i_{tT}$ is obtained where we drop the $j$ superscript in $V^i_{tT}$ to emphasise that the value $V^i_{tT}$ at time $t\in[0,T]$ is given in units of the $i$-currency. Then, by recalling Eq. (\ref{spot-fx}), it follows with ease that 
\begin{equation}
V^{i}_{tT}=P^{ij}_{tT}-K^i P^i_{tT},
\end{equation}
for $t\in[0,T]$.
By setting $V^{i}_{tT}=0$, for all $t\in[0,T]$, we obtain the result stated in the proposition, where $K^i_{tT}=F^{ij}_{tT}/X^{ij}_0=Q^{ij}_{tT}$ is the fair (strike) value for the forward currency-exchange process.
\end{proof}
\subsection{Multi-curve interest rate foreign-exchange hybrid}\label{MCIRFX}
We consider the situation whereby an investor wishes to enter a foreign-currency forward contract on LIBOR. An example might clarify this type of hybrid security. Suppose we consider GBP-based LIBOR, as quoted in the U.K. market. An investor wishes to enter a USD-denominated forward contract written on the GBP-based LIBOR quotes. This exposes the investor to the risk underlying the GBP-LIBOR market and the currency-exchange risk between GBP \& USD. We are interested in deducing the fair forward rate process of the USD-forward contract based on GBP-LIBOR. This is obtained with ease in the xy-approach by combining the results in Sec. \ref{dsdm}, on multi-curve systems in developed markets, and in Sec. \ref{efc}, above.
\begin{Proposition}
Consider $0\le t\le T_{i-1}<T_i$ where $T_i-T_{i-1}$ is the tenor of the GBP-based LIBOR and $T_i$ is the expiry date of the USD-denominated forward contract that is written on the GBP-based LIBOR. The fair forward rate process $K^{x_{\$}\,y_{\pounds}}_t(T_{i-1},T_i)_{0\le t\le T_{i-1}}$ of the USD-denominated forward contract is given by
\begin{equation}\label{K-FX-LIBOR}
K^{x_{\$}\,y_{\pounds}}_t(T_{i-1},T_i)=\frac{F^{{\$}{\pounds}}_{tT_i}}{X^{\$\pounds}_0}L^{x_{\pounds}\,y_{\pounds}}_t(T_{i-1},T_i)=L^{x_{\$}\,y_{\pounds}}_t(T_{i-1},T_i),
\end{equation}
where $X^{\$\pounds}_0$ is the spot USD/GBP exchange rate at $t=0$, the fair forward USD/GBP exchange rate process $(F^{{\$}{\pounds}}_{tT_i})_{0\le t\le T_i}$ is given by the relation (\ref{fair-fwd-FX}) and the GBP-based LIBOR process $L^{x_{\pounds}\,y_{\pounds}}_t(T_{i-1},T_i)_{0\le t\le T_{i-1}}$ is given by Eq. (\ref{xy-IBOR-mart}).
\end{Proposition}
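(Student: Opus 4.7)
The plan is to treat the hybrid contract as a developed-market FRA under the xy-formalism in which the discounting curve is the USD OIS curve ($x=x_{\$}$) and the forecasting curve is the GBP LIBOR-tenor curve ($y=y_{\pounds}$), and then to exhibit the conversion factor $Q^{x_{\$}y_{\pounds}}_{tT_i}$ as a product of an FX conversion factor and an intra-GBP curve conversion factor. The work of translating from the given GBP-LIBOR specification into an $x_{\$}$-market payoff is then immediate from Proposition \ref{XY-gen-Prop}, and the fair-rate computation reduces to invoking Theorem \ref{xy-IBOR-FRA-theorem}.

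First I would specify the contract: at $T_i$ the USD-denominated payoff (per unit nominal) is
\begin{equation*}
V^{x_{\$}y_{\pounds}}_{T_iT_i} = Q^{x_{\$}y_{\pounds}}_{T_{i-1}T_i}\,\delta_i\bigl(L^{y_{\pounds}}_{T_{i-1}}(T_{i-1},T_i)-K^{y_{\pounds}}\bigr),
\end{equation*}
which is the $x_{\$}$-market representation, via the curve-conversion factor of Definition \ref{Qxy-Prop}, of the $y_{\pounds}$-market FRA cash flow. Applying Theorem \ref{xy-IBOR-FRA-theorem} with $x=x_{\$}$ and $y=y_{\pounds}$ gives
\begin{equation*}
V^{x_{\$}y_{\pounds}}_{tT_i}=\delta_i P^{x_{\$}}_{tT_i}\bigl(L^{x_{\$}y_{\pounds}}_t(T_{i-1},T_i)-K^{x_{\$}}\bigr),\qquad 0\le t\le T_{i-1},
\end{equation*}
and Corollary \ref{fair-xy-rate} immediately yields
\begin{equation*}
K^{x_{\$}y_{\pounds}}_t(T_{i-1},T_i)=L^{x_{\$}y_{\pounds}}_t(T_{i-1},T_i),
\end{equation*}
which is the second equality in (\ref{K-FX-LIBOR}).

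The remaining step is the multiplicative decomposition of the conversion factor. From Definition \ref{Qxy-Prop}, inserting the intermediate GBP OIS pricing kernel $h^{x_{\pounds}}$, I would write
\begin{equation*}
Q^{x_{\$}y_{\pounds}}_{tT_i}=\frac{h^{y_{\pounds}}_t P^{y_{\pounds}}_{tT_i}}{h^{x_{\$}}_t P^{x_{\$}}_{tT_i}}=\frac{h^{x_{\pounds}}_t P^{x_{\pounds}}_{tT_i}}{h^{x_{\$}}_t P^{x_{\$}}_{tT_i}}\cdot\frac{h^{y_{\pounds}}_t P^{y_{\pounds}}_{tT_i}}{h^{x_{\pounds}}_t P^{x_{\pounds}}_{tT_i}}=Q^{\$\pounds}_{tT_i}\,Q^{x_{\pounds}y_{\pounds}}_{tT_i}.
\end{equation*}
Combining this with the definition (\ref{xy-IBOR-mart}), $L^{x_{\pounds}y_{\pounds}}_t(T_{i-1},T_i)=Q^{x_{\pounds}y_{\pounds}}_{tT_i}L^{y_{\pounds}}_t(T_{i-1},T_i)$, gives
\begin{equation*}
L^{x_{\$}y_{\pounds}}_t(T_{i-1},T_i)=Q^{x_{\$}y_{\pounds}}_{tT_i}L^{y_{\pounds}}_t(T_{i-1},T_i)=Q^{\$\pounds}_{tT_i}\,L^{x_{\pounds}y_{\pounds}}_t(T_{i-1},T_i),
\end{equation*}
and Equation (\ref{fair-fwd-FX}) of the FX subsection identifies $Q^{\$\pounds}_{tT_i}=F^{\$\pounds}_{tT_i}/X^{\$\pounds}_0$, delivering the first equality in (\ref{K-FX-LIBOR}).

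The only subtle point, and the one I would flag as the main obstacle, is the book-keeping of the three pricing kernels $(h^{x_{\$}}_t),(h^{x_{\pounds}}_t),(h^{y_{\pounds}}_t)$ and the corresponding chain of curve conversions: one must verify that the factorisation $Q^{x_{\$}y_{\pounds}}=Q^{\$\pounds}\cdot Q^{x_{\pounds}y_{\pounds}}$ is consistent with the interpretations used in Sections \ref{dsdm} and \ref{efc}, namely that $Q^{\$\pounds}$ is precisely the \emph{intra-currency} FX conversion factor appearing in (\ref{fwd-fx}), while $Q^{x_{\pounds}y_{\pounds}}$ is the \emph{intra-currency multi-curve} conversion factor of (\ref{xy-conv-factor}). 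Once this identification is in place, everything else is a direct application of previously established results.
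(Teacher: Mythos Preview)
Your proposal is correct and follows essentially the same route as the paper: specify the USD-denominated payoff as the $x_{\$}$-market conversion of the $y_{\pounds}$-FRA cash flow, identify the fair rate as $L^{x_{\$}y_{\pounds}}_t(T_{i-1},T_i)$, and then factor $Q^{x_{\$}y_{\pounds}}_{tT_i}=Q^{\$\pounds}_{tT_i}Q^{x_{\pounds}y_{\pounds}}_{tT_i}$ together with $Q^{\$\pounds}_{tT_i}=F^{\$\pounds}_{tT_i}/X^{\$\pounds}_0$. The only stylistic difference is that the paper writes out the pricing-kernel expectation for $V^{x_{\$}y_{\pounds}}_{tT_i}$ explicitly (with an $X^{\$\pounds}_0$ notional) and evaluates it by the tower property, whereas you invoke Theorem \ref{xy-IBOR-FRA-theorem} and Corollary \ref{fair-xy-rate} directly; your route is the more economical one and is fully justified since those results apply verbatim with $x=x_{\$}$, $y=y_{\pounds}$.
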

\begin{proof} The starting point is the $y$-tenored GBP-based LIBOR, given in Lemma \ref{xy-IBOR-mart-lemma}, which we convert, at time $T_{i-1}$, into the $x_{\$}$-market by the FX conversion factor (\ref{fwd-fx}). We obtain 
\begin{equation}\label{FX-LIBOR}
L^{x_{\$}\,y_{\pounds}}_{T_{i-1}}(T_{i-1},T_i)=Q^{\$\pounds}_{T_{i-1}T_i}L^{x_{\pounds}\,y_{\pounds}}_{T_{i-1}}(T_{i-1},T_i)=Q^{x_{\$}\,y_{\pounds}}_{T_{i-1}T_i}L^{y_{\pounds}}_{T_{i-1}}(T_{i-1},T_i),
\end{equation}
which is the USD-denominated GBP-based LIBOR. We note that the notation $x_{\pounds}$ and $y_{\pounds}$ stand for GBP-OIS and GBP-y-tenor, respectively. As mentioned earlier, we ignore the GBP-USD FX basis curve, for simplicity. However, this curve may be easily incorporated into the valuation through an intermediate curve-conversion factor process for GBP-OIS to GBP-USD FX basis. 

Next, we write the price $V^{x_{\$}\,y_{\pounds}}_{tT_i}$ at time $t\ge 0$ of the USD-denominated forward contract (written on the GBP-based LIBOR) that expires at $T_i\ge  t$. That is,
\begin{equation}
V^{x_{\$}\,y_{\pounds}}_{tT_i}=X^{\$\pounds}_0\frac{\delta_i}{h^{x_{\$}}_t}\E\left[h^{x_{\$}}_{T_i}\left(L^{x_{\$}\,y_{\pounds}}_{T_{i-1}}(T_{i-1},T_i)-K^{x_\$}\right)\,\vert\,\F_t\right],
\end{equation}
where $K^{x_\$}$ is the strike rate of the contract. By Eq. (\ref{FX-LIBOR}) and the tower property of conditional expectation, it then follows that
\begin{equation}
V^{x_{\$}\,y_{\pounds}}_{tT_i}=\frac{h^{y_{\pounds}}_t}{h^{x_{\$}}_t}P^{x_{\$}}_{tT_i}\,L^{y_{\pounds}}_t(T_{i-1},T_i)-K^{x_\$}\,P^{y_{\pounds}}_{tT_i}.
\end{equation}
Setting $V^{x_{\$}\,y_{\pounds}}_{tT_i}=0$ for all $t\in[0,T_i]$ gives the result (\ref{K-FX-LIBOR}), where Eqs (\ref{xy-conv-factor}), (\ref{mart-IBOR}), (\ref{xy-IBOR-mart}) and (\ref{fair-fwd-FX}) are used.
\end{proof}
\subsection{Inflation-linked foreign exchange hybrids}
Given that the relations for prices of inflation-linked and FX securities are available in the xy-approach, we can move on to the valuation of another hybrid financial instrument. We consider the price process of a contract that gives exposure to inflation in the domestic economy and is priced in a foreign currency. To answer this question, we take the example of a forward bet on inflation/deflation in the $j$-economy valued in units of the $i$-currency. This could be taking a bet at $t\in[0,T)$ on the growth in the value of the U. K. price index in EUR, $X^{ij}_T\,C^j_T/C^j_0$, at the fixed future date $T$.
\begin{Proposition} Let $(C^j_t)_{t\ge 0}$ be the $j$-economy price index process (\ref{PLP}) and $(X^{ij}_t)_{t\ge 0}$ the spot FX rate (\ref{spot-fx}). Consider the random payoff $V^{N_i}_T=X^{ij}_TC^j_T/C^j_0-K^{N_i}$, where $K^{N_i}$ is the nominal $i$-currency strike value, and $T$ is the fixed expiry date. The price process $(V^{N_i}_{tT})_{0\le t\le T}$ of the inflation-linked FX forward with cash flow $V^{N_i}_T$ is given by   
\begin{eqnarray}
V^{N_i}_{tT}=X^{ij}_0P^{N_i}_{tT}Q^{N_i R_j}_{tT}-K^{N_i}P^{N_i}_{tT},
\end{eqnarray}
where $(P^{N_i}_{tT})$ is the price process of the nominal discount bond in the $i$-economy, and where the conversion factor process $(Q^{N_i R_j}_{tT})_{0\le t\le T}$ is defined by
\begin{equation}
Q^{N_i R_j}_{tT}=\frac{\E\left[h^{R_j}_T\,\vert\,\F_t\right]}{\E\left[h^{N_i}_T\,\vert\,\F_t\right]}.
\end{equation}
The fair forward inflation-linked FX rate process $(F^{N_i R_j}_{tT})_{0\le t\le T}$ is given by
\begin{equation}\label{fair-fwd-ILFX}
F^{N_i R_j}_{tT}=F^{ij}_{tT}\,\frac{P^{N_j R_j}_{tT}}{P^{N_j}_{tT}}.
\end{equation}
\end{Proposition}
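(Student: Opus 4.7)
The plan is to compute the price $V^{N_i}_{tT}$ directly from the standard pricing-kernel formula in the $i$-nominal economy, and then extract the fair forward rate by setting the time-$t$ value to zero. The key simplification comes from substituting the explicit representations $X^{ij}_T = X^{ij}_0\,h^{N_j}_T/h^{N_i}_T$ from Eq.~(\ref{spot-fx}) and $C^j_T/C^j_0 = h^{R_j}_T/h^{N_j}_T$ from Eq.~(\ref{PLP}), which collapse the composite cash flow into
\[
\frac{X^{ij}_T C^j_T}{C^j_0} \;=\; X^{ij}_0 \,\frac{h^{R_j}_T}{h^{N_i}_T},
\]
so that the $N_j$-kernel cancels and only the ratio $h^{R_j}_T/h^{N_i}_T$ remains. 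This is the critical algebraic move; everything afterwards is bookkeeping in the $xy$-formalism.

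First I would write
\[
V^{N_i}_{tT} \;=\; \frac{1}{h^{N_i}_t}\,\E\!\left[h^{N_i}_T\!\left(\frac{X^{ij}_T C^j_T}{C^j_0}-K^{N_i}\right)\!\bigg|\F_t\right],
\]
then insert the simplification above so that $h^{N_i}_T$ cancels inside the first term of the expectation, yielding $(X^{ij}_0/h^{N_i}_t)\,\E[h^{R_j}_T\,|\,\F_t]$. Using the definitions $P^{N_i}_{tT}=\E[h^{N_i}_T|\F_t]/h^{N_i}_t$ and $Q^{N_i R_j}_{tT}=\E[h^{R_j}_T|\F_t]/\E[h^{N_i}_T|\F_t]$, the first term factorises as $X^{ij}_0 P^{N_i}_{tT} Q^{N_i R_j}_{tT}$, while the strike term is trivially $-K^{N_i} P^{N_i}_{tT}$. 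This establishes the stated pricing formula.

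Next, setting $V^{N_i}_{tT}=0$ identifies the fair forward rate as $F^{N_i R_j}_{tT}=X^{ij}_0\,Q^{N_i R_j}_{tT}$, in direct analogy with Eq.~(\ref{fwd-FX-Q}). To recover the factorised representation (\ref{fair-fwd-ILFX}), I would use the chain-type identity
\[
Q^{N_i R_j}_{tT}\;=\;\frac{\E[h^{R_j}_T|\F_t]}{\E[h^{N_i}_T|\F_t]}\;=\;\frac{\E[h^{N_j}_T|\F_t]}{\E[h^{N_i}_T|\F_t]}\cdot\frac{\E[h^{R_j}_T|\F_t]}{\E[h^{N_j}_T|\F_t]}\;=\;Q^{ij}_{tT}\,Q^{N_j R_j}_{tT},
\]
obtained by multiplying and dividing by $\E[h^{N_j}_T|\F_t]$. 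Recognising the second factor as $P^{N_j R_j}_{tT}/P^{N_j}_{tT}$ via Eq.~(\ref{PNR-Q}), and invoking $X^{ij}_0 Q^{ij}_{tT}=F^{ij}_{tT}$ from Eq.~(\ref{fwd-FX-Q}), gives precisely (\ref{fair-fwd-ILFX}).

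I do not anticipate a substantial obstacle: no measure change or martingale argument is needed beyond the tower property implicit in the ZCB/conversion-factor definitions. The only thing to be mindful of is tracking the correct pricing kernels on both the numerator and denominator when manipulating the curve-conversion factors, since four distinct kernels ($h^{N_i}, h^{N_j}, h^{R_i}, h^{R_j}$) are in principle available but only three ($h^{N_i}, h^{N_j}, h^{R_j}$) actually enter the computation. Careful notational discipline on the super/subscript pairs $(N_i,R_j)$, $(i,j)$, and $(N_j,R_j)$ will ensure the chain factorisation is unambiguous.
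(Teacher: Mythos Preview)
Your proposal is correct and follows essentially the same route as the paper: the paper also writes $V^{N_i}_{tT}=(1/h^{N_i}_t)\,\E[h^{N_i}_T(X^{ij}_T C^j_T/C^j_0-K^{N_i})\,|\,\F_t]$, substitutes (\ref{PLP}) and (\ref{spot-fx}) to collapse the first term to $(X^{ij}_0/h^{N_i}_t)\,\E[h^{R_j}_T\,|\,\F_t]$, and then factorises via $P^{N_i}_{tT}$ and $Q^{N_i R_j}_{tT}$ exactly as you do. The paper additionally frames the computation as an instance of Proposition~\ref{XY-gen-Prop} with $H^{R_j}_T=X^{ij}_0-K^{N_i}/Q^{N_i R_j}_{TT}$, but this is presentational rather than substantive; your chain identity $Q^{N_i R_j}_{tT}=Q^{ij}_{tT}\,Q^{N_j R_j}_{tT}$ for the final factorisation is precisely what the paper uses (and highlights as the ``consistent curve-conversion formula'' immediately after the proof).
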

\begin{proof}
%
%
We begin with Proposition \ref{XY-gen-Prop}: Set $x=N_i$ and $y=R_j$ alongside $t=T$, and thereafter replace the pricing time $s$ with $t$. This gives,
\begin{equation}
H^{N_i R_j}_{tT}=\frac{1}{h^{N_i}_t}\E\left[h^{N_i}_T Q^{N_i R_j}_{TT}H^{R_j}_T\,\vert\,\F_t\right].
\end{equation}
For the real-economy random cash flow at time $T>0$ we set $H^{R_j}_T=X^{ij}_0-K^{N_i}/Q^{N_i R_j}_{TT}$, which is a quantity denominated in units of the $j$-real-economy. Now we calculate the price $V^{N_i}_{tT}$ at time $t\in[0,T]$ of the hybrid contract. We write $H^{N_i R_j}_{tT}=V^{N_i}_{tT}$ to emphasise that the value $V^{N_i}_{tT}$ at time $t\in[0,T]$ is given in nominal units of the economy with currency $i$. We have:
\begin{eqnarray}
V^{N_i}_{tT}&=&\frac{1}{h^{N_i}_t}\E\left[h^{N_i}_T\left(X^{ij}_T\,C^j_T/C^j_0-K^{N_i}\right)\,\big\vert\,\F_t\right]\\
\nn\\
&=&\frac{X^{ij}_0}{h^{N_i}_t}\E\left[h^{R_j}_T\,\vert\,\F_t\right]-P^{N_i}_{tT}K^{N_i},
\end{eqnarray}
where Eqs (\ref{PLP}) and (\ref{spot-fx}) are used.
This can be expressed in terms of the appropriate conversion factor process $(Q^{N_i R_j}_{tT})_{0\le t\le T}$. That is,
\begin{eqnarray}
V^{N_i}_{tT}=X^{ij}_0P^{N_i}_{tT}Q^{N_i R_j}_{tT}-P^{N_i}_{tT}K^{N_i},
\end{eqnarray}
where 
\begin{equation}
Q^{N_i R_j}_{tT}=\frac{\E\left[h^{R_j}_T\,\vert\,\F_t\right]}{\E\left[h^{N_i}_T\,\vert\,\F_t\right]}.
\end{equation}
Setting $V^{N_i}_{tT}=0$, for all $t\in[0,T]$, we obtain the fair inflation-linked FX forward process: 
\begin{equation}\label{fair-fwd-ILFX-old}
F^{N_i R_j}_{tT}=X^{ij}_0Q^{N_i R_j}_{tT}=F^{ij}_{tT}\,\frac{P^{N_j R_j}_{tT}}{P^{N_j}_{tT}},
\end{equation}
which concludes the proof.
\end{proof}
In summary, Equation (\ref{fair-fwd-ILFX}) states that the fair rate of an inflation-linked FX forward is given by $F^{ij}_{tT}/P^{N_j}_{tT}$ units of the bond $P^{N_j R_j}_{tT}$, which is linked to inflation in the (domestic) $j$-economy. In developed markets, the assets with price $F^{ij}_{tT}$, $P^{N_j}_{tT}$  and $P^{N_j R_j}_{tT}$, respectively, are (mostly) liquidly traded. The relation (\ref{fair-fwd-ILFX}) determines the consistent hedge for the $i$-currency inflation-linked FX forward in terms of the $j$-economy FX forward, the inflation-linked bond and the zero-coupon bond in the $j$-market. We thus have (i) the {\it consistent curve-conversion formula} $Q^{N_i R_j}_{tT}=Q^{ij}_{tT}Q^{N_j R_j}_{tT}$, linking inflation-indexed and FX securities, and (ii) the equivalent {\it consistent relation} (\ref{fair-fwd-ILFX}) between the inflation-indexed and FX forward rates.

\section{Conclusions}\label{Concl}

In this paper, a framework is developed that allows for the consistent pricing and hedging of financial assets, which depend on a spread between the rates their values accrue and are discounted at. Such a situation is manifest in fixed-income markets, in particular, where the return of instruments may accrue at one benchmark rate, e.g. LIBOR, and is discounted at another benchmark rate, e.g. the OIS rate. The paradigm for modelling the prices of tenor-based fixed-income products is the so-called multi-curve term structure framework. 
\

Although the approach we develop in this paper is applicable whenever spreads among different curves (term structures) need to be modelled, we consider fixed-income as the market within which we develop what we term {\it consistent valuation across curves}. We choose the modelling paradigm of pricing kernels to construct the consistent price systems that give rise to, and also rely on, the curve-conversion process that allows for no-arbitrage price conversions from one curve to another, as e.g. required in multi-curve interest rate modelling. This can be viewed as a kind of currency foreign-exchange analogy, and we draw several parallels with this view while we develop the {\it xy-approach}. 
\

After the introduction of the curve-dependent discounting systems, we produce the curve-conversion factor process that links cash-flows associated with different curves and hence gives rise to consistent prices of assets, which accrue value according to the {\it forecasting curve} and are discounted according to the {\it discounting curve}. The dual nature of the curve-conversion factor also allows for conversion of curves. The deduced across-curve pricing formula gives rise to the consistent set of numeraire assets and associated (risk-neutral) probability measures so as to avoid the introduction of arbitrage opportunities in a multi-curve market---or a `spread market'---see Section \ref{CCF}. The curve-conversion mechanism enables the introduction of tenor-based zero-coupon bonds without undermining the no-arbitrage requirement.
\

An intriguing by-product of the across-curve pricing kernel approach we develop is that it proposes consistent pricing relations for multi-curve systems in emerging markets where a derivatives market on one of the benchmarks, say the OIS system, is absent and needs to be estimated. For example, the liquidly traded tenor may be used to calibrate the pricing kernel model underlying the zero-coupon bond price system associated with the liquid tenor. Although a multi-curve interest rate system is available in emerging markets, it has an idiosyncratic and proprietary nature. Given an estimation methodology, one can apply the xy-approach as a standard to consistently price instruments in a multi-curve emerging market. We show that the across-curve valuation method is applicable in developed (liquid) markets as much as in emerging (less liquid) markets and that fixed-income products, such as forward rate agreements, may be understood and priced with the same ease in both types of markets. 
\

Recently, interest in so-called {\it rational models} has grown and the advantages of using this class of models to produce tractable interest-rate models, and extensions to the multi-curve setting, have been recognised. We develop generic pricing kernel models for across-curve valuation and show how rational multi-curve models, such as those of Cr\'epey et al. \cite{cmns} and Nguyen \& Seifried \cite{ns} are recovered within our xy-approach, and furthermore the linear-rational term structure models by Filipovi\'c et al. \cite{flt} may be generalised to a multi-curve environment. Moreover, important contributions have been made by several authors to produce multi-curve extensions of the Heath-Jarrow-Morton framework. We try to contribute to this research area by investigating the HJM-framework from the perspective of our across-curve valuation scheme therewith suggesting multi-curve HJM-models.
\

Finally we show how inflation-linked, currency-based, and fixed-income hybrid securities can be priced by applying our consistent across-curve valuation method using pricing kernels.
\\

\noindent {\bf Acknowledgments}.\,
The authors are grateful to participants of the Avior Quantitative Finance Research Seminar (Cape Town, March 2016), ACQuFRR \& RMB Masterclass on ``Catching up with Emerging Markets'' (Johannesburg, July 2016), CFE 2016 (Sevilla, December 2016), STM2016 Workshop, Institute of Statistical Mathematics, Japan (Tokyo, July 2016), JAFEE 2016 (Tokyo, August 2016), and of the Seminar at the Graduate School of International Corporate Strategy, Hitotsubashi University (Tokyo, March 2017) for comments and suggestions. Moreover, questions, comments and suggestions by Henrik Dam, Martino Grasselli, Matheus Grasselli and Erik Schl\"ogl have been very much appreciated and have led us to understand and develop our work on an across-curve valuation approach better and further. The authors thank two anonymous reviewers with comments which helped improve this paper.
\begin{appendix}
\section{No-arbitrage strategy for conversion of cash flows and curves}\label{App-ArbStrat-CC}

Let us first consider a simple arbitrage relationship for an economy with default-free and credit-risky interest rate curves, while assuming perfect market liquidity.  Assume that the $x$-curve is the default-free curve while the $y$-curve is one of a potential set of credit-risky curves. Consider the following simple strategy, at time 0:
\begin{enumerate}
	\item [(i)] Sell one unit of the numeraire asset in the $x$-market for $1/h^x_0$; and
	\item [(ii)] Buy one unit of the numeraire asset in the $y$-market which costs $1/h^y_0$,
\end{enumerate}
which costs zero to setup, i.e. $V_t=0$, since $h^x_0 = h^y_0 = 1$. Transaction (i) is equivalent to borrowing money via the $x$-market's money market, while (ii) is equivalent to a deposit into the $y$-market's money market. Then at any time $t > 0$, the value of this strategy will be 
\begin{equation*}
V_t = \frac{1}{h_t^y} - \frac{1}{h_t^x} , 
\end{equation*}
which will be greater than zero if the risky entity that holds the investment has not defaulted by that time. Therefore, this strategy does not allow for arbitrage, in general. Now, let us assume that one is able to mitigate all of the default-risk associated with the entity offering the $y$-market investment via appropriate collateralisation. In such a circumstance, the value of the strategy at any time $t > 0$ must equal zero, if we are to preclude arbitrage, otherwise one would be ensured of earning a cash flow equal to $V_t$ which would be greater than zero with certainty at any time $t > 0$. No arbitrage may be achieved by adjusting the $y$-market deposit by the ratio $h^y_t/h^x_t$. At any time $t > 0$, this ratio is merely the realised multiplicative spread between the discount factors realised in the $x$- and $y$-markets respectively.
\begin{Remark}
In currency modelling, the ratio $h^y_t/h^x_t$ models the spot exchange rate between the $x$- and $y$-currencies. In particular, 1 unit of $y$-currency may be exchanged for $h^y_t/h^x_t$ units of $x$-currency at time $t$.
\end{Remark}

Another relevant arbitrage relationship to consider involves a finite horizon loan and investment strategy. Maintaining the same assumptions as before, consider the same strategy as before at time 0, and then do the following at some time $t  \in (0, T)$:
\begin{enumerate}
	\item [(i)] Sell $1/h_t^x$ units of the $x$-market $T$-maturity bond for $P^x_{tT}$; and
	\item [(ii)] Buy $1/h_t^y$ units of the $y$-market $T$-maturity bond for $P^y_{tT}$,
\end{enumerate}
which again costs zero to setup at time 0, as before, and terminates at time $t$ when the money market loan and deposit is transferred to fixed horizon alternatives. Now, at any time $s \in [0,t)$, the same arguments apply as before while at time $t$ the value of this strategy will be
\begin{equation*}
V_t = \frac{1}{h_t^y P^y_{tT}} - \frac{1}{h_t^x P_{tT}^x} , 
\end{equation*}
which again does not permit an arbitrage opportunity, due to the credit risk associated with the investment leg of the strategy. If we again invoke collateralisation of the investment leg of the strategy, then arbitrage is precluded at: (a) all times $s \in [0,t)$  by adjusting the $y$-market deposit by the ratio $h^y_t/h^x_t$; and (b) at time $t$ by adjusting the $y$-market fixed term deposit by the ratio $h^y_t P_{tT}^y/h^x_t P_{tT}^x$. If these adjustments are not enforced post collateralisation, then one would be ensured of a risk-free profit equal to $V_t$ for all times $t \in (0,T]$.
\begin{Remark}
In currency modelling, the ratio $h^y_t P_{tT}^y/h^x_t P_{tT}^x$ models the forward exchange rate between the $x$- and $y$-currencies. In particular, one can agree at time $t$ to exchange 1 unit of $y$-currency for $h^y_t P_{tT}^y/h^x_t P_{tT}^x$ units of $x$-currency at time $T \geq t$.
\end{Remark}

\section{Consistent changes of numeraire and measure}\label{App-B}

Here we discuss changes-of-measure, numeraire assets, martingales and therefore no-arbitrage within the xy-formalism. The curve-conversion factor process (\ref{xy-conv-factor}) induces the changes-of-measure between all introduced $y$-markets (or $y$-curves). In particular, it governs no-arbitrage across all distinct markets associated with the economy under consideration. To demonstrate this, we consider Proposition \ref{XY-gen-Prop}, along with an asset with a spot-defined future cash flow $H^y_{T}$ and deduce that the value of such an asset in the $x$-market is
\begin{equation}
H^{xy}_{tT}=\frac{1}{h^x_t}\E\left[h^y_T\, H^y_{T}\,\vert\,\F_t\right],
\end{equation}
for $t\in[0,T]$. For each of the markets $z = x,y$, we introduce a change-of-measure density martingale $(m^z_t)_{0\le t\le T}$ which changes measure from the real-world measure $\PR$ to the equivalent (risk-neutral) measure $\Q_z$, along with the $z$-discount factor $(D^z_t)$ such that $1/D^z_t$ is the natural numeraire under $\Q_z$. This also means that the $z$-market's pricing kernel may be written as $h^z_t = D^z_t m^z_t$. The price process $(H^{xy}_{tT})_{0\le t\le T}$ can now be expressed, equivalently, in terms of (a) the $\Q_x$ risk-neutral measure and (b) the $\Q_y$ risk-neutral measure:
 \begin{equation}\label{Hxy-spot-measure-changes}
 H^{xy}_{tT} = \frac{1}{D^x_t}\E^{\Q_x}\left[D^x_T Q^{xy}_{TT} H^y_{T}\,\vert\,\F_t\right] = \frac{m^y_t}{m^x_t}\frac{1}{D^x_t}\E^{\Q_y}\left[D^y_T H^y_{T}\,\vert\,\F_t\right] = Q^{xy}_{tt} H^y_t,
\end{equation}
where we emphasise that $(D^x_t H^{xy}_{tT})_{0 \leq t \leq T}$ and $(D^y_t H^y_{t})_{0 \leq t \leq T}$ are $\Q_x$- and $\Q_y$-martingales respectively, by construction. Moreover, we may change measure from $\Q_z$ to the $T$-forward measure $\Q^T_z$ via the Radon-Nikodym derivative
\begin{equation}
\frac{\rd \Q^T_z}{\rd \Q_z} = \frac{D^z_T P^z_{TT}}{D^z_t P^z_{tT}} \,,
\end{equation}
which acts on $\F_T$ given information $\F_t$ up to time $t$, and therefore we may now express the price process $(H^{xy}_{tT})_{0\le t\le T}$ equivalently, in terms of (a) the $x$-market $T$-forward measure and (b) the $y$-market $T$-forward measure:
\begin{equation}\label{Hxy-fwd-measure-changes}
H^{xy}_{tT} = P^x_{tT} \E^{\Q^T_x}\left[ Q^{xy}_{TT} H^y_{T}\,\vert\,\F_t\right]=\frac{m^y_t D^y_t P^y_{tT}}{m^x_t D^x_t}\E^{\Q^T_y}\left[ H^y_{T}\,\vert\,\F_t\right] = P^x_{tT} Q^{xy}_{tT} \E^{\Q^T_y}\left[ H^y_{T}\,\vert\,\F_t\right] = P^x_{tT} Q^{xy}_{tT} \frac{H^y_t}{P^y_{tT}},
\end{equation}
where $(H^{xy}_{tT}/P^x_{tT})_{0 \leq t \leq T}$ and $(H^y_{t}/P^y_{tT})_{0 \leq t \leq T}$ are $\Q^T_x$- and $\Q^T_y$-martingales respectively. Equations (\ref{Hxy-spot-measure-changes}) and (\ref{Hxy-fwd-measure-changes}) clearly demonstrate the role of the curve-conversion factor process in changing measure within and across markets. Furthermore, the price process' martingale property is preserved across markets (and curves), with the curve-conversion factor process again enabling this property. The xy-approach precludes arbitrage within and across different markets (and curves).

\section{Bootstrapping of initial term structures}\label{App-C}

\subsection{Emerging markets}\label{App-Boot-EM}
In an emerging market, one would have the following initial data: (a) the $y$-tenored spot IBOR $L^y_0(0,T_1)$; (b) a set of fair FRA rates $\{K_0^{yy}(T_1,T_2),K_0^{yy}(T_2,T_3), \ldots, K_0^{yy}(T_{n-1},T_n)\}$; and (c) a set of fair IRS rates $\{S_0^{yy}(0,T_{n+1}), S_0^{yy}(0,T_{n+2}), \ldots, S_0^{yy}(0,T_{n+m})\}$. Using this data, one may construct the initial $y$-ZCB system by the relations
\begin{eqnarray}\label{yy-EM-bootstrap}
P^{y}_{0T_1} &=& \frac{1}{1+L^y_0(0,T_1) \delta_1} , \nn \\
P^{y}_{0T_i}  &=& \frac{P^{y}_{0T_{i-1}}}{1 + K_0^{yy}(T_{i-1},T_i) \delta_i} , \nn \\
P^{y}_{0T_{n+j}} &=&  \frac{1-S_0^{yy}(0,T_{n+j})\sum_{k=1}^{n+j-1} \delta_k  P^{y}_{0T_k} }{1 + \delta_{n+j}S^{yy}_0(0,T_{n+j}) }  ,
\end{eqnarray}
for $i \in \{2,3,\ldots,n\}$ and $j \in \{1,2,\ldots,m\}$. In general, one will have to make use of a suitable numerical bootstrapping technique to extend the $y$-ZCB system from the longest FRA maturity to the set of IRS maturities. These results are all consistent with a classical single-curve interest rate framework.
\subsection{Developed markets}\label{App-Boot-DM}
In a developed market, one would have the following initial data: (a) the $y$-tenored spot IBOR $L^{xy}_0(0,T_1)$; (b) a set of fair FRA rates $\{K_0^{xy}(T_1,T_2),K_0^{xy}(T_2,T_3), \ldots, K_0^{xy}(T_{n-1},T_n)\}$; and (c) a set of fair IRS rates $\{S_0^{xy}(0,T_{n+1}), S_0^{xy}(0,T_{n+2}), \ldots, S_0^{xy}(0,T_{n+m})\}$. Using this data, one may construct the initial $y$-ZCB system by the relations
\begin{eqnarray}
P^{y}_{0T_1} &=& 1 - \delta_i P^{x}_{0T_1} L^{xy}_0(0,T_1), \nn \\
P^{y}_{0T_i}  &=& P^{y}_{0T_{i-1}} - \delta_i P^{x}_{0T_i} K^{xy}_0(T_{i-1},T_i), \nn \\
P^{y}_{0T_{n+j}}  &=&  1-S^{xy}_0(0,T_{n+j}) \sum_{k=1}^{n+j} \delta_k P^{x}_{0T_k},
\end{eqnarray}
for $i \in \{2,3,\ldots,n\}$ and $j \in \{1,2,\ldots,m\}$. In general, one will have to make use of a suitable numerical bootstrapping technique to extend the $y$-ZCB system from the longest FRA maturity to the set of IRS maturities. Interestingly, but not surprisingly, since $P^{xy}_{st} = (h^{y}_s/h^{x}_s) P^{y}_{st}$ for $0 \leq s \leq t$, it follows that
\begin{equation}\label{Py-Pxy-relation}
L^{y}_t(T_{i-1},T_i) = \frac{1}{\delta_i}\left( \frac{P^{y}_{tT_{i-1}}}{P^{y}_{tT_i}} - 1\right) = \frac{1}{\delta_i}\left( \frac{P^{xy}_{tT_{i-1}}}{P^{xy}_{tT_i}} - 1\right) ,
\end{equation}
and therefore $P^{xy}_{0t} = P^{y}_{0t}$ for $t \geq 0$.
\begin{Remark}
Market practitioners may choose to construct a market-implied $\overline{y}$-ZCB system, which we will denote by $\{P^{\overline{y}}_{0t}\}_{t \geq 0}$, as follows
\begin{eqnarray}
P^{\overline{y}}_{0T_1} &=& \frac{1}{1+L^{xy}_0(0,T_1) \delta_1} , \nn \\
P^{\overline{y}}_{0T_i}  &=& \frac{P^{\overline{y}}_{0T_{i-1}}}{1 + L_0^{xy}(T_{i-1},T_i) \delta_i} ,\nn \\
P^{\overline{y}}_{0T_{n+j}} &=& \frac{1-\sum_{k=1}^{n+j-1} \delta_k S_0^{xy}(0,T_{n+j}) P^{\overline{y}}_{0T_k} }{1 + \delta_{n+j}S^{xy}_0(0,T_{n+j}) }  ,
\end{eqnarray}
for $i \in \{2,3,\ldots,n\}$ and $j \in \{1,2,\ldots,m\}$, using the same initial data, available in a \textit{developed market}, as before. This is indeed what is currently done in practice, with no attention paid to the fact that assumption (\ref{DM-martingale-assumption}) confounds the true forward IBOR process with a martingale adjustment. Put differently, the resultant $\overline{y}$-ZCB system is dependent on the $x$-ZCB system via the conversion factor $Q^{xy}_{\cdot \cdot}$.
\end{Remark}

\end{appendix}


\end{document}